\theoremstyle{plain}
\newtheorem{theorem}{Theorem}
\newtheorem{lemma}{Lemma}
\newtheorem{corollary}{Corollary}
\newtheorem{observation}{Observation}
\long\def\@makecaption#1#2{
    \vskip 10pt
    \setbox\@tempboxa\hbox{{\footnotesize {\bf #1.} #2}}
    \ifdim \wd\@tempboxa >\hsize         
        {\footnotesize {\bf #1.} #2\par}
      \else                              
        \hbox to\hsize{\hfil\box\@tempboxa\hfil}
    \fi}
\title{Faster Algorithms for the Geometric Transportation Problem%
\date{}
\author{
	Pankaj Agarwal\\ 
	Duke University\\
	\small{\texttt{pankaj@cs.duke.edu}}
	\and
	Kyle Fox\\
	The University of Texas at Dallas\\
	\small{\texttt{kyle.fox@utdallas.edu}}
	\and
	Debmalya Panigrahi\\
	Duke University\\
	\small{\texttt{debmalya@cs.duke.edu}}
	\and
	Kasturi Varadarajan\\
	University of Iowa\\
	\small{\texttt{kasturi-varadarajan@uiowa.edu}}
	\and
	Allen Xiao\\
	Duke University\\
	\small{\texttt{axiao@cs.duke.edu}}
}
\footnote{Work by Agarwal, Fox, and Xiao is supported by NSF under grants
  	CCF-15-13816, CCF-15-46392, and IIS-14-08846, by ARO grant W911NF-15-1-0408, 
	and by grant 2012/229 from the U.S.-Israel Binational Science Foundation. 
	Work by Fox and Panigrahi is supported in part by NSF grants CCF-1527084 and CCF-1535972. 
	Work by Varadarajan is supported by NSF awards CCF-1318996 and CCF-1615845}
}
\begin{document}

\maketitle

\begin{abstract}
Let $\reds,\blues \subset \reals^d$, for constant $d$, 
be two point sets with $|\reds|+ |\blues| = n$, 
and let $\tsupply : \reds \cup \blues \to \nats$ such that 
$\sum_{r \in \reds } \tsupply(r) = \sum_{b \in \blues} \tsupply(b)$
be demand functions over $\reds$ and $\blues$. 
Let $\metric(\cdot,\cdot)$ be a suitable 
distance function such as the $L_p$ distance. 
The transportation problem asks to find a map 
$\transp : \reds \times \blues \to \nats$ such that 
$\sum_{b \in \blues}\transp(r,b) = \tsupply(r)$, 
$\sum_{r \in \reds}\transp(r,b) = \tsupply(b)$, and
$\sum_{r \in \reds, b \in \blues} \transp(r,b) \metric(r,b)$ is minimized. 
We present three new results for the transportation problem when 
$\metric(\cdot,\cdot)$ is any $L_p$ metric:

\begin{itemize}
\item For any constant $\eps > 0$, an $O(n^{1+\eps})$ expected time 
	randomized algorithm that returns a transportation map with expected cost 
	$O(\log^2(1/\eps))$ times the optimal cost.
\item For any $\eps > 0$, a $(1+\eps)$-approximation in 
	$O(n^{3/2}\eps^{-d} \polylog(U) \polylog(n))$ time, 
	where $U = \max_{p\in R\cup B} \tsupply(p)$.
\item An exact strongly polynomial $O(n^2 \polylog n)$ time algorithm, for $d = 2$.
\end{itemize}

\end{abstract}

\section{Introduction}
\label{section:intro}
Let $\reds$ and $\blues$ be two point sets in $\reals^d$ with 
$|\reds| + |\blues| = n$, where $d$ is a constant,
and let $\tsupply: \reds \cup \blues \to \nats$ be a function satisfying
$\sum_{r \in \reds} \tsupply(r) = \sum_{b \in \blues} \tsupply(b)$.
We denote $U := \max_{p \in \reds \cup \blues} \tsupply(p)$.
We call a function $\transp: \reds \times \blues \to \nats$,
a \emph{transportation map} between $\reds$ and $\blues$ if 
$\sum_{b \in \blues}\transp(r, b) = \tsupply(r)$ for all $r \in \reds$
and $\sum_{r \in \reds}\transp(r, b) = \tsupply(b)$ for all $b \in \blues$.
Informally, for a point $r\in \reds$, the value of $\tsupply(r)$
represents the {\em supply} at $r$, while for a point $b\in \blues$,
the value of $\tsupply(b)$ represents the \emph{demand} at $b$. A transportation
map represents a plan for moving the supplies at points in 
$\reds$ to meet the demands at points in $\blues$.

The cost of a transportation map $\transp$ is defined as
$\tfcost(\transp) = \sum_{(r, b) \in \reds \times \blues} \transp(r, b) \metric(r, b)$,
where $\metric(\cdot, \cdot)$ is a suitable distance function 
such as the $L_p$ distance.
The \emph{Hitchcock-Koopmans transportation problem} 
(or simply \emph{transportation problem}) 
on $\inst = (\reds, \blues, \tsupply)$ is to find the 
minimum-cost transportation map for $\inst$, 
denoted $\transp^* := \transp^*(\inst)$.
The optimal cost $\tfcost(\transp^*)$ is often referred to as the 
\emph{transportation distance} or {\em earth mover's distance}.

The transportation problem is a discrete version of 
the so-called \emph{optimal transport}, or \emph{Monge-Kantarovich}, problem,
originally proposed by the French mathematician Gaspard Monge in 1781.
This latter problem has been extensively studied in mathematics since the 
early 20th century. See the book by Villani~\cite{villani2008optimal}.
In addition to this connection, 
the (discrete) transportation problem has a wide range of applications, including 
similarity computation between a pair of images, shapes, and distributions, 
computing the barycenter of a family of distributions, 
finding common structures in a set of shapes, fluid mechanics, 
and partial differential equations. Motivated by these applications,
this problem has been studied extensively in many fields 
including computer vision, computer graphics, machine learning, 
optimization, and mathematics.
See e.g. \cite{
rubner1998metric,
grauman2004fast,
DBLP:conf/icml/CuturiD14,
DBLP:journals/tog/SolomonRGB14,
gramfort2015fast}
and references therein for a few examples.

The transportation problem can be formulated as an
instance of the uncapacitated minimum cost flow problem 
in a complete bipartite graph $\reds \times \blues$ with uncapacitated edges.
The minimum cost flow problem has been widely studied; 
see \cite{LS-flow} for a detailed review of known results.
The uncapacitated minimum cost flow problem in a graph with $n$ vertices and $m$ 
edges can be solved in $O((m + n\log n) n\log n)$ time using 
Orlin's algorithm \cite{orlin1993faster}
or $\tilde{O}(m \sqrt{n} \polylog U)$ 
time\footnote{We use $\tilde{O}(f(n))$ to denote $O(f(n) \polylog(n))$.}
using the algorithm by Lee and Sidford~\cite{LS-flow}.

For transportation in geometric settings,
Atkinson and Vaidya~\cite{atkinson1995using} adapted the Edmonds-Karp 
algorithm to exploit geometric properties, and 
obtained an $\tilde{O}(n^{2.5} \log U)$ time algorithm for any $L_p$-metric,
and $\tilde{O}(n^2)$ for $L_1$, $L_\infty$-metrics.
The Atkinson-Vaidya algorithm was improved using faster
data structures for dynamic nearest-neighbor searching,
first in \cite{agarwal2000vertical} and most recently in
\cite{kaplan2017dynamic},
for a running time of $\tilde{O}(n^2 \log U)$.
Sharathkumar and Agarwal~\cite{sharathkumar2012algorithms}
designed a $(1 + \eps)$-approximation algorithm with a
$\tilde{O}((n \sqrt{nU} + U \log U) \log (n/\eps))$ running time.

More efficient algorithms are known for estimating the 
the optimal cost (earth mover's distance) without computing
the map itself, provided that $U = n^{O(1)}$.
Indyk~\cite{indyk2007near} gave an algorithm to find an $O(1)$-approximate estimate 
in $\tilde{O}(n)$ time with probability at least $1/2$.
Cabello~\etal~\cite{cabello2008matching} reduced the problem to minimum cost flow
on a geometric spanner, obtaining a $(1+\eps)$-approximate estimate in 
$\tilde{O}(n^2)$ time.
Andoni~\etal~\cite{andoni2014parallel} gave a streaming algorithm 
that finds a $(1+\eps)$-approximate estimate in $O(n^{1+o_\eps(1)})$ time.
However, in many applications, one is interested in computing the map itself 
and not just the transportation distance 
\cite{gramfort2015fast, DBLP:conf/icml/CuturiD14}. This is the problem
that we address in this paper.

The special case of the transportation problem where every point has unit 
demand/supply is a minimum-cost bipartite matching problem (assignment problem).
After sequence of papers in the geometric setting,
\cite{vaidya1989geometry, varadarajan1999approximation, 
sharathkumar2012algorithms, agarwal2004near},
a near-linear $\tilde{O}(n)$ time $(1+\eps)$-approximation
was found by Agarwal and Sharathkumar~\cite{sharathkumar2012near}
for geometric bipartite matching. 
On the other hand, before our work, no constant-factor 
approximation in subquadratic time was known for the transportation problem
with arbitrary demands and supplies, even for the special case of $U = O(n^2)$. 

\mparagraph{Our results} 
We present three new results for the geometric transportation problem,
for any $L_p$-metric.

Our first result (Section~\ref{section:grids}) is a randomized algorithm that
for any $\eps > 0$, computes in $O(n^{1+\eps})$ expected time a 
transportation map whose expected cost is 
$O(\log^2 (1/\eps)) \tfcost(\transp^*)$.
The expected cost improves to $O(\log (1/\eps)) \tfcost(\transp^*)$ if the 
spread of $\reds \cup \blues$ is $n^{O(1)}$,
where the \emph{spread} $\Phi$ is the ratio of the 
maximum and the minimum distance between a pair of points.
The overall structure of our algorithm is a simpler version of the 
geometric bipartite matching algorithm by Agarwal and Varadarajan~\cite{agarwal2004near},
but several new ideas are needed to handle arbitrary supplies and demands.

This algorithm can be extended to spaces with 
bounded doubling dimension when the spread is polynomially bounded:
if $\reds, \blues$ lie in a subspace of $\reals^d$ with constant doubling dimension and $\Phi = n^{O(1)}$,
then a modified version of the algorithm finds, 
in $O(n^{1 + \eps})$ expected time, 
a transportation map whose expected cost is $O(\log (1/\eps)) \tfcost(\transp^*)$.

Our second result (Section~\ref{section:wspd}) is a
$(1 + \eps)$-approximate, $\tilde{O}(n^{3/2}\eps^{-d} \polylog(U))$ time
algorthim, by reduction to minimum cost flow.
Using a quad-tree based well-separated pair decomposition (WSPD)~\cite{callahan1995decomposition}
of a point set, we construct a graph $\sparseG$ with $O(n)$ vertices
and $O(n/\eps^d)$ edges, and reduce the problem of computing
a $(1 + \eps)$-approximate transportation map to computing
the minimum cost flow in $\sparseG$.
We find a minimum cost flow $\flow^*$ using the Lee-Sidford~\cite{LS-flow} algorithm,
and finally recover a transportation map 
whose cost is close to $\tfcost(\flow^*)$ (in turn, close to $\tfcost(\transp^*)$).
This algorithm extends to spaces with bounded doubling dimension
using the appropriate speedy WSPD construction for such spaces \cite{har2005fast}.
Specifically, if the doubling dimension is $D$ and the spread of 
$\reds \cup \blues$ is $n^{O(1)}$, then a $(1 + \eps)$-approximate
transportation map can be computed in time
$\tilde{O}(n^{3/2}\eps^{-O(D)} \polylog(U))$.

Our third result (Section~\ref{section:orlin}) is an exact,
$\tilde{O}(n^2)$ time algorithm for $d=2$,
thereby matching (up to poly-logarithmic factors) the best exact algorithm
for geometric bipartite matching \cite{kaplan2017dynamic}.
This is an implementation of Orlin's strongly polynomial 
minimum cost flow algorithm~\cite{orlin1993faster},
an augmenting-paths algorithm with edge contractions.
A naive application of Orlin's algorithm has a running time of $\tilde{O}(n^3)$,
but by exploiting the geometry of the underlying graph, 
we improve this to $\tilde{O}(n^2)$ time in the plane.

\section{A Near-Linear Approximation}
\label{section:grids}
Let $\inst = (\reds, \blues, \tsupply)$ be an instance of the 
transportation problem in $\reals^d$.
We say that $\inst$ has \emph{bounded spread} if the spread of 
$\reds \cup \blues$ is bounded above by $n^{a}$, for some constant $a > 0$.
We present a randomized recursive algorithm that, given $\inst$ and a parameter 
$\eps > 0$, returns a transportation map in $O(n^{1+\eps})$ expected time whose
expected cost is $O(\log (1/\eps)) \tfcost(\transp^*)$ if $\inst$ has 
bounded spread, and $O(\log^2 (1/\eps)) \tfcost(\transp^*)$ otherwise
(recall that $\transp^*$ is the optimal map).
We assume that $n$ is sufficiently large so that $n^{\eps}$ is 
at least a suitably large constant.

We first give a high-level description of the algorithm without describing 
how each step is implemented efficiently.
Next, we analyze the cost of the transportation map computed by the algorithm.
We then discuss an efficient implementation of the algorithm.
For simplicity, we describe the algorithm and its analysis for dimension 
$d = 2$; the algorithm extends to $d > 2$ in a straightforward manner.

We need the notion of \emph{randomly shifted grids},
as in \cite{arora1996polynomial, agarwal2004near}.
Formally, let $\qtsquare = [a - \ell, a] \times [b - \ell, b]$ 
be a square of side length $\ell$ with $(a, b)$ as its top right corner.
For a parameter $\Delta > 0$ (grid cell sidelength),
set $l = \lceil \log_2 \left(1 + \frac{\ell}{\Delta}\right) \rceil$,
and $L = 2^{l+1} \Delta$.
Let $\qtsquare_L = [a - L, a] \times [b - L, b]$ be the square of
side length $L$ with $(a, b)$ as its top-right corner.
We choose uniformly at random a point $\xi \in [0, \Delta)^2$ and set
$\qtsquare_{shifted} := \qtsquare_L + \xi$.
Note that $\qtsquare \subseteq \qtsquare_{shifted}$.
Let $\rsg(\qtsquare, \Delta)$ be the partition of $\qtsquare_{shifted}$ 
into the uniform grid of side length $\Delta$;
$\rsg(\qtsquare, \Delta)$ has $2^{l+1} \times 2^{l+1}$ grid cells.
$\rsg(\qtsquare, \Delta)$ is called the randomly shifted grid on $\qtsquare$.

\subsection{A high-level description}
\label{subsection:grid_high_level}

A recursive subproblem $\rinst = \rtinst{}{}$
consists of point sets $\rreds$ and $\rblues$,
and a demand function $\rsupply: \rreds \cup \rblues \to \nats$
such that $\rsupply(\rreds) = \rsupply(\rblues)$.
We denote $|\rreds \cup \rblues| = m$.
If $m \leq n^{\eps/4}$, we call $\rinst$ a \emph{base} 
subproblem and compute an optimal transportation using Orlin's algorithm.
Thus, assume that $m > n^{\eps/4}$. 

\begin{figure}
\centering
\includegraphics[width=0.3\linewidth,page=6]{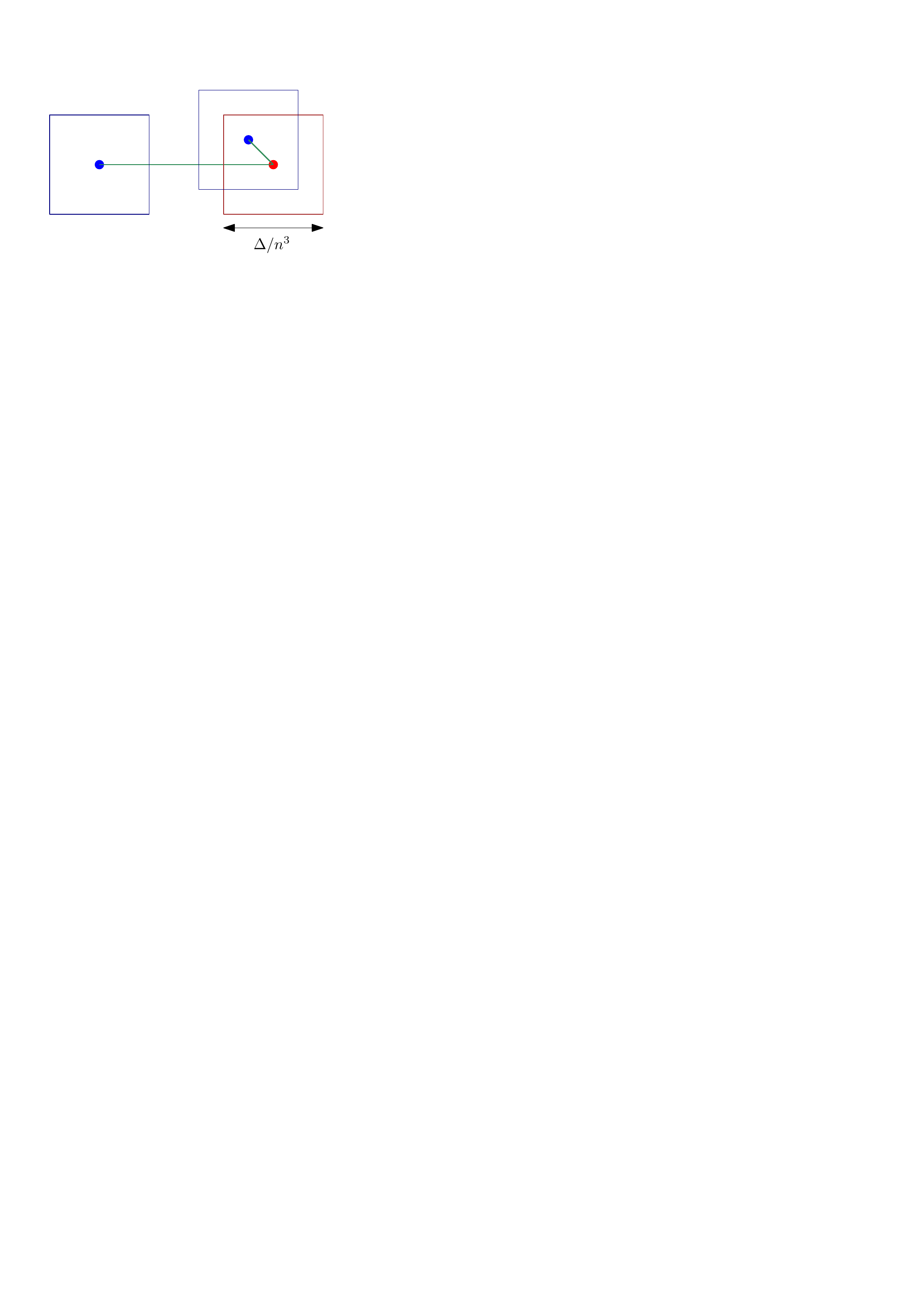}
\caption{Moats, a safe grid (solid), and an unsafe grid (dotted).}
\label{figure:grid_moats}
\end{figure}

Let $\delta = 1/(3d) = 1/6$, a constant. 
We use $\Box$ to denote the smallest axis-aligned square containing $\rreds \cup \rblues$,
and its sidelength $\ell$, and set $\Delta = \ell/m^\delta$. 
\begin{observation}
	A randomly shifted grid $\rsg = \rsg(\Box, \Delta)$ has the following properties.
	\begin{enumerate}
	\item The number of cells is $O([n^\delta]^d) = O(n^{1/3})$.
	\item The diameter of each $\rsg$ cell is $\sqrt{d} \cdot \Delta = O(\Delta)$.
	\item For any two points $r \in \rreds$ and $b \in \rblues$,
		the probability that they lie in different cells of $\rsg$ is $O(\dist{r}{b}/\Delta)$.
	\end{enumerate}
\end{observation}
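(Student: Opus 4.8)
The plan is to verify each of the three properties in turn, all of which follow quickly from the definitions of $\Box$, $\Delta$, and the randomly shifted grid construction. The only property requiring a genuine (if standard) argument is the third.

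For property~1, recall that the grid $\rsg(\Box, \Delta)$ lives on the enlarged square $\Box_{shifted}$, which has side length $L = 2^{l+1}\Delta$ where $l = \lceil \log_2(1 + \ell/\Delta)\rceil$. Since $\Delta = \ell/m^{\delta}$, we have $\ell/\Delta = m^{\delta}$, so $2^{l+1} \le 4(1 + m^{\delta}) = O(m^{\delta})$; hence the grid has $(2^{l+1})^d = O(m^{\delta d})$ cells, and with $\delta = 1/(3d)$ this is $O(m^{1/3}) = O(n^{1/3})$. Property~2 is immediate: each cell is a $d$-dimensional cube of side length $\Delta$, so its diameter (in any $L_p$ metric, up to constants) is $\sqrt{d}\cdot\Delta = O(\Delta)$.

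For property~3, I would use the standard coordinate-wise argument for randomly shifted grids. The shift $\xi$ is chosen uniformly from $[0,\Delta)^2$, and because the grid lines in each coordinate direction are spaced $\Delta$ apart and the shift in that coordinate is uniform on an interval of length $\Delta$, the probability that a particular grid hyperplane perpendicular to the $j$-th axis separates $r$ and $b$ is exactly $\min(|r_j - b_j|/\Delta, 1) \le |r_j - b_j|/\Delta$. (There is one subtlety to address: the enlarged square $\Box_{shifted}$ must still contain both $r$ and $b$ after shifting, but since $\Box \subseteq \Box_{shifted}$ for every choice of $\xi$ and both points lie in $\Box$, this is automatic, so no boundary effects arise.) Taking a union bound over the $d$ coordinate directions, the probability that $r$ and $b$ land in different cells is at most $\sum_{j=1}^{d} |r_j - b_j|/\Delta \le d \cdot \|r - b\|_{\infty}/\Delta = O(\dist{r}{b}/\Delta)$, using that all $L_p$ norms are within constant factors (depending only on $d$) of one another.

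The only mild obstacle is making sure the shifted grid genuinely covers $\rreds \cup \rblues$ so that the separation-probability computation is not corrupted by points falling outside $\Box_{shifted}$; as noted, the construction guarantees $\Box \subseteq \Box_{shifted}$ unconditionally, so this is handled by design. Everything else is a direct substitution of $\Delta = \ell/m^{\delta}$ and $\delta = 1/(3d)$ into the grid-size bound, plus the routine equivalence of $L_p$ norms in constant dimension.
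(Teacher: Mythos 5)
Your proof is correct and is exactly the standard argument that the paper implicitly relies on; the paper states this Observation without proof, treating all three properties as routine consequences of the randomly-shifted-grid construction and the choices $\Delta = \ell/m^{\delta}$, $\delta = 1/(3d)$. Your coordinate-wise union bound for property~3 and the substitution $\ell/\Delta = m^{\delta}$ for property~1 are precisely the intended calculations, and your note that $\Box \subseteq \Box_{shifted}$ is automatic correctly disposes of the only potential boundary concern.
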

The first step of the algorithm is to choose a randomly shifted grid 
$\rsg = \rsg(\Box, \Delta)$ that has the following additional property:
any two points in $\rreds \cup \rblues$ that are within a distance of 
$\ell/m^3$ lie in the same grid cell.
We call a grid satisfying this property \emph{safe}.
Algorithmically, we place an axis-parallel square of 
side length $2\ell/m^3$ around every $p \in \rreds \cup \rblues$,
called the \emph{moat} of $p$;
$\rsg$ is safe if none of its grid lines cross any moat 
(see Figure~\ref{figure:grid_moats}).
\begin{observation}
	A randomly shifted grid $\rsg$ is safe with probability $\Omega(1 - 1/m^2)$.
\end{observation}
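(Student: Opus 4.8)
The plan is a union bound over the $m$ moats, exploiting that each coordinate of the random shift $\xi$ is uniform on $[0,\Delta)$, so each family of parallel lines of $\rsg$ is placed uniformly at random modulo $\Delta$.

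The one quantitative fact I would establish first is that a moat is far narrower than a grid cell: a moat has side length $s := 2\ell/m^3$, whereas $\Delta = \ell/m^\delta$ with $\delta = 1/(3d)$, so $s/\Delta = 2\,m^{\delta-3} < 1$ once $m$ exceeds a fixed constant (which holds here since $m > n^{\eps/4}$ with $n$ large). In particular, along either axis the span of a moat is an interval of length less than $\Delta$, hence it contains at most one of the equally spaced coordinates of the lines of $\rsg$ perpendicular to that axis.

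Now fix a point $p=(p_1,p_2)\in\rreds\cup\rblues$. The vertical lines of $\rsg$ have $x$-coordinates of the form $c+\xi_1+k\Delta$, $k\in\mathbb{Z}$, for a fixed offset $c$ determined by $\Box$, and some such coordinate lies in the open interval $(p_1-\ell/m^3,\,p_1+\ell/m^3)$, of length $s$, exactly when $\xi_1\bmod\Delta$ lands in a set of measure $s$; since $\xi_1$ is uniform on $[0,\Delta)$ this event has probability $s/\Delta=2\,m^{\delta-3}$. The same bound holds for the horizontal lines via the independent shift $\xi_2$, so by a union bound some line of $\rsg$ crosses the moat of $p$ with probability at most $4\,m^{\delta-3}$. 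A further union bound over the $m$ points of $\rreds\cup\rblues$ gives that $\rsg$ is unsafe with probability at most $4m\cdot m^{\delta-3}=4/m^{2-\delta}=o(1)$, i.e.\ $\rsg$ is safe with probability at least $1-4/m^{2-\delta}$, which is $\Omega(1-1/m^2)$ as claimed. To close the loop with the definition of ``safe'': if $p,q\in\rreds\cup\rblues$ are within distance $\ell/m^3$ in the $L_p$ metric then, since $\|\cdot\|_\infty\le\|\cdot\|_p$, the segment $\overline{pq}$ lies inside the moat of $p$ (the closed $L_\infty$-ball of radius $\ell/m^3$ about $p$); both endpoints lie in $\Box$, hence in the shifted square that $\rsg$ tiles, so a line of $\rsg$ separating $p$ from $q$ would have to cross $\overline{pq}$, and therefore the moat of $p$ --- impossible when no moat is crossed, so $p$ and $q$ share a cell.

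There is no real obstacle here --- this is a standard randomly-shifted-grid argument --- and the only point that wants a moment's care is the first step, verifying $s<\Delta$ so that ``the moat's span contains a grid coordinate'' is a single clean event of probability exactly $s/\Delta$ rather than something requiring a bound over several lines; everything after that is two union bounds.
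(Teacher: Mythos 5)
Your proof is correct and follows essentially the same route as the paper's: interpret the random shift as two independent one-dimensional shifts, observe that each moat is much narrower than a grid cell so at most one grid line can hit it, bound the per-moat hit probability by (moat width)/$\Delta$, and finish with a union bound over the $m$ moats and the $d$ dimensions. The one place you are actually \emph{more} careful than the paper's own proof is the final arithmetic: you correctly get a failure probability of $O(m\cdot s/\Delta) = O(1/m^{2-\delta})$, whereas the paper's proof text appears to divide the total moat length $2\ell/m^2$ by $\ell$ rather than by $\Delta$ and quotes $O(1/m^2)$, which drops a factor of $m^{\delta}$. Your exponent $2-\delta$ is the right one, and it is in fact what the paper implicitly relies on later (the proof of the expected-cut lemma uses the measure bound $\ell/m^\delta - 2\ell m/m^3 = (1-2/m^{2-\delta})\ell/m^\delta$). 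Either way, both bounds give failure probability $o(1)$, which is all the observation is used for.
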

\begin{proof}
	We can interpret the random shift on $\rsg$ as a composition of 1-dimensional shifts in each dimension.
	These 1-dimensional shifts are chosen uniformly at random from $[0, \ell]$.
	In any single dimension $e_i$, the union of moats has length at most $m \cdot 2\ell/m^3 = 2\ell/m^2$.
	Dividing, the probability that the shift in $e_i$ crossed a grid line with a moat is $O(1/m^2)$.
	There are a constant number of dimensions, so the total failure probability is $O(1/m^2)$.
\end{proof}
If $\rsg$ is not safe, we re-sample random shift until we find one that is safe.

Let $\Pi \subseteq \rsg$ be the set of nonempty grid cells,
i.e., ones that contain at least one point of $\rreds \cup \rblues$.
For each cell $\pi \in \Pi$, we create a recursive instance $\rinst_\pi$,
which we refer to as an \emph{internal subproblem}.
Each $\rinst_\pi$ aims to transport as much as possible within $\pi$.
Whatever we are unable to transport locally within cells of $\Pi$,
we transport globally with a single \emph{external subproblem} $\rinst_\Box$.
We now describe these subproblems in more detail.

For each cell $\pi \in \Pi$, we define the \emph{extra supply} of $\pi$ 
to be the absolute difference between the supply and demand in $\pi$,
denoted by $\excess_\pi$.
Without loss of generality, assume 
$\rsupply(\rreds \cap \pi) \geq \rsupply(\rblues \cap \pi)$,
i.e. that the extra supply of $\pi$ is red.
We will use the entirety of $\rblues \cap \pi$ for the internal 
subproblem, and arbitrarily partition $\rreds \cap \pi$ such that 
$\rsupply(\rblues \cap \pi)$ supply is used for the internal subproblem,
and the remainder (of total supply $\excess_\pi$) 
is used for the external subproblem.
We pick an arbitrary maximal subset of points 
$(\rreds_{ex})_\pi \subseteq \rreds \cap \pi$ such that 
$\rsupply((\rreds_{ex})_\pi) \leq \excess_\pi$, as follows.
Let $\rreds_\pi = (\rreds \cap \pi) \setminus (\rreds_{ex})_\pi$, 
$\rblues_\pi = \rblues \cap \pi$, and $(\rblues_{ex})_\pi = \emptyset$.
If $\rsupply((\rreds_{ex})_\pi) < \excess_\pi$, 
we arbitrarily pick a point $p$ in $\rreds_\pi$, 
and split $p$ into two copies, say $p'$ and $p''$,
with $\rsupply(p') = \excess_\pi - \rsupply((\rreds_{ex})_\pi)$
and $\rsupply(p'') = \rsupply(p) - \rsupply(p')$.
We then add $p'$ to $(\rreds_{ex})_\pi$ and replace $p$ with $p''$ in $\rreds_\pi$.
This step ensures that $\rsupply((\rreds_{ex})_\pi) = \excess_\pi$.
Let $\rsupply_\pi$ be the restriction of $\rsupply$ to 
$\rreds_\pi \cup \rblues_\pi$; 
by construction, $\rsupply_\pi(\rreds_\pi) = \rsupply_\pi(\rblues_\pi)$.
The internal subproblem for $\pi$ is $\rinst_\pi = \rtinst{}{_\pi}$.

We now describe the external subproblem.
Let $\rreds_{ex} = \bigcup_{\pi \in \Pi} (\rreds_{ex})_\pi$, 
$\rblues_{ex} = \bigcup_{\pi \in \Pi} (\rblues_{ex})_\pi$,
and set $\rsupply_{ex}$ as the restriction of $\rsupply$ to $\rreds_{ex}\cup \rblues_{ex}$.
We merge the extra supply in each cell into a single artificial point at the center of the cell.
The resulting transportation instance has relatively few ($O(m^{2\delta})$) points 
and distorts the ``real'' distances by an amount proportional to the side length of the cell.
If $\rsupply(\rreds \cap \pi) > \rsupply(\rblues \cap \pi)$, 
we create a red point $r_\pi$ at the center of $\pi$ 
and define the supply of $r_\pi$, denoted $\rsupply_\Box(r_\pi)$, 
to be $\excess_\pi$.
Similarly, if $\rsupply(\rblues \cap \pi) > \rsupply(\rreds \cap \pi)$, 
we create a blue point $b_\pi$ at the center of $\pi$ with 
$\rsupply_\Box(b_\pi) = \excess_\pi$.
Let $\rreds_\Box$ (resp., $\rblues_\Box$) be the set of red (resp., blue) 
points that were created at the centers of cells in $\Pi$.
We create the external subproblem $\rinst_\Box = \rtinst{}{_\Box}$;
$\rinst_\Box$ acts as an approximate view of the 
actual extra supply instance $\rinst_{ex} = \rtinst{}{_{ex}}$.
See Figure~\ref{figure:grid_subproblems}.

\begin{figure}
\centering
\subcaptionbox{\centering}{\includegraphics[width=0.25\linewidth,page=1]{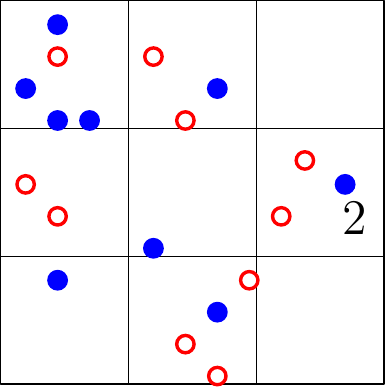}}
\quad
\subcaptionbox{\centering}{\includegraphics[width=0.25\linewidth,page=2]{fig/transp_alg.pdf}}
\quad
\subcaptionbox{\centering}{\includegraphics[width=0.25\linewidth,page=3]{fig/transp_alg.pdf}}
\caption{A subproblem (a) with its internal subproblems (b) and external subproblem (c).}
\label{figure:grid_subproblems}
\end{figure}

For each cell $\pi \in \Pi$, we recursively compute a transportation map 
$\transp_\pi$ on the internal subproblem $\rinst_\pi$.
If the root instance -- the original input to our transportation 
problem -- has bounded spread, we compute an optimal solution 
$\transp_\Box$ for the external subproblem $\rinst_\Box$ using Orlin's algorithm.
If the root instance does not have bounded spread, then we recursively compute 
an approximately optimal solution $\transp_\Box$ for the external subproblem 
$\rinst_\Box$.
Note that irrespective of the spread of the original instance, 
every external subproblem $\rinst_\Box$ has spread bounded by $O(n^{\delta})$,
i.e., has bounded spread.

We categorize subproblems by the number of external subproblems 
in the recursive chain leading to them:
$\rinst$ is \emph{primary} if there are none;
\emph{secondary} if there is exactly one; and \emph{tertiary} if there are two.
All tertiary problems are solved exactly using Orlin's algorithm, as are
base subproblems in the primary and secondary recursion.
See Figure~\ref{figure:grid_primary_secondary}
for a visualization of the recursion tree of the algorithm.

\begin{figure}
\centering
\includegraphics[width=0.7\linewidth]{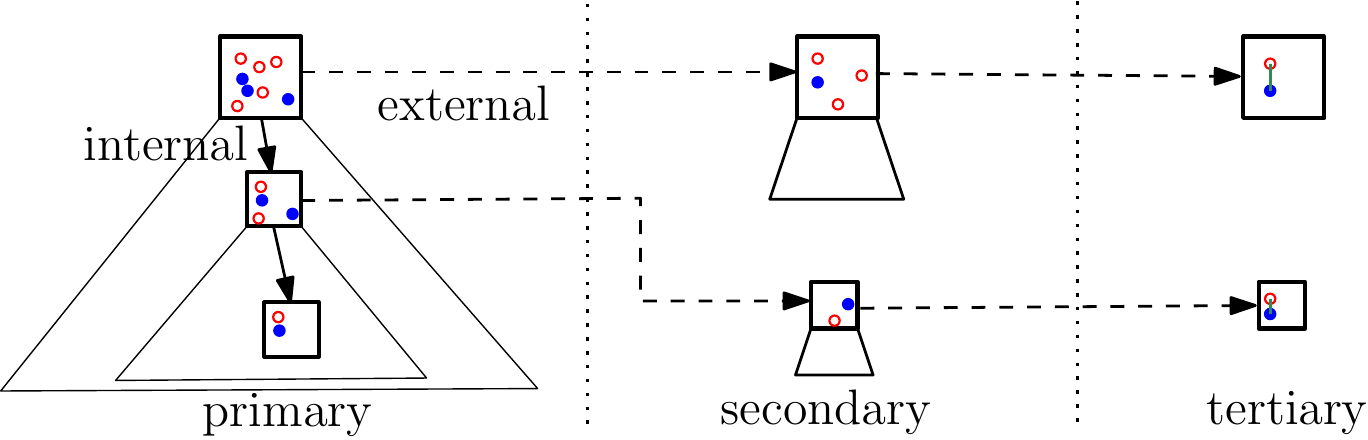}
\caption{Primary-secondary classification of recursive problems.}
\label{figure:grid_primary_secondary}
\end{figure}

Finally, we construct a transportation map $\transp$ for $\rinst$ 
by combining the solutions to the internal and external subproblems.
For a pair $(r, b) \in \rreds_\pi \times \rblues_\pi$, 
we simply set $\transp(r, b) = \transp_\pi(r, b)$.
For the external subproblem, we first convert the transportation map
$\transp_\Box$ on $\rinst_\Box$ into a map for $\rinst_{ex}$, as follows:
For each red point $r_\pi \in \rreds_\Box$ 
(resp., blue point in $\rblues_\Box$), at the center of a cell $\pi \in \Pi$,
we ``redistribute'' the transport from $r_\pi$ (resp., $b_\pi$) to 
the points of $(\rreds_{ex})_\pi$ (resp., $(\rblues_{ex})_\pi$) 
to compute a transportation map $\transp_{ex}$ of $\rinst_{ex}$.
That is, for any $r_\pi, b_\pi \in \rreds_\Box \times \rblues_\Box$, 
we assign the units of $\transp_\Box(r_\pi, b_\pi)$ 
among the pairs in $(\rreds_{ex})_\pi \times (\rblues_{ex})_\pi$ in an arbitrary manner, 
while respecting the demands. We then set $\transp(r, b) = \transp_{ex}(r, b)$
for $(r, b)\in \rreds_{ex} \times \rblues_{ex}$.
This completes the description of the algorithm.

\subsection{Cost analysis}
\label{subsection:grid_cost}
There are two sources of error in our algorithm:
the distortion between $\rinst_\Box$ and $\rinst_{ex}$,
and the error from restricting the solution to the 
internal/external partitioning of demand.

\mparagraph{$\delta$-closeness}
We first formalize the way that $\rinst_\Box$ approximates $\rinst_{ex}$
when it shifts demand to cell centers.
We introduce a notion called $\delta$-closeness between transportation 
instances: informally, two instances are $\delta$-close if we can shift the demands
of one to form the other, without moving any demand more than $\delta$.

For a given transportation map~$\tau$, we define $\tfcost_\infty(\transp) = \max_{(r, b):
\transp(r, b) > 0}{\dist{r}{b}}$ as the maximum distance used in $\transp$.
Now, let $\inst = \tinst{}{}$ be an instance of the transportation problem and $\transp$ a
transportation map for $\inst$. 
Let $\inst' = \tinst{}{'}$ be another instance of the transportation problem
with $\tsupply(\reds) = \tsupply'(\reds')$.
Consider the transportation instances $\inst_\reds = (\reds, \reds', \tsupply_\reds)$
and $\inst_\blues = (\blues, \blues', \tsupply_\blues)$ 
where $\tsupply_\reds$ (resp., $\tsupply_\blues$) is the supply (resp., demand) of points in 
$\reds$ and $\reds'$ (resp., $\blues$ and $\blues'$) in $\inst$ and $\inst'$ 
respectively.
We call $\inst$ and $\inst'$ \emph{$\delta$-close} if there exist
transportation maps $\transp_\reds$ and $\transp_\blues$ of 
$\inst_\reds$ and $\inst_\blues$ such that 
$\tfcost_\infty(\transp_\reds), \tfcost_\infty(\transp_\blues) \leq \delta$.
Observe that this condition holds between the extra supply points and the
contracted points.
\begin{observation}
\label{observation:grid_shift_close}
	Any point in a grid cell of side length $\Delta$ is within $(\Delta/\sqrt{2})$ 
	of the cell center, so $\inst_\Box$ and $\inst_{ex}$ are $(\Delta/\sqrt{2})$-close.
\end{observation}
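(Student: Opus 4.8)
The plan is to verify the two halves of the statement in turn: first the elementary per-cell bound, and then the $(\Delta/\sqrt2)$-closeness of $\inst_\Box$ and $\inst_{ex}$, which I would establish by exhibiting explicit witness maps $\transp_\reds,\transp_\blues$. For the per-cell bound, note that in the planar setting a cell of $\rsg$ is an axis-aligned square of side length $\Delta$; translating coordinates so the cell is $[0,\Delta]^2$ with center $c=(\Delta/2,\Delta/2)$, any point $(x,y)$ of the cell satisfies $|x-\Delta/2|\le\Delta/2$ and $|y-\Delta/2|\le\Delta/2$, hence $\dist{(x,y)}{c}\le\sqrt{(\Delta/2)^2+(\Delta/2)^2}=\Delta/\sqrt2$. (Under a general $L_p$ metric, or for $d>2$, the analogous bound is $O(\Delta)$, which is all the cost analysis actually needs.)

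For the closeness, recall that by the definition of $\delta$-closeness it suffices to produce a transportation map $\transp_\reds$ for $\inst_\reds=(\rreds_{ex},\rreds_\Box,\cdot)$ and a transportation map $\transp_\blues$ for $\inst_\blues=(\rblues_{ex},\rblues_\Box,\cdot)$, both with $\tfcost_\infty(\cdot)\le\Delta/\sqrt2$; the choice of orientation of these two instances is immaterial, since $\tfcost_\infty$ is invariant under transposing a map. I would route every extra-supply point to the artificial point at the center of its own cell: fix $\pi\in\Pi$ and assume, per the per-cell convention used to build the external subproblem, that the excess of $\pi$ is red, so that every point of $(\rreds_{ex})_\pi$ lies in $\pi$, the artificial point $r_\pi\in\rreds_\Box$ sits at the center of $\pi$, and $\rsupply_\Box(r_\pi)=\excess_\pi$. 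Set $\transp_\reds(p,r_\pi):=\rsupply_{ex}(p)$ for each $p\in(\rreds_{ex})_\pi$ and $\transp_\reds:=0$ on all other pairs, and define $\transp_\blues$ symmetrically over the blue-excess cells.

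Feasibility is then immediate from the construction of the external subproblem: the supply constraint holds at each $p$ because all of $\rsupply_{ex}(p)$ leaves $p$, and the demand constraint holds at each $r_\pi$ because $\sum_{p\in(\rreds_{ex})_\pi}\rsupply_{ex}(p)=\excess_\pi=\rsupply_\Box(r_\pi)$ (summing these identities also shows $\rsupply_{ex}(\rreds_{ex})=\rsupply_\Box(\rreds_\Box)$, so the precondition of the definition is met). Every pair on which $\transp_\reds$ is positive consists of a point $p$ and the center $r_\pi$ of the cell $\pi$ containing $p$, so the per-cell bound gives $\dist{p}{r_\pi}\le\Delta/\sqrt2$ and hence $\tfcost_\infty(\transp_\reds)\le\Delta/\sqrt2$; symmetrically $\tfcost_\infty(\transp_\blues)\le\Delta/\sqrt2$. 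Therefore $\inst_\Box$ and $\inst_{ex}$ are $(\Delta/\sqrt2)$-close.

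The main ``obstacle'' here is really just bookkeeping rather than analysis: one has to align the per-cell ``WLOG red excess'' convention with the definitions of $\rreds_\Box,\rblues_\Box,\rsupply_\Box$ so that the demand constraints of the witness maps hold on the nose, and recall that each point of $(\rreds_{ex})_\pi$ is either a point of $\rreds\cap\pi$ or a split copy of one, and therefore lies inside $\pi$, the same cell as $r_\pi$. There is no genuine analytic content beyond the one-line distance estimate.
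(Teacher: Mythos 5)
Your proposal is correct, and it is the natural argument; the paper itself states this as an unproved \emph{observation}, and your writeup simply spells out the implicit reasoning—the elementary corner-to-center distance bound for an axis-aligned $\Delta$-cell, together with the witness maps $\transp_\reds,\transp_\blues$ that send each extra-supply point to the artificial point at the center of its own cell. That is exactly what the authors had in mind.
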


The definition of $\delta$-closeness states that the supplies of 
$\reds$ (resp., demands of $\blues$) (and therefore the units of $\tau$) 
can be mapped to supplies at $\reds'$ (resp., $\blues'$) within distance $\delta$.
Let $\transp'$ be the result of passing $\transp$ through these maps,
then we say that $\transp'$ solving $\inst'$ is \emph{derived} 
from $\transp$ solving $\inst$.
The following lemma relates the transportation map for $\inst_{ex}$ to the 
solution for $\inst_\Box$ produced by the external subproblem.
\begin{lemma}
\label{lemma:grid_delta_close_error}
	Let $\inst$ and $\inst'$ be two $\delta$-close instances of the 
	transportation problem with $\mathsf{X}'$ being the total demand of each.
	Let $\transp$ be a transportation map of $\inst$ and let $\transp'$ be 
	a transportation map of $\inst'$ derived from $\transp$.
	Then, $|\tfcost(\transp') - \tfcost(\transp)| \leq 2\delta \mathsf{X}'$.
\end{lemma}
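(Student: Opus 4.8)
The plan is to reduce the statement to a token-by-token application of the triangle inequality. Since all supplies, demands, and transported amounts are integers, I would regard $\transp$ as a multiset of $\mathsf{X}'$ unit flows, each being a pair $(r,b)\in\reds\times\blues$ occurring with multiplicity $\transp(r,b)$, and likewise regard $\transp_\reds$ (resp.\ $\transp_\blues$) as $\mathsf{X}'$ unit flows from $\reds$ to $\reds'$ (resp.\ from $\blues$ to $\blues'$). For every $r\in\reds$ the number of units of $\transp$ incident to $r$ equals $\tsupply(r)$, which is also the number of units of $\transp_\reds$ incident to $r$; fixing an arbitrary bijection between these two sets of units, for each $r$, attaches to every unit $(r,b)$ of $\transp$ a point $r'\in\reds'$ with $\transp_\reds(r,r')>0$. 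Doing the same on the $\blues$ side attaches a $b'\in\blues'$ with $\transp_\blues(b,b')>0$. Each unit of $\transp$ is thereby promoted to a $4$-tuple $(r',r,b,b')$, and declaring $\transp'(r',b')$ to be the number of $4$-tuples with first coordinate $r'$ and last coordinate $b'$ yields a transportation map of $\inst'$: its $r'$-marginal is $\sum_r \transp_\reds(r,r')=\tsupply'(r')$ and its $b'$-marginal is $\sum_b\transp_\blues(b,b')=\tsupply'(b')$ by construction. This is exactly a map derived from $\transp$ in the sense of the paragraph preceding the lemma, and every derived map can be presented in this form, so it suffices to bound $|\tfcost(\transp')-\tfcost(\transp)|$ for maps of this form.

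Next I would bound the contribution of a single $4$-tuple $(r',r,b,b')$. Since $\transp_\reds(r,r')>0$ and $\tfcost_\infty(\transp_\reds)\le\delta$ we get $\dist{r}{r'}\le\delta$, and likewise $\dist{b}{b'}\le\delta$. Two applications of the triangle inequality (valid in any $L_p$ metric) give
\[
\dist{r'}{b'}\;\le\;\dist{r}{r'}+\dist{r}{b}+\dist{b}{b'}\;\le\;\dist{r}{b}+2\delta,
\]
and symmetrically $\dist{r}{b}\le\dist{r'}{b'}+2\delta$, so $\bigl|\dist{r'}{b'}-\dist{r}{b}\bigr|\le 2\delta$ for each of the $\mathsf{X}'$ tuples. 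Because each unit contributes $\dist{r}{b}$ to $\tfcost(\transp)$ and $\dist{r'}{b'}$ to $\tfcost(\transp')$, summing over all tuples yields
\[
\bigl|\tfcost(\transp')-\tfcost(\transp)\bigr|\;\le\;\sum_{\text{tuples}}\bigl|\dist{r'}{b'}-\dist{r}{b}\bigr|\;\le\;2\delta\,\mathsf{X}',
\]
which is the claim.

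The only real obstacle is the bookkeeping in the first step: making precise that the derivation of $\transp'$ from $\transp$ is a consistent path decomposition through the three maps $\transp_\reds$, $\transp$, $\transp_\blues$. This is a routine flow-decomposition fact — it works because at every vertex the flow values of the incident maps agree with that vertex's supply or demand — but it is the part that deserves a careful statement, since everything afterward is just the triangle inequality invoked $\mathsf{X}'$ times and a linear sum. (The count of $\mathsf{X}'$ units is exactly right because $\delta$-closeness forces $\tsupply(\reds)=\tsupply'(\reds')$, and both $\inst$ and $\inst'$ being transportation instances then forces all four marginals to equal $\mathsf{X}'$.)
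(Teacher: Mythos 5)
Your proof is correct and takes essentially the same approach as the paper: both decompose the map into unit flows, attach each unit $(r,b)$ to its $\delta$-close images $(r',b')$, and apply the triangle inequality per unit before summing. You simply spell out the flow-decomposition bookkeeping that the paper's terse three-line proof leaves implicit.
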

\begin{proof}
	Consider a unit of transportation in $\transp$, between points $r, b$.
	In $\transp'$, this unit is mapped to two new points $r', b'$,
	which by $\delta$-closeness must satisfy 
	$\dist{r}{r'} \leq \delta$ and $\dist{b}{b'} \leq \delta$.
	By triangle inequality, $\dist{r'}{b'} \leq \dist{r}{b} + 2\delta$.
	Summing over every unit transported in $\transp$, the lemma follows.
\end{proof}

\mparagraph{Partitioning of demand}
Now that we have quantified the error between $\inst_\Box$ and $\inst_{ex}$,
we analyze the second source of error.
First, we bound the error in a single subproblem (i.e. one subdividing grid),
and then the error for a single pair $(r, b) \in \reds \times \blues$ across 
all subproblems.
We combine these two arguments to bound the expected error
due to the partitioning across all subproblems and all pairs of points.

Fix a recursive problem $\rinst = \rtinst{}{}$, with cell side length $\Delta$.
Let $\excess_\pi$, $\rreds_\pi$, $\rblues_\pi$, $(\rreds_{ex})_\pi$, $(\rblues_{ex})_\pi$ 
for $\pi \in \Pi$ and $\reds_\Box$, $\blues_\Box$, $\reds_{ex}$, $\blues_{ex}$, 
$\rsupply_{ex}$, be as defined in Section~\ref{subsection:grid_high_level}.
Let $\EuScript{I} = \bigcup_{\pi \in \Pi} \rreds_\pi \times \rblues_\pi$
be the set of ``local'' point pairs, solved by the algorithm within 
internal subproblems.
We refer to a pair $(r, b) \in (\rreds \times \rblues) \setminus \EuScript{I}$ 
as ``non-local''.

The next lemma outlines a method for deforming an arbitrary transportation
map to one that respects the local/non-local partitioning used by the 
algorithm.

\begin{lemma}
\label{lemma:grid_swap}
	Let $\inst = \tinst{}{}$ be a recursive subproblem with cell side length $\Delta$,
	and let $\hat{\transp}$ be an arbitrary transportation map for $\inst$.
	Let $\hat{\mathsf{X}} = \sum_{(r, b) \notin \EuScript{I}} \hat{\transp}(r, b)$ 
	be the total non-local transport in $\hat{\transp}$, 
	and $\mathsf{X} = \sum_{\pi \in \Pi} \excess_\pi$ 
	be the total extra supply of $\rinst$.
	Then, there exists a transportation map $\tilde{\transp}$
	comprising local solutions $\tilde{\transp}_\pi$ for each $\rinst_\pi$ 
	and a non-local solution $\tilde{\transp}_{ex}$ for $\rinst_{ex}$,
	such that the following properties hold:
	\begin{enumerate}[A.]
	\item \label{item:grid_swap12}       
		The cost of the transportation map $\tilde{\transp}$ is bounded above
		\begin{equation*}
			\tfcost(\tilde{\transp}) 
			= \sum_{\pi \in \Pi} \tfcost(\tilde{\transp}_\pi) + \tfcost(\tilde{\transp}_{ex})
			\leq \tfcost(\hat{\transp}) + 8\sqrt{2}\Delta\hat{\mathsf{X}}.
		\end{equation*}
	\item \label{item:grid_swap3}
		The local transport in $\tilde{\transp}$ satisfies
		$\tilde{\transp}(r, b) \geq \hat{\transp}(r, b)$ for all $(r, b) \in \EuScript{I}$,
	\item \label{item:grid_swap4}
		and the sum of local difference
		$\sum_{(r, b) \in \EuScript{I}} \left(\tilde{\transp}(r, b) - \hat{\transp}(r, b)\right) \leq 3\hat{\mathsf{X}}$.
	\item \label{item:grid_swap5}
		The non-local transport in $\tilde{\transp}$ satisfies
		$\hat{\mathsf{X}} \geq \mathsf{X} = \tilde{\mathsf{X}} := \sum_{(r, b) \notin \EuScript{I}} \tilde{\transp}(r, b)$.
	\end{enumerate}
\end{lemma}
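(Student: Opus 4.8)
The plan is to start from the arbitrary map $\hat{\transp}$ and perform, cell by cell, a sequence of local ``swaps'' that push as much transport as possible into the internal subproblems while only slightly increasing the cost. Fix a cell $\pi \in \Pi$. Recall that $\rblues_\pi = \rblues \cap \pi$ is the entire blue set of $\pi$, while $\rreds_\pi \subseteq \rreds \cap \pi$ omits the extra supply $(\rreds_{ex})_\pi$ (of total weight $\excess_\pi$); in particular $\rsupply_\pi(\rreds_\pi) = \rsupply_\pi(\rblues_\pi)$. In $\hat{\transp}$, a red point $r \in \rreds_\pi$ may send some of its supply outside $\pi$, and a blue point $b \in \rblues_\pi$ may receive some of its demand from outside $\pi$. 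The idea is to reroute: whenever $r \in \rreds_\pi$ sends a unit to some $b' \notin \rblues_\pi$ and simultaneously some $b \in \rblues_\pi$ receives a unit from some $r' \notin \rreds_\pi$, we replace the two units $(r,b')$ and $(r',b)$ by the units $(r,b)$ and $(r',b')$. The new pair $(r,b)$ is local, and $(r',b')$ is non-local. By the triangle inequality, $\dist{r}{b} + \dist{r'}{b'} \le (\dist{r}{r'} + \dist{r'}{b'}) + (\dist{r'}{b'} + \dist{b'}{b})$; but actually the clean bound is $\dist{r}{b} \le \dist{r}{r'} + \dist{r'}{b'} + \dist{b'}{b}$ where $\dist{r}{r'}$ and $\dist{b'}{b}$ are each at most $\sqrt{2}\Delta$ \emph{only if} $r'$ and $b'$ are also in $\pi$ — which need not hold. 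So instead I bound $\dist{r}{b} \le \mathrm{diam}(\pi) \le \sqrt{2}\Delta$, and $\dist{r'}{b'} \le \dist{r'}{r} + \dist{r}{b'} \le \sqrt{2}\Delta + \dist{r}{b'}$ wait, again $r'$ need not be near $\pi$. The robust accounting is: each rerouting swap replaces a non-local unit out of $r$ and a non-local unit into $b$ with one local unit $(r,b)$ of length $\le \sqrt 2 \Delta$ and one non-local unit $(r',b')$; charging the length of $(r,b)$ to the two destroyed units and using $\dist{r'}{b'} \le \dist{r'}{b} + \dist{b}{r} + \dist{r}{b'} \le \dist{r'}{b} + \sqrt2\Delta + \dist{r}{b'}$, each swap increases total cost by at most $\sqrt2\Delta + \sqrt2\Delta + \sqrt2\Delta = 3\sqrt2\Delta$ hmm — I will be careful to get the constant $8\sqrt2$; the point is it is $O(\Delta)$ per unit of newly-localized transport, and the number of swaps at cell $\pi$ is at most $\min$ of the excess-out and excess-in, summing to $O(\hat{\mathsf X})$ overall.

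The second step is to bound how much \emph{new} local transport we create and how much non-local transport survives. Within cell $\pi$, before swapping, let $o_\pi$ be the total transport leaving $\pi$ in $\hat\transp$ and $i_\pi$ the total entering; since $\rsupply(\rreds \cap \pi)$ and $\rsupply(\rblues \cap \pi)$ differ by exactly $\excess_\pi$, we have $|o_\pi - i_\pi| = \excess_\pi$, so the number of swaps performed at $\pi$ is $\min(o_\pi', i_\pi')$ where the primes restrict to the $\rreds_\pi, \rblues_\pi$ portions. Each swap adds exactly one unit to $\sum_{(r,b)\in\EuScript I} \tilde\transp(r,b)$, and over all cells the total number of swaps is at most $\hat{\mathsf X}$ (the global non-local amount); a slightly looser bookkeeping accounting for the arbitrary split of $\rreds \cap \pi$ into $\rreds_\pi$ and $(\rreds_{ex})_\pi$ gives the stated bound $3\hat{\mathsf X}$ in property~\ref{item:grid_swap4}. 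Property~\ref{item:grid_swap3} is immediate because swaps only ever add local units, never remove them. For property~\ref{item:grid_swap5}: after all swaps, no local point can be simultaneously a source of non-local transport and the cell have an unmet local receiver (we swapped until one side is exhausted), so the residual non-local transport from cell $\pi$ equals exactly $\excess_\pi$ — the extra supply must leave. Hence $\tilde{\mathsf X} = \sum_\pi \excess_\pi = \mathsf X$, and since each swap strictly reduces the non-local total by one (two non-local units become one), $\hat{\mathsf X} \ge \mathsf X$.

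Finally, having arranged that exactly the extra supply at each cell is transported non-locally, I split $\tilde\transp$ into its restriction to $\bigcup_\pi \rreds_\pi \times \rblues_\pi$, which decomposes as a disjoint union of maps $\tilde\transp_\pi$ feasible for each $\rinst_\pi$ (feasibility holds because after swapping each $r \in \rreds_\pi$ sends all of its $\rsupply_\pi$ weight within $\pi$ and each $b \in \rblues_\pi$ receives all of it within $\pi$), and its restriction to the remaining pairs, which is supported on $\rreds_{ex} \times \rblues_{ex}$ and is therefore a feasible map $\tilde\transp_{ex}$ for $\rinst_{ex}$. The cost identity $\tfcost(\tilde\transp) = \sum_\pi \tfcost(\tilde\transp_\pi) + \tfcost(\tilde\transp_{ex})$ is then just additivity over a partition of the support, and combined with the per-swap cost increase of $O(\Delta)$ times $O(\hat{\mathsf X})$ swaps this yields property~\ref{item:grid_swap12}. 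The main obstacle is the first step: making the rerouting argument precise when the ``other endpoints'' $r', b'$ of a swap may lie far from $\pi$, so that the triangle-inequality charging must be done against the destroyed units rather than against $\mathrm{diam}(\pi)$, and tracking the constant carefully through the worst case where every unit of $\hat{\mathsf X}$ participates in a swap at both of its endpoint cells — this is where the factors $8\sqrt2$ and $3$ come from, and getting them right requires being careful about double counting when a single non-local unit $(r,b)$ has $r$ in one cell's excess-out and $b$ in another cell's excess-in.
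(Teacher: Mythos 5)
Your proposal is essentially the paper's argument, and it works. The paper organizes the deformation into two explicit stages --- Stage 1 uses swaps with $r_1 \in \rreds \cap \pi$, $b_2 \in \rblues \cap \pi$ to make within-cell transport maximal, then Stage 2 uses swaps with $r_1 \in \rreds_\pi$, $r_2 \in (\rreds_{ex})_\pi$ to push the residual non-local transport onto $\rreds_{ex} \times \rblues_{ex}$. Your single swap rule ($r \in \rreds_\pi$, $b \in \rblues_\pi$, $b' \notin \rblues_\pi$, $r' \notin \rreds_\pi$, where $r'$ is allowed to lie in $(\rreds_{ex})_\pi$) folds both stages into one pass, and the termination analysis you give via the $\rsupply_\pi(\rreds_\pi) = \rsupply_\pi(\rblues_\pi)$ balance is exactly what's needed: once no such swap exists, every $\rreds_\pi$--$\rblues_\pi$ demand is matched internally and the surviving non-local support lies in $\rreds_{ex} \times \rblues_{ex}$. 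Properties~\ref{item:grid_swap3} and~\ref{item:grid_swap5} follow as you describe (the two deleted units are always non-local, so local entries only grow, and each swap reduces non-local by at least one unit). On the constants: you flag uncertainty, but the clean bound per unit swapped is $\dist{r}{b} + \dist{r'}{b'} - \dist{r}{b'} - \dist{r'}{b} \le \dist{r}{b} + \dist{r}{b} = 2\sqrt{2}\Delta$, using $\dist{r}{b}\le\sqrt 2\Delta$ once directly and once inside the triangle-inequality chain $\dist{r'}{b'}\le\dist{r'}{b}+\dist{b}{r}+\dist{r}{b'}$; with at most $\hat{\mathsf X}$ unit swaps this gives $2\sqrt2\Delta\hat{\mathsf X} \le 8\sqrt2\Delta\hat{\mathsf X}$, comfortably within the stated bound (the paper's two-stage accounting is looser at $4\sqrt2\Delta$ per stage). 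Similarly, each swap adds at most $2$ local units, so $\sum_{\EuScript I}(\tilde\transp-\hat\transp)\le 2\hat{\mathsf X}\le 3\hat{\mathsf X}$. The one place you should tighten the write-up is the per-cell accounting via $o_\pi', i_\pi'$: swaps at other cells redistribute non-local units, so it is cleaner to bound the \emph{global} number of unit swaps by the global decrease in non-local transport, as you implicitly do at the end, rather than summing cell-local counts.
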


\begin{proof}
	We deform $\hat{\transp}$ to create a transportation $\tilde{\transp}$ 
	consistent with $\transp$,
	in the sense that the local transport within a cell $\pi$ is between 
	$\rreds_\pi \times \rblues_\pi$, and the non-local transport between 
	different cells is through pairs of $\rreds_{ex} \times \rblues_{ex}$.
	Furthermore, the non-local transport from $\pi$ is exactly $\excess_\pi$.
	Initially, let $\tilde{\transp} = \hat{\transp}$.

	\mparagraph{Stage 1}
	The first stage ensures that the transport within each cell of 
	$\Pi$ is maximal.
	Suppose there is a cell $\pi \in \Pi$ with two points 
	$r_1 \in \rreds \cap \pi$ and $b_2 \in \rblues \cap \pi$,
	such that $\tilde{\transp}(r_1, b_1), \tilde{\transp}(r_2, b_2) > 0$
	where $b_1 \in \rblues \setminus \pi$ and $r_2 \in \rreds \setminus \pi$.
	Let $x = \min\{\tilde{\transp}(r_1, b_1), \tilde{\transp}(r_2, b_2)\}$.
	We redistribute the transport:
	\begin{equation}
	\label{equation:grid_cost2}
	\begin{aligned}
		\tilde{\transp}(r_1, b_1) = \tilde{\transp}(r_1, b_1) - x, \quad
		\tilde{\transp}(r_2, b_2) = \tilde{\transp}(r_2, b_2) - x, \\
		\tilde{\transp}(r_1, b_2) = \tilde{\transp}(r_1, b_2) + x, \quad
		\tilde{\transp}(r_2, b_1) = \tilde{\transp}(r_2, b_1) + x.
	\end{aligned}
	\end{equation}
	This step reduces the non-local transport by at least $x$.
	We repeat this step until there is no cell with
	transport along two pairs $(r_1, b_1)$, $(r_2, b_2)$ as described above.
	When Stage 1 ends, the total non-local transport from each cell 
	$\pi$ is exactly $\excess_\pi$.
	However, some of the local transport within $\pi$ may use 
	nonlocal points ($(\rreds_{ex})_\pi \cup (\rblues_{ex})_\pi$).

	\mparagraph{Stage 2}
	The second stage deforms $\tilde{\transp}$ to 
	ensure that non-local transport is only through the pairs of 
	$\rreds_{ex} \times \rblues_{ex}$.
	Suppose there is a cell $\pi \in \Pi$ with $(\rreds_{ex})_\pi \neq \emptyset$
	and two points $r_1 \in \reds_\pi, r_2 \in (\rreds_{ex})_\pi$ such that
	$\tilde{\transp}(r_1, b_1), \tilde{\transp}(r_2, b_2) > 0$,
	where $b_1 \not\in B_\pi$ and $b_2 \in B_\pi$.
	We perform the same redistribution as (\ref{equation:grid_cost2}) 
	between this new choice of $r_1, b_1, r_2, b_2$.

	\begin{figure}
	\centering
	\subcaptionbox{\centering \label{subfigure:grid_swap1}}{\includegraphics[width=0.4\linewidth,page=1]{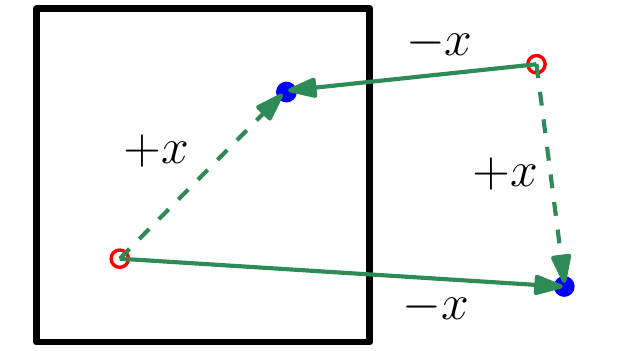}}
	\subcaptionbox{\centering \label{subfigure:grid_swap2}}{\includegraphics[width=0.4\linewidth,page=2]{fig/swap.pdf}}
	\caption{Stage 1 (a) and Stage 2 (b). The square-shaped point was in $(\rreds_{ex})_\pi$.}
	\label{figure:grid_swap}
	\end{figure}

	When Stage 2 terminates, non-local transport 
	is only through nonlocal pairs ($(\rreds_{ex}) \times (\rblues_{ex})$) and 
	local transport in $\pi$ is between local pairs ($\rreds_\pi \times \rblues_\pi$).
	Let $\tilde{\transp}_\pi$ be the restriction of $\tilde{\transp}$ to
	the pairs in $\rreds_\pi \times \rblues_\pi$ for $\pi \in \Pi$,
	and let $\tilde{\transp}_{ex}$ be the restriction of $\tilde{\transp}$
	to pairs in $\rreds_{ex} \times \rblues_{ex}$.
	Then $\tilde{\transp}_\pi$ is a transportation map of $\rinst_\pi$
	and $\tilde{\transp}_{ex}$ is a transportation map of $\rinst_{ex}$.
	During the procedure, we never reduce $\tilde{\transp}(r, b)$ 
	if $(r, b)$ is a local edge; this proves \ref{item:grid_swap3}.

	We note that $\mathsf{X} \leq \hat{\mathsf{X}}$. 
	The execution of (\ref{equation:grid_cost2}) increases the cost 
	of transport by at most $4\sqrt{2}\Delta x$ 
	(by triangle inequality, see Figure~\ref{figure:grid_swap})
	and reduces the non-local transport by at least $x$.
	Hence, the total increase in the cost of the transport after 
	Stage 1 is at most $4\sqrt{2}\Delta\hat{\mathsf{X}}$.
	A similar argument shows that Stage 2 increases the cost by at most 
	$4\sqrt{2}\Delta \mathsf{X} \leq 4\sqrt{2} \Delta \hat{\mathsf{X}}$.
	Hence,
	\begin{equation*}
		\tfcost(\tilde{\transp}) \leq \tfcost(\hat{\transp}) + 8\sqrt{2}\Delta\hat{\mathsf{X}}.
	\end{equation*}
	Furthermore, the execution of (\ref{equation:grid_cost2})
	increases the local transport by at most $2x$ in Stage 1 and 
	by $x$ in Stage 2.
	Hence, total increase in local transport during the deformation 
	is at most $3\hat{\mathsf{X}}$. 
	Therefore,
	\begin{equation*}
		\sum_{(r, b) \in \EuScript{I}} \left(\tilde{\transp}(r, b) - \hat{\transp}(r, b)\right) \leq 3\hat{\mathsf{X}}. \qedhere
	\end{equation*}
\end{proof}

\mparagraph{Error parameter $\cutlen$}
In the previous lemma, we bounded the error due to a single subproblem.
We now bound the error due to a single pair of points 
$(r, b) \in \reds \times \blues$, using a random variable $\cutlen(r, b)$,
defined as the cell side length of the first recursive grid to split 
$(r, b)$ into different cells.

Formally, recall that a recursive subproblem may split a point 
$p \in \reds \cup \blues$ into two copies $p'$ and $p''$ with 
$\tsupply(p') + \tsupply(p'') = \tsupply(p)$;
one of them passed to the external subproblem,
and the other passed down to an internal subproblem.
Abusing notation slightly, we use $\reds$ and $\blues$ to denote the 
multisets that contain all copies of points that are split along 
with the updated demands.

For any base subproblem $\rtinst{}{_\mathrm{base}}$,
every point $p \in \rreds_\mathrm{base} \cup \rblues_\mathrm{base}$ can be identified with a point
$p \in \reds \cup \blues$ such that $\rsupply_\mathrm{base}(p) = \tsupply(p)$.
With this interpretation, we define a function 
$\cutlen: \reds \times \blues \to \reals_{\geq 0}$ as follows:
If there is a base subproblem $\rtinst{}{_\mathrm{base}}$ such that 
$(r, b) \in \rreds_\mathrm{base} \times \rblues_\mathrm{base}$, we set $\cutlen(r, b) = 0$.
Otherwise, there is a recursive subproblem $\rinst$ where
$(r, b) \in \rreds \times \rblues$, but $r$ and $b$ are split into 
different cells of the randomly shifted grid.
In this case, $\cutlen(r, b)$ denotes the side length of the grid cells, 
i.e. $\cutlen(r, b) = \ell/m^\delta$ where $\ell$ is the length of the 
smallest square containing $\rreds \cup \rblues$,
and $m = |\rreds \cup \rblues|$.

The next lemma bounds the expected value of the error parameter
$\cutlen(r, b)$ in terms of the distance $\dist{r}{b}$ 
for any pair $(r, b) \in \reds \times \blues$. 
Its proof uses our choice of safe grids to argue that,
though the recursion depth can be large, 
the number of recursive subproblems that can potentially split $(r,b)$ is small.

\begin{lemma}
\label{lemma:grid_expected_cut}
	There exists a constant $c_1 > 0$ such that for any 
	$(r, b) \in \reds \times \blues$, the expectation
	$\mathsf{E}\left[\cutlen(r, b)\right] \leq c_1 \log_2(1/\eps) \dist{r}{b}$. 
\end{lemma}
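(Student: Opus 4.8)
The plan is to bound $\mathsf{E}[\cutlen(r,b)]$ by summing, over all recursive subproblems $\rinst$ on the path from the root to the base, the probability that $\rinst$ is the \emph{first} subproblem to split $(r,b)$, weighted by its cell side length. By Observation~1(3), if $(r,b)$ reaches a subproblem $\rinst$ with cell side length $\Delta_{\rinst} = \ell_{\rinst}/m_{\rinst}^\delta$, then the probability they are separated there is $O(\dist{r}{b}/\Delta_{\rinst})$, contributing $O(\dist{r}{b})$ to the expectation per subproblem. So the crux is to show that the number of subproblems along this chain that can possibly still contain both $r$ and $b$ is $O(\log(1/\eps))$ \emph{in expectation} (or deterministically, via the safe-grid choice). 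The recursion depth itself can be $\Theta(\log n / \log(1/\eps))$ or worse, so a naive ``depth times $O(\dist{r}{b})$'' bound is far too weak — this geometric-decay-of-relevant-subproblems argument is the main obstacle.

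First I would set up the telescoping of scales. When $(r,b)$ survives into a child subproblem $\rinst'$ of $\rinst$, the bounding square shrinks: $\rinst'$ lives inside one grid cell of $\rinst$, so $\ell_{\rinst'} \le \sqrt{d}\,\Delta_{\rinst} = \sqrt d\,\ell_{\rinst}/m_{\rinst}^\delta$, while $m_{\rinst'} \ge n^{\eps/4}$ (else it is a base case and $\cutlen = 0$). Hence $\Delta_{\rinst'} \le \ell_{\rinst'}/(n^{\eps/4})^{\delta} = O(\Delta_{\rinst}/n^{\eps\delta/4})$, i.e. the cell side length shrinks by a factor of at least $n^{\Omega(\eps)}$ at every level. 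Therefore the relevant cell side lengths along the chain form a geometrically decreasing sequence $\Delta_0 > \Delta_1 > \cdots$ with ratio $\le n^{-c\eps}$ for a constant $c$. Summing Observation~1(3) over the chain gives $\mathsf{E}[\cutlen(r,b)] \le \sum_i O(\dist{r}{b}/\Delta_i) \cdot \Delta_i$ — wait, that rewrites to $O(\dist{r}{b})$ per surviving level, so what I actually need is to cap the number of surviving levels.

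This is where the safe-grid property enters. A safe grid never cuts a moat, so any two points within distance $\ell/m^3$ of each other are guaranteed to stay together; equivalently, once $\dist{r}{b} \le \ell_{\rinst}/m_{\rinst}^3$, the pair cannot be split at $\rinst$ (and the event ``split at $\rinst$'' has probability $0$, not merely small). So the only subproblems $\rinst$ that can split $(r,b)$ — i.e. contribute to the expectation — are those with $\dist{r}{b} > \ell_{\rinst}/m_{\rinst}^3 \ge \Delta_{\rinst}/m_{\rinst}^2 \ge \Delta_{\rinst}/n^2$, that is $\Delta_{\rinst} < n^2\dist{r}{b}$. Combined with the geometric decay $\Delta_{i+1} \le n^{-c\eps}\Delta_i$, the number of ``active'' levels — those with $\Delta_i$ in the window $(\dist{r}{b}, n^2\dist{r}{b}]$ roughly, i.e. within a factor $n^2$ of each other — is at most $O(\log(n^2)/\log(n^{c\eps})) = O(1/\eps)$. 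That's still not quite $\log(1/\eps)$; to get the sharper bound I would instead track, for each active level $i$, the contribution $\mathsf{E}[\text{split at level }i]\cdot\Delta_i \le \min\{1,\,O(\dist{r}{b}/\Delta_i)\}\cdot\Delta_i$, which is $O(\dist{r}{b})$ when $\Delta_i \ge \dist{r}{b}$ and $O(\Delta_i)$ when $\Delta_i < \dist{r}{b}$; the latter tail is a geometric series summing to $O(\dist{r}{b})$, and the former has only $O(\log_{n^{c\eps}}(n^2)) = O(1/(c\eps))\cdot\log n/\log n$…

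Let me restate the endgame more carefully, since getting the constant right to $\log(1/\eps)$ rather than $1/\eps$ is the delicate point. The key realization is that the shrink factor is $n^{\Theta(\eps)}$ but the safe-grid window is only a $\mathrm{poly}(n)$-wide range of scales, so counting \emph{levels} gives $O(1/\eps)$; to improve to $O(\log(1/\eps))$ one argues that among these $O(1/\eps)$ levels, for all but $O(\log(1/\eps))$ of them the separation probability $O(\dist{r}{b}/\Delta_i)$ is so small that, after multiplying by $\Delta_i$ and using that these $\Delta_i$ themselves decay geometrically, the total is a convergent geometric series bounded by a constant times $\dist{r}{b}$. I would formalize this by splitting the active levels into a ``coarse'' band where $\Delta_i$ lies within a bounded ratio of $\dist{r}{b}$ and a ``fine'' band where $\Delta_i \ll \dist{r}{b}$, bounding the fine band by a geometric sum $\sum_i \Delta_i = O(\dist{r}{b})$ and the coarse band by $O(\log(1/\eps))$ levels each contributing $O(\dist{r}{b})$. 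Putting both bands together yields $\mathsf{E}[\cutlen(r,b)] \le c_1\log_2(1/\eps)\dist{r}{b}$, which is the claim. The main obstacle throughout is bookkeeping the scales so that the geometric decay (driven by the $m > n^{\eps/4}$ threshold) and the safe-grid cutoff interact to give $\log(1/\eps)$ rather than $1/\eps$ or $\log n$.
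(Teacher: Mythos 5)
Your setup matches the paper's: per relevant subproblem the contribution to $\mathsf{E}[\cutlen(r,b)]$ is $O(\dist{r}{b})$ (split probability $O(\dist{r}{b}/\Delta)$ times cell side $\Delta$), and the task reduces to showing only $O(\log(1/\eps))$ recursive subproblems along the chain can possibly split $(r,b)$. But the counting argument you give does not reach $O(\log(1/\eps))$, and the ``coarse/fine band'' repair you sketch at the end is unsound.

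Concretely: in your fine band ($\Delta_i < \dist{r}{b}$) the per-level contribution is at most $\Delta_i$, which decays geometrically, so that band does sum to $O(\dist{r}{b})$ --- fine. The trouble is the coarse band ($\Delta_i \geq \dist{r}{b}$). There the expected contribution per level is $\min\{1, O(\dist{r}{b}/\Delta_i)\}\cdot\Delta_i = \Theta(\dist{r}{b})$: the $\Delta_i$ cancels, and there is \emph{no} decay across levels. With your uniform safe-grid window $\Delta < n^{O(1)}\dist{r}{b}$ and a fixed shrink factor $n^{\Theta(\eps)}$ per level, the coarse band has $\Theta(1/\eps)$ levels, each contributing $\Theta(\dist{r}{b})$, giving $O((1/\eps)\dist{r}{b})$. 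The sentence claiming ``after multiplying by $\Delta_i$\ldots the total is a convergent geometric series'' is where the argument breaks: for $\Delta_i \geq \dist{r}{b}$ the product $p_i\Delta_i$ is flat, not geometric.

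The paper's argument avoids this by exploiting that the safe-grid constraint is $m$-\emph{dependent}: a relevant subproblem with $m$ points has $\ell < m^3\dist{r}{b}$ (not merely $\ell < n^{O(1)}\dist{r}{b}$). It partitions the size range $[n^{\eps/4}, n]$ into $u = \lceil\log_2(4/\eps)\rceil$ intervals $[n_j, n_j^2]$ with $n_j = n^{2^{-j}}$. In interval $j$, the allowed $\ell$ lives in a window of multiplicative width $\approx n_j^6$, and each recursion step shrinks $\ell$ by a factor $\geq m^\delta \geq n_j^\delta$ --- both polynomial in $n_j$, so the window is crossed in $O(6/\delta) = O(1)$ steps. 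With $O(\log(1/\eps))$ size-intervals and $O(1)$ subproblems per interval, the total is $O(\log(1/\eps))$. Your argument, by contrast, uses the worst-case window width $n^{O(1)}$ for all $m$, which is why it only gets $O(1/\eps)$. The missing idea is that the smaller $m$ is, the narrower the safe window for $\ell$, and both the window width and the shrink rate scale with the \emph{same} quantity ($\mathrm{poly}(m)$), so each size-doubling bucket is crossed in $O(1)$ steps.
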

\begin{proof}
	We say a recursive subproblem $\inst = \tinst{}{}$ is \emph{relevant} if contains $r$ and $b$
	and there is some safe random shift that splits $(r,b)$.
	Suppose the algorithm creates some relevant recursive subproblem $\inst$ over $m$ points, and
	let $\ell$ be the side length of the smallest orthogonal bounding square of that subproblem.
	The measure of each (safe) horizontal and vertical random shift
	is at least
	\begin{equation*}
		\frac{\ell}{m^\delta} - \frac{2\ell}{m^3} \cdot m 
			= \left(1 - \frac{2}{m^{2-\delta}}\right) \cdot \frac{\ell}{m^\delta}
			\geq \left(1 - \frac{1}{m}\right) \cdot \frac{\ell}{m^\delta},
	\end{equation*}
	whereas the measure of horizontal/vertical shifts 
	that split $(r, b)$ is at most $\dist{r}{b}$.
	Hence, the probability that a safe random shift of this 
	grid splits $(r, b)$ is at most
	\begin{equation*}
		\frac{2 \dist{r}{b}}{(1 - 1/m_i) (\ell/m^\delta)} 
			\leq 3 \dist{r}{b} \cdot \frac{m^\delta}{\ell}.
	\end{equation*}
	We say the \emph{expected contribution} of~$\inst$ to $\cutlen(r,b)$ is
	\begin{align*}
		\Pr[\text{grid for $\inst$ splits $(r,b)$}] \cdot \frac{\ell}{m^\delta} 
			&\leq 3 \dist{r}{b} \cdot \frac{m^\delta}{\ell} \cdot \frac{\ell}{m^\delta} \\
			&= 3 \dist{r}{b}.
	\end{align*}

	Now, let~$q_i$ be the probability that the algorithm creates at least~$i$ relevant recursive
	subproblems.
	We have
	\begin{equation*}
		\mathsf{E}\left[\cutlen(r,b)\right] 
			\leq \sum_{i \geq 1} q_i \cdot 3 \dist{r}{b}
			\leq \sum_{i \geq 1 : q_i > 0} 3\dist{r}{b}
	\end{equation*}
	Therefore, we may prove the lemma by bounding the number of relevant recursive subproblems that
	arise during a single run of the algorithm by $O(\log(1/\eps))$.
	To do so, we will categorize these subproblems in such a way that there are
	$O(\log(1/\eps))$ categories and each category can contain at most one relevant subproblem
	during a single run of the algorithm.

	Recall, $n$ is the number of points in the input instance.
	Again, suppose the algorithm creates some relevant recursive subproblem $\inst$ over $m$ points,
	and let $\ell$ be the side length of the smallest orthogonal bounding square of that subproblem.
	We may assume that $m > n^{\eps/4}$, since otherwise the subproblem is a base case and
	therefore not relevant.

	Define $\underline{\ell} := \dist{r}{b} / \sqrt{2}$; 
	clearly, $\ell \geq \underline{\ell}$.
	We partition the interval $[n^{\eps/4}, n]$ into 
	$u = \lceil\log_2 (4/\eps)\rceil$ intervals of the form 
	$[n_j, n_j^2]$, where $n_j = n^{2^{-j}}$, for $1 \leq j \leq u$.
	There exists an index $j^*$ where $m \in [n_{j^*}, n_{j^*}^2]$, 
	and $\ell \leq \overline{\ell} := (n_{j^*}^2)^3 \dist{r}{b}$
	because the grid is safe.
	Thus, $\ell \in [\underline{\ell}, \overline{\ell}]$,
	where $\overline{\ell} / \underline{\ell} = \sqrt{2} n_{j^*}^6$.

	The interval $[\underline{\ell}, \overline{\ell}]$ can be 
	covered by $7/\delta > (1/\delta)(6 + \log_2(\sqrt{2})/\log_2(n_{j^*}))$ 
	intervals of the form 
	$J_i = [n^{i\delta}_{j^*} \underline{\ell}, n^{(i+1)\delta}_{j^*} \underline{\ell}]$ 
	for $0 \leq i \leq 7/\delta = O(1)$.
	For each value of $i,j$, the algorithm produces at most one subproblem containing a number of
	points in $[n_{j}, n_{j}^2]$ and bounding square side length in $J_i$.
	The total number of intervals is no more than 
	$\lceil\log_2 (4/\eps)\rceil \cdot 7/\delta = O(\log(1/\eps))$.
\end{proof}

\mparagraph{Expected cost of algorithm}
We are now ready to analyze the expected cost of $\transp$, 
the algorithm's transportation map.
First, we analyze the cost if $\inst$ has bounded spread.
Recall that, in this case, our algorithm computes an optimal solution
for each external subproblem (using Orlin's algorithm).

\begin{lemma}
\label{lemma:grid_expected_cost}
	If $\inst$ is a transportation instance with bounded spread,
	then there exists a constant $c_2 > 0$ such that for any 
	transportation map $\hat{\transp}$ of $\inst$,
	\begin{equation*}
		\tfcost(\transp) \leq \tfcost(\hat{\transp}) + c_2 \sum_{(r, b) \in \reds \times \blues} \hat{\transp}(r, b) \cutlen(r, b).
	\end{equation*}
\end{lemma}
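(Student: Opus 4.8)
The plan is to induct on the depth of the recursion tree, using the earlier lemmas (Lemma~\ref{lemma:grid_swap} and Lemma~\ref{lemma:grid_delta_close_error}) at each level to relate the cost of $\transp$ to the cost of a suitably deformed map, and to accumulate the per-level error into the claimed sum $\sum_{(r,b)} \hat\transp(r,b)\cutlen(r,b)$. Fix an arbitrary transportation map $\hat\transp$ of $\inst$. If $\inst$ is a base subproblem, then $\transp$ is optimal so $\tfcost(\transp)\le\tfcost(\hat\transp)$ and $\cutlen(r,b)=0$ on all pairs, so the inequality holds trivially; this is the base case. Otherwise, apply Lemma~\ref{lemma:grid_swap} to $\hat\transp$ to obtain $\tilde\transp$ with local parts $\tilde\transp_\pi$ (feasible for each $\rinst_\pi$) and a non-local part $\tilde\transp_{ex}$ (feasible for $\rinst_{ex}$), satisfying $\tfcost(\tilde\transp)\le\tfcost(\hat\transp)+8\sqrt2\,\Delta\hat{\mathsf X}$, where $\hat{\mathsf X}=\sum_{(r,b)\notin\EuScript I}\hat\transp(r,b)$.

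Next I would relate $\transp$ to $\tilde\transp$ piece by piece. For the internal subproblems, apply the induction hypothesis to each $\rinst_\pi$ with the feasible map $\tilde\transp_\pi$: $\tfcost(\transp_\pi)\le\tfcost(\tilde\transp_\pi)+c_2\sum_{(r,b)\in\rreds_\pi\times\rblues_\pi}\tilde\transp_\pi(r,b)\cutlen(r,b)$, where $\cutlen$ here is the error parameter of the subproblem $\rinst_\pi$ (which, by the definition of $\cutlen$ as ``side length of the first grid to split $(r,b)$,'' agrees with the global $\cutlen$ restricted to pairs not already split at this level). For the external subproblem, since $\inst$ has bounded spread the algorithm solves $\rinst_\Box$ optimally; by Observation~\ref{observation:grid_shift_close}, $\rinst_\Box$ and $\rinst_{ex}$ are $(\Delta/\sqrt2)$-close, so the map $\transp_{ex}$ derived from the optimal $\transp_\Box$, and the map $\transp'_\Box$ derived from $\tilde\transp_{ex}$ (a feasible map of $\rinst_\Box$), both satisfy Lemma~\ref{lemma:grid_delta_close_error}: $\tfcost(\transp_{ex})\le\tfcost(\transp_\Box)+2(\Delta/\sqrt2)\mathsf X\le\tfcost(\transp'_\Box)+\sqrt2\Delta\mathsf X\le\tfcost(\tilde\transp_{ex})+2\sqrt2\Delta\mathsf X$, where $\mathsf X=\tilde{\mathsf X}\le\hat{\mathsf X}$ by property~\ref{item:grid_swap5}. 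Summing, $\tfcost(\transp)=\sum_\pi\tfcost(\transp_\pi)+\tfcost(\transp_{ex})\le\tfcost(\tilde\transp)+(\text{const})\Delta\hat{\mathsf X}+c_2\sum_{(r,b)\in\EuScript I}\tilde\transp(r,b)\cutlen(r,b)$.

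It then remains to bound the two error terms by $c_2\sum_{(r,b)\in\reds\times\blues}\hat\transp(r,b)\cutlen(r,b)$ for a suitable constant $c_2$. The geometric crux is that for every non-local pair $(r,b)$ counted in $\hat{\mathsf X}$ — i.e. a pair that the current grid places in different cells — we have $\cutlen(r,b)=\Delta$ (this level is the first to split such a pair, since if some ancestor had split it, it would already be a ``copy'' in a lower subproblem; care is needed with split points, but the multiset convention in the excerpt handles this), so the per-level additive term $O(\Delta)\hat{\mathsf X}=\sum_{(r,b)\notin\EuScript I}O(1)\hat\transp(r,b)\cutlen(r,b)$ is exactly of the desired form. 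For the internal recursive contributions, I use property~\ref{item:grid_swap3} ($\tilde\transp(r,b)\ge\hat\transp(r,b)$ on local pairs) together with property~\ref{item:grid_swap4} ($\sum_{\EuScript I}(\tilde\transp-\hat\transp)\le 3\hat{\mathsf X}$) to pay for the surplus: the extra $\tilde\transp(r,b)-\hat\transp(r,b)$ units on a local pair get charged $\cutlen(r,b)\le$ (diameter of a cell of this grid) $=O(\Delta)$, again absorbed into the non-local term at this level. Choosing $c_2$ large enough that it dominates the constants from Lemma~\ref{lemma:grid_swap}, Lemma~\ref{lemma:grid_delta_close_error}, the cell-diameter bound, and the constant $3$ from property~\ref{item:grid_swap4}, the induction closes. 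The main obstacle I anticipate is the bookkeeping around split points and making precise the claim that $\cutlen$ of the subproblem $\rinst_\pi$ coincides with the restriction of the global $\cutlen$ (so that the inductive sums telescope correctly across levels); once that identification is nailed down, the rest is a routine accumulation of per-level additive errors.
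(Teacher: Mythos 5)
Your outline tracks the paper's proof closely: same induction, same use of Lemma~\ref{lemma:grid_swap} to deform $\hat{\transp}$ into $\tilde{\transp}$, same $\delta$-closeness chain for the external subproblem (you even reproduce $\tfcost(\transp_{ex}) \le \tfcost(\tilde{\transp}_{ex}) + 2\sqrt{2}\Delta\mathsf{X}$, matching the paper), and same idea of charging the surplus from Lemma~\ref{lemma:grid_swap}\ref{item:grid_swap4} to the non-local term. But there is a genuine quantitative gap in the charging step, and you have misidentified where the real difficulty lies.

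You write that for the surplus local units you will charge
\[
  \cutlen(r,b) \le (\text{diameter of a cell of this grid}) = O(\Delta),
\]
and absorb this into the non-local term at this level. This is not strong enough. The inductive sum you are left with is (roughly)
\[
  c_2 \sum_{(r,b)\in\EuScript{I}} \bigl(\tilde{\transp}(r,b)-\hat{\transp}(r,b)\bigr)\cutlen(r,b)
  \;+\; c\Delta\hat{\mathsf{X}}
  \;-\; c_2 \sum_{(r,b)\notin\EuScript{I}} \hat{\transp}(r,b)\cutlen(r,b),
\]
where the last sum equals $c_2\Delta\hat{\mathsf{X}}$ since $\cutlen(r,b)=\Delta$ on non-local pairs. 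By Lemma~\ref{lemma:grid_swap}\ref{item:grid_swap4}, the surplus totals at most $3\hat{\mathsf{X}}$. If all you know is $\cutlen(r,b)\le C\Delta$ on local pairs with $C\ge 1/3$ (and $C=\sqrt{2}$ for the cell diameter), the first term can be as large as $3C\,c_2\Delta\hat{\mathsf{X}} \ge c_2\Delta\hat{\mathsf{X}}$, and the expression does not go nonpositive. Crucially, raising $c_2$ does not help, since the surplus term and the budget term both scale linearly in $c_2$; you need the coefficient $3C$ to be \emph{strictly less than} $1$.

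The paper makes essential use of the recursion's geometric shrinkage here: for $(r,b)\in\EuScript{I}$, the first grid to split them lives in some descendant subproblem, and since each non-base subproblem has at least $n^{\eps/4}$ points, its cell side length is at most $\Delta/(n^{\eps/4})^\delta \le \Delta/4$. So the correct bound is $\cutlen(r,b) \le \Delta/4$ on local pairs, giving coefficient $3/4 < 1$, and then $c_2 \ge 40\sqrt{2}$ suffices to absorb the constant $10\sqrt{2}\Delta\hat{\mathsf{X}}$ as well. This factor-of-four drop per level is the load-bearing fact, not the split-point bookkeeping you flagged as the main obstacle; that bookkeeping is handled by the multiset convention and the definition of $\cutlen$, exactly as you surmised.
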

An immediate corollary of Lemmas \ref{lemma:grid_expected_cut} and \ref{lemma:grid_expected_cost} is:
\begin{corollary}
\label{corollary:grid_approx_bs}
	If $\inst$ has bounded spread, then 
	$\mathsf{E}[\tfcost(\transp)] = O(\log (1/\eps)) \tfcost(\transp^*)$.
\end{corollary}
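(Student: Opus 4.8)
The statement is an immediate consequence of Lemma~\ref{lemma:grid_expected_cut} and Lemma~\ref{lemma:grid_expected_cost}, so the plan is simply to chain the two. First I would apply Lemma~\ref{lemma:grid_expected_cost} with the arbitrary map $\hat{\transp}$ taken to be (a lift to the split multiset of) an optimal transportation map $\transp^*$ of the root instance $\inst$. The hypothesis of that lemma --- that $\inst$ has bounded spread --- is exactly the hypothesis of the statement, so the application is legitimate; and since splitting a point only creates copies sitting at the same location, the lift of $\transp^*$ has cost exactly $\tfcost(\transp^*)$. This gives, for every outcome of the algorithm's random shifts,
\[
	\tfcost(\transp) \;\le\; \tfcost(\transp^*) \;+\; c_2 \sum_{(r,b)\in\reds\times\blues} \transp^*(r,b)\,\cutlen(r,b).
\]

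Next I would take expectations of both sides over the random shifts. The quantity $\tfcost(\transp^*)$ is a fixed number depending only on the input, whereas $\tfcost(\transp)$ and each $\cutlen(r,b)$ are random, so by linearity of expectation and then Lemma~\ref{lemma:grid_expected_cut},
\[
	\mathsf{E}[\tfcost(\transp)] \;\le\; \tfcost(\transp^*) + c_2 \sum_{(r,b)} \transp^*(r,b)\,\mathsf{E}[\cutlen(r,b)] \;\le\; \tfcost(\transp^*) + c_1 c_2 \log_2(1/\eps) \sum_{(r,b)} \transp^*(r,b)\,\dist{r}{b}.
\]
The last sum is precisely $\tfcost(\transp^*)$ by definition of the cost of a transportation map, hence $\mathsf{E}[\tfcost(\transp)] \le (1 + c_1 c_2 \log_2(1/\eps))\,\tfcost(\transp^*)$, which is $O(\log(1/\eps))\,\tfcost(\transp^*)$ (using, say, $\eps \le 1/2$ so that $\log_2(1/\eps) \ge 1$ and the additive $1$ is absorbed).

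There is essentially no obstacle in the argument; the one point I would take a sentence to justify is the bookkeeping around point-splitting. The index set $\reds\times\blues$ and the map $\transp^*$ in the displayed inequality are the split-multiset versions, and the multiset itself is determined by the (random) run, so the expectation step is not literally a term-by-term statement over a fixed index set. I would make it precise by accounting per unit of $\transp^*$: each unit travels between two fixed original locations, and in any run it is carried between copies located at exactly those two points, so the $\cutlen$ value it incurs is bounded in expectation by $c_1\log_2(1/\eps)$ times that fixed distance via Lemma~\ref{lemma:grid_expected_cut} (which is stated for an arbitrary pair of the multisets and whose bound depends only on the two locations). Summing these per-unit bounds over all units of $\transp^*$ recovers the inequality above, which completes the proof.
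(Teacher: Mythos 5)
Your proposal is exactly the paper's intended argument: the paper presents this corollary as an immediate consequence of Lemmas~\ref{lemma:grid_expected_cut} and~\ref{lemma:grid_expected_cost}, obtained by instantiating $\hat{\transp}=\transp^*$, taking expectations, and recognizing $\sum_{(r,b)}\transp^*(r,b)\dist{r}{b}=\tfcost(\transp^*)$. Your extra care with the per-unit accounting over the split multiset is a correct and welcome tightening of a step the paper leaves implicit.
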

\begin{proof}[Proof of Lemma~\ref{lemma:grid_expected_cost}]
	We prove the lemma by induction on the number of points in the subproblem.
	If $\inst$ is a base problem, then $\transp$ is an optimal transport 
	of $\inst$ and the lemma holds.
	Otherwise $\Box$, the smallest square containing $\reds \cup \blues$,
	is split into a set of grid cells.
	Following the notation in Section~\ref{subsection:grid_high_level},
	let $\Pi$ be the set of non-empty cells
	and $\Delta$ the side length of each grid cell.

	Recall that $\transp$ is the combination of solutions $\transp_\pi$ 
	for the internal subproblems
	$\inst_\pi = (\reds_\pi, \blues_\pi, \tsupply_\pi)$ of $\pi \in \Pi$, 
	and the map $\transp_{ex}$
	for $\inst_{ex} = (\reds_{ex}, \blues_{ex}, \tsupply_{ex})$ derived from the 
	solution $\transp_\Box$ to the external subproblem 
	$\inst_\Box = (\reds_\Box, \blues_\Box, \tsupply_\Box)$.
	From Observation~\ref{observation:grid_shift_close}, 
	$\inst_\Box$ and $\inst_{ex}$ are $(\Delta/\sqrt{2})$-close;
	thus Lemma~\ref{lemma:grid_delta_close_error} implies
	$\tfcost(\transp_\Box) \leq \tfcost(\transp_{ex}) + \sqrt{2} \Delta \mathsf{X}$.
	We have
	\begin{equation}
	\label{equation:grid_cost1}
		\tfcost(\transp) 
		= \tfcost(\transp_{ex})
			+ \sum_{\pi \in \Pi} \tfcost(\transp_\pi) 
		\leq \tfcost(\transp_\Box) 
			+ \sqrt{2} \Delta \mathsf{X}
			+ \sum_{\pi \in \Pi} \tfcost(\transp_\pi).
	\end{equation}
	Thus, using (\ref{equation:grid_cost1}), we can bound
	$\transp$ by bounding the local ($\transp_\pi$) and 
	non-local ($\transp_\Box$) solutions individually.

	Let $\tilde{\transp}$ be the transportation map created by deforming
	$\hat{\transp}$ in Lemma~\ref{lemma:grid_swap},
	with $\tilde{\transp}_\pi$ and $\tilde{\transp}_{ex}$
	its restrictions to local and non-local pairs of points respectively.
	To bound $\transp_\Box$, notice that that $\tilde{\transp}_{ex}$ 
	solves $\inst_{ex} = \tinst{}{_{ex}}$, and $\inst_{ex}$ is 
	$(\Delta/\sqrt{2})$-close to $\inst_\Box$.
	We apply Lemma~\ref{lemma:grid_delta_close_error} and 
	optimality of $\transp_\Box$ to conclude,
	\begin{equation}
	\label{equation:grid_cost6}
		\tfcost(\transp_\Box)
			\leq \tfcost(\tilde{\transp}_{ex}) + \sqrt{2}\Delta \mathsf{X}.
	\end{equation}
	We now bound the local solutions.
	Since $\tilde{\transp}_\pi$ is a transportation map of the 
	internal subproblem $\inst_\pi$, by the induction hypothesis,
	\begin{equation}
	\label{equation:grid_cost7}
		\tfcost(\transp_\pi) \leq 
		\tfcost(\tilde{\transp}_\pi)
		+ c_2 \sum_{(r, b) \in \reds_\pi \times \blues_\pi} \tilde{\transp}_\pi(r, b) \cutlen(r, b).
	\end{equation}
	We can now combine (\ref{equation:grid_cost6}) and 
	(\ref{equation:grid_cost7}) to bound $\transp$ in (\ref{equation:grid_cost1}). 
	\begin{align*}
		\tfcost(\transp)
		&\leq \tfcost(\transp_\Box) 
			+ \sum_{\pi \in \Pi} \tfcost(\transp_\pi) 
			+ \sqrt{2}\Delta\mathsf{X} 
			& \\
		&\leq \tfcost(\tilde{\transp}_{ex}) 
			+ 2\sqrt{2}\Delta\mathsf{X}
			+ \sum_{\pi \in \Pi} 
				\left[ \tfcost(\tilde{\transp}_\pi) 
				+ c_2 \sum_{\pi \in \Pi} \sum_{(r, b) \in \reds_\pi \times \blues_\pi} \tilde{\transp}_\pi(r, b) \cutlen(r, b) 
				\right] 
			&\text{(by \eqref{equation:grid_cost6}, \eqref{equation:grid_cost7})} \\
		&= \tfcost(\tilde{\transp}) 
			+ c_2 \sum_{(r, b) \in \EuScript{I}} \tilde{\transp}(r, b) \cutlen(r, b)
			+ 2\sqrt{2}\Delta\mathsf{X}
			& \text{(Lem.~\ref{lemma:grid_swap}\ref{item:grid_swap12})} \\
		&\leq \tfcost(\hat{\transp})
			+ c_2 \sum_{(r, b) \in \EuScript{I}} \tilde{\transp}(r, b) \cutlen(r, b)
			+ 10\sqrt{2}\Delta\hat{\mathsf{X}}
			& \text{(Lem.~\ref{lemma:grid_swap}\ref{item:grid_swap12}, \ref{lemma:grid_swap}\ref{item:grid_swap5})} \\
		&= \tfcost(\hat{\transp})
			+ c_2 \sum_{(r, b) \in \reds \times \blues} \hat{\transp}(r, b) \cutlen(r, b)
			+ \Gamma,
			&
	\end{align*}
		$$ \text{where~} \Gamma = c_2 \sum_{(r, b) \in \EuScript{I}} \left(\tilde{\transp}(r, b) - \hat{\transp}(r, b)\right) \cutlen(r, b)
			+ 10\sqrt{2}\Delta\hat{\mathsf{X}}
			- c_2 \sum_{(r, b) \notin \EuScript{I}} \hat{\transp}(r, b) \cutlen(r, b).$$
	By definition, $\cutlen(r, b) = \Delta$ for $(r, b) \not\in \EuScript{I}$,
	$\cutlen(r, b) \leq \Delta/(n^{\eps/4})^\delta \leq \Delta/4$ for $(r, b) \in \EuScript{I}$,
	and $\sum_{(r, b) \notin \EuScript{I}} \hat{\transp}(r, b) = \hat{\mathsf{X}}$.
	Therefore, using Lemma~\ref{lemma:grid_swap}\ref{item:grid_swap4}, 
	\begin{equation*}
		\Gamma \leq c_2 \frac{\Delta}{4} \cdot 3 \hat{\mathsf{X}} + 10\sqrt{2}\Delta\hat{\mathsf{X}} - c_2\Delta\hat{\mathsf{X}}
			= \left(10 \sqrt{2} - \frac{c_2}{4}\right) \Delta \hat{\mathsf{X}} \leq 0.
	\end{equation*}
	provided that $c_2 \geq 40\sqrt{2}$.
	Hence,
	$\tfcost(\transp) \leq \tfcost(\hat{\transp}) + c_2 \sum_{(r, b)} \hat{\transp}(r, b) \cutlen(r, b)$.
	This completes the proof of the lemma.
\end{proof}

\mparagraph{The general case}
We now analyze the cost of the transportation map for the general case,
when the spread of $\reds \cup \blues$ is arbitrary.

Recall the categorization of recursive subproblems as primary, secondary, 
or tertiary based on the number of external subproblem invocations on 
its path in the recursion tree (see Figure~\ref{figure:grid_subproblems}).
We now introduce two functions 
$\cutlen_1, \cutlen_2: \reds \times \blues \to \reals_{\geq 0}$
corresponding to the errors introduced in the primary and secondary recursions.
(Note that tertiary subproblems are solved exactly; hence, there is 
no error introduced in solving a tertiary subproblem.)

The function $\cutlen_1$ corresponds to the primary recursion and is 
the same as the $\cutlen$ defined before, i.e., $\cutlen_1(r, b) = 0$ 
if $(r, b)$ belongs to a primary base subproblem, 
otherwise it is the length of the grid cell at the subproblem which 
splits $r$ and $b$.
The function $\cutlen_2(r, b)$ corresponds to the secondary recursion for 
$(r, b)$. 
If $(r, b)$ belongs to a primary base problem 
(i.e. does not appear in any secondary recursion), 
then we set $\cutlen_2(r, b) = 0$.
Otherwise, let $\bar{r} \in \reds_\Box$ and $\bar{b} \in \blues_\Box$ 
be the centers of the grid cells of the primary subproblem where 
$r$ and $b$ were split. Then, $\cutlen_2(r, b)$ is defined 
to be $\cutlen(\bar{r}, \bar{b})$ for 
the secondary recursion on $(\reds_\Box, \blues_\Box)$.
From this definition, we observe that:

\begin{lemma}
\label{lemma:grid_cutlen_prop}
	\begin{enumerate}[A.]
	\item \label{item:grid_cutlen_prop1}
		$\cutlen_2(r, b) = \cutlen(r_{\pi_1}, b_{\pi_2})$, for $(r, b) \in (\reds_{ex})_{\pi_1} \times (\blues_{ex})_{\pi_2}$
	\item \label{item:grid_cutlen_prop2}
		$\cutlen_1(r, b) = \Delta$ for $(r, b) \in \reds_{ex} \times \blues_{ex}$
	\item \label{item:grid_cutlen_prop3}
		$\cutlen_1(r, b), \cutlen_2(r, b) \leq \Delta/(n^{\eps/4})^\delta \leq \Delta/12$ for $(r, b) \in \reds_\pi \times \blues_\pi$
	\end{enumerate}
\end{lemma}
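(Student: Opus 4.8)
The plan is to verify each of the three claims directly from the definitions of $\cutlen_1$ and $\cutlen_2$ and from the construction of the internal/external subproblems in Section~\ref{subsection:grid_high_level}. All three parts are essentially bookkeeping: they identify, for a fixed recursive (primary) subproblem $\inst$ with grid cell side length $\Delta$, exactly which recursion layer first separates a given pair $(r,b)$, and how that layer's cell length relates to $\Delta$. I would open by fixing a primary subproblem $\inst = (\reds,\blues,\tsupply)$, its randomly shifted grid $\rsg$ with side length $\Delta$, the nonempty cells $\Pi$, and the induced partition into internal subproblems $\{\inst_\pi\}_{\pi\in\Pi}$ and the external subproblem $\inst_\Box$ (equivalently $\inst_{ex}$).

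For part~\ref{item:grid_cutlen_prop1}: take $(r,b)\in(\reds_{ex})_{\pi_1}\times(\blues_{ex})_{\pi_2}$. By construction the copies of $r$ and $b$ that lie in $\reds_{ex}$ and $\blues_{ex}$ are passed up to the external subproblem, and in forming $\inst_\Box$ they are contracted to the cell centers $r_{\pi_1}\in\reds_\Box$ and $b_{\pi_2}\in\blues_\Box$ respectively. Hence within the secondary recursion that the algorithm runs on $(\reds_\Box,\blues_\Box,\tsupply_\Box)$, the pair $(r,b)$ is represented precisely by $(r_{\pi_1},b_{\pi_2})$, so by the very definition of $\cutlen_2$ we get $\cutlen_2(r,b)=\cutlen(r_{\pi_1},b_{\pi_2})$ computed for that secondary recursion. (If $\pi_1=\pi_2$ then $r_{\pi_1}=b_{\pi_2}$ is impossible since a cell's extra supply is one-signed, so this case does not arise; if the pair lands in a secondary base subproblem both sides are $0$ and the identity still holds.)

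For part~\ref{item:grid_cutlen_prop2}: if $(r,b)\in\reds_{ex}\times\blues_{ex}$ then $r$ and $b$ were, by the definition of the extra-supply partition, placed in \emph{distinct} cells of $\rsg$ — any pair within a single cell is handled entirely inside that cell's internal subproblem, so a non-local pair must straddle a grid line. Therefore the first (primary) subproblem to separate $r$ and $b$ is exactly $\inst$ itself, whose cell side length is $\Delta$; by the definition of $\cutlen_1$ this gives $\cutlen_1(r,b)=\Delta$. For part~\ref{item:grid_cutlen_prop3}: if instead $(r,b)\in\reds_\pi\times\blues_\pi$ for some $\pi\in\Pi$, then $r$ and $b$ are \emph{not} split by $\rsg$ (they are local), so neither $\cutlen_1(r,b)$ nor $\cutlen_2(r,b)$ can be $\Delta$; both errors, if nonzero, are charged at some deeper level of the recursion, inside $\inst_\pi$ (which has $m_\pi\le m$ points but, crucially, sits inside a cell of diameter $O(\Delta)$). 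Since $\inst_\pi$ has at least $n^{\eps/4}$ points whenever it is not a base case, its bounding square has side length at most the cell diameter, and its own grid cells have side length at most $\Delta/(n^{\eps/4})^\delta$; any still-deeper subproblem has even smaller cells. Plugging $\delta=1/6$ and $n^{\eps/4}$ sufficiently large (which the algorithm assumes) yields $(n^{\eps/4})^\delta\ge 12$, hence $\cutlen_1(r,b),\cutlen_2(r,b)\le\Delta/12$.

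The only genuinely delicate point — and the one I would write most carefully — is part~\ref{item:grid_cutlen_prop3}, specifically the claim that the bounding square of the internal subproblem $\inst_\pi$ has side length bounded by (a constant times) $\Delta$ rather than by $\ell$: this is where the cell-diameter bound from the Observation after the definition of $\rsg$ is used, together with the fact that the grid is \emph{safe} so that no point of $\reds_\pi\cup\blues_\pi$ was separated from the rest of its cell. Once that geometric containment is in hand, the arithmetic $(n^{\eps/4})^{\delta}\ge 12$ is routine. The other two parts are immediate unravellings of the definitions and should take only a sentence or two each.
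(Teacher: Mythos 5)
Your proof is correct and matches the paper's intent: the paper states this lemma with the preface ``From this definition, we observe that'' and gives no explicit argument, treating all three items as direct unwrappings of the definitions of $\cutlen_1$, $\cutlen_2$ and of the internal/external partitioning. Your write-up fleshes out exactly that: part A is the definition of $\cutlen_2$ once one notes a cell's extra supply is monochromatic (so $\pi_1\neq\pi_2$); part B follows because an external pair straddles a grid line of the current subproblem, which is therefore the first primary grid to separate it; part C follows because the pair is first split (in either the primary or the induced secondary recursion) only inside a subproblem whose bounding square has side length at most $\Delta$ and which, when it uses a grid at all, has more than $n^{\eps/4}$ points, making every such cell length at most $\Delta/(n^{\eps/4})^\delta\le\Delta/12$ under the paper's standing assumption on $n^\eps$.

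One small inaccuracy in your closing remarks: the bound on the side length of $\inst_\pi$'s bounding square does not need the safety of the grid nor the cell-diameter observation. Since $\reds_\pi\cup\blues_\pi\subseteq\pi$ by construction and $\pi$ is a square of side $\Delta$, the bounding square of $\inst_\pi$ trivially has side length at most $\Delta$ (not $O(\Delta)$ via the diameter $\sqrt{d}\Delta$), and the same containment propagates down both the primary recursion inside $\pi$ and the secondary recursion on cell centers that sit inside the corresponding bounding box. Safety plays no role in part C; it is only relevant for Lemma~\ref{lemma:grid_expected_cut}, where it controls how many subproblems can be relevant to a given pair. The rest of your argument, including the caveat for secondary base subproblems giving $\cutlen_2=0$, is sound.
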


We now state two lemmas that are counterparts of Lemmas 
\ref{lemma:grid_expected_cut} and \ref{lemma:grid_expected_cost}
for the general case.

\begin{lemma}
\label{lemma:grid_expected_cut_gen}
	For any pair $(r, b) \in \reds \times \blues$,
	$\mathsf{E}\left[\cutlen_2(r, b)\right] = O(\log^2 (1/\eps)) \dist{r}{b}$.
\end{lemma}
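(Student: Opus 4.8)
The plan is to bound $\mathsf{E}[\cutlen_2(r,b)]$ by conditioning on the primary recursion and then invoking Lemma~\ref{lemma:grid_expected_cut} on the secondary recursion. Fix $(r,b) \in \reds \times \blues$. By definition, $\cutlen_2(r,b)$ is nonzero only if $(r,b)$ is split in some primary subproblem $\rinst$; in that event $r$ lands in a cell $\pi_1$ with center $\bar r = r_{\pi_1}$, $b$ lands in a cell $\pi_2$ with center $\bar b = b_{\pi_2}$, and $\cutlen_2(r,b) = \cutlen(\bar r, \bar b)$ where this latter $\cutlen$ is taken in the secondary recursion on $(\reds_\Box, \blues_\Box, \tsupply_\Box)$. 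The key quantitative input is that, conditioned on the primary subproblem $\rinst$ (with cell side length $\Delta$) and on the event that $(r,b)$ gets split there, the distance between the two cell centers satisfies $\dist{\bar r}{\bar b} \le \dist{r}{b} + \sqrt{2}\,\Delta \le \dist{r}{b} + 2\Delta$ by the triangle inequality, since each point is within $\Delta/\sqrt2$ of its cell center (Observation~\ref{observation:grid_shift_close}). Applying Lemma~\ref{lemma:grid_expected_cut} to the secondary recursion gives $\mathsf{E}[\cutlen(\bar r,\bar b) \mid \rinst,\, (r,b)\text{ split}] \le c_1 \log_2(1/\eps)\, \dist{\bar r}{\bar b} \le c_1 \log_2(1/\eps)\,(\dist{r}{b} + 2\Delta)$.

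The next step is to remove the dependence on $\Delta$ by taking the expectation over which primary subproblem splits $(r,b)$. Unrolling the primary recursion, we have
\begin{equation*}
	\mathsf{E}[\cutlen_2(r,b)] \le \sum_{\rinst} \Pr[\rinst\text{ created and splits }(r,b)] \cdot c_1 \log_2(1/\eps)\,(\dist{r}{b} + 2\Delta_{\rinst}),
\end{equation*}
where the sum ranges over primary subproblems and $\Delta_{\rinst}$ is the cell side length of $\rinst$. The term $\Pr[\cdots]\cdot \Delta_{\rinst}$ is exactly the ``expected contribution'' of $\rinst$ to $\cutlen_1(r,b)$ bounded in the proof of Lemma~\ref{lemma:grid_expected_cut} by $3\dist{r}{b}$, and summing over the $O(\log(1/\eps))$ relevant primary subproblems shows $\sum_{\rinst}\Pr[\cdots]\cdot \Delta_{\rinst} = \mathsf{E}[\cutlen_1(r,b)] = O(\log(1/\eps))\,\dist{r}{b}$, by Lemma~\ref{lemma:grid_expected_cut} itself. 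Likewise $\sum_{\rinst}\Pr[\cdots] \le 1$ since the events ``$\rinst$ is created and splits $(r,b)$'' are mutually exclusive over primary $\rinst$. Hence the displayed bound becomes
\begin{equation*}
	\mathsf{E}[\cutlen_2(r,b)] \le c_1\log_2(1/\eps)\big(\dist{r}{b} + 2\cdot O(\log(1/\eps))\,\dist{r}{b}\big) = O(\log^2(1/\eps))\,\dist{r}{b},
\end{equation*}
which is the claim.

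The one subtlety — and the step I expect to require the most care — is the application of Lemma~\ref{lemma:grid_expected_cut} to the secondary recursion. That lemma is stated for $(r,b)\in\reds\times\blues$ of an \emph{input} instance, whereas here it is invoked on the instance $\rinst_\Box = (\reds_\Box,\blues_\Box,\tsupply_\Box)$ sitting at an internal node of the primary recursion, with $(\bar r,\bar b)$ a pair of its (artificial) points. One must check that the proof of Lemma~\ref{lemma:grid_expected_cut} only uses properties of the instance on which it runs — the safe-grid sampling, the bound on the number of relevant subproblems in terms of the point count of \emph{that} instance, and so on — and not any global property of the original input. The bound $O(\log(1/\eps))$ on the number of relevant subproblems in Lemma~\ref{lemma:grid_expected_cut} is derived from partitioning $[n^{\eps/4}, n]$ using the global $n$; for the secondary recursion, $\rinst_\Box$ has at most $n$ points as well (indeed $O(m^{2\delta})$), so the same partition into $O(\log(1/\eps))$ categories applies verbatim and the bound carries over. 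Care is also needed that the conditioning on the primary recursion is independent of the fresh random shifts used inside the secondary recursion, so that the conditional expectation statement of Lemma~\ref{lemma:grid_expected_cut} is legitimately applicable; this holds because each recursive subproblem samples its random shift independently.
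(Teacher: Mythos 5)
Your proof is correct and follows essentially the same route as the paper: condition on the primary recursion (equivalently, on $\cutlen_1(r,b)$), bound $\dist{\bar{r}}{\bar{b}}$ by $\dist{r}{b} + O(1)\cdot\cutlen_1(r,b)$ via the cell diameter, apply Lemma~\ref{lemma:grid_expected_cut} to the secondary recursion to get the conditional expectation, then take expectations and apply Lemma~\ref{lemma:grid_expected_cut} again to $\mathsf{E}[\cutlen_1(r,b)]$. The explicit sum over primary subproblems is just an unrolled form of the paper's tower-property step, and your closing remarks correctly identify (and resolve) the point the paper leaves implicit — that Lemma~\ref{lemma:grid_expected_cut} applies verbatim to the instance $\rinst_\Box$ with its own, fresh random shifts.
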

\begin{proof}
	Let $\bar{r}, \bar{b}$ be the grid centers as defined earlier; then 
	$\dist{\bar{r}}{\bar{b}} \leq \dist{r}{b} + 2\sqrt{2} \cutlen_1(r, b)/2$.
	We first prove the expectation of $\cutlen_2(r, b)$ 
	conditioned on the value of $\cutlen_1(r, b)$.
	By Lemma~\ref{lemma:grid_expected_cut},
	\begin{equation*}
		\mathsf{E}\left[\cutlen_2(r, b) \mid \cutlen_1(r, b)\right]
			= O(\log (1/\eps)) \dist{\bar{r}}{\bar{b}}
			\leq O(\log (1/\eps)) \left(\dist{r}{b} + \sqrt{2}\cutlen_1(r, b)\right).
	\end{equation*}
	By applying Lemma~\ref{lemma:grid_expected_cut} again, we obtain
	\begin{equation*}
		\mathsf{E}\left[\cutlen_2(r, b)\right]
			\leq O(\log (1/\eps)) \left[ \dist{r}{b} + \mathsf{E}\left[\cutlen_1(r, b)\right] \right]
			\leq O(\log^2 (1/\eps)) \dist{r}{b}. \qedhere
	\end{equation*}
\end{proof}

\begin{lemma}
\label{lemma:grid_expected_cost_gen}
	There exists a constant $c_3 > 0$ such that for any 
	transportation map $\hat{\transp}$ of $\inst$,
	\begin{equation*}
		\tfcost(\transp) \leq \tfcost(\hat{\transp}) + c_3 \sum_{(r, b) \in \reds \times \blues} \hat{\transp}(r, b) \left[\cutlen_1(r, b) + \cutlen_2(r, b)\right].
	\end{equation*}
\end{lemma}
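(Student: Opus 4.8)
The plan is to mimic the proof of Lemma~\ref{lemma:grid_expected_cost}, but now tracking two error functions $\cutlen_1$ and $\cutlen_2$ simultaneously, and using the fact that the external subproblem is now solved recursively (contributing its own $\cutlen_1$-type error inside the secondary recursion) rather than optimally. I would induct on the number of points in the subproblem $\inst$, with the base case (base subproblem, solved exactly) being immediate. For the inductive step, I would fix the randomly shifted grid producing cells $\Pi$ with side length $\Delta$, take an arbitrary map $\hat{\transp}$ of $\inst$, and apply Lemma~\ref{lemma:grid_swap} to deform it into $\tilde{\transp}$ with local parts $\tilde{\transp}_\pi$ and non-local part $\tilde{\transp}_{ex}$, satisfying properties \ref{item:grid_swap12}--\ref{item:grid_swap5}. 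As before, $\tfcost(\transp) \le \tfcost(\transp_\Box) + \sqrt{2}\Delta\mathsf{X} + \sum_\pi \tfcost(\transp_\pi)$, where $\transp_\Box$ solves the external subproblem $\inst_\Box$ and $\inst_\Box,\inst_{ex}$ are $(\Delta/\sqrt2)$-close.

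The key structural difference is how to bound $\tfcost(\transp_\Box)$. Since $\inst_\Box$ has bounded spread, it is solved by the bounded-spread algorithm, so I would apply Lemma~\ref{lemma:grid_expected_cost} (the bounded-spread cost bound, which holds deterministically for every comparison map) to $\transp_\Box$ with comparison map $\tilde{\transp}_{ex}'$ — the map on $\inst_\Box$ derived from $\tilde{\transp}_{ex}$ via $(\Delta/\sqrt2)$-closeness. This gives $\tfcost(\transp_\Box) \le \tfcost(\tilde{\transp}_{ex}') + c_2\sum_{(\bar r,\bar b)} \tilde{\transp}_{ex}'(\bar r,\bar b)\,\cutlen(\bar r,\bar b)$, and then Lemma~\ref{lemma:grid_delta_close_error} converts $\tfcost(\tilde{\transp}_{ex}')$ back to $\tfcost(\tilde{\transp}_{ex}) + \sqrt2\Delta\mathsf{X}$. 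Crucially, by Lemma~\ref{lemma:grid_cutlen_prop}\ref{item:grid_cutlen_prop1}, the quantity $\cutlen(\bar r_{\pi_1}, \bar b_{\pi_2})$ appearing here is exactly $\cutlen_2(r,b)$ for the underlying points $(r,b) \in (\reds_{ex})_{\pi_1}\times(\blues_{ex})_{\pi_2}$, and the redistributed transport amounts match, so $\sum \tilde{\transp}_{ex}'\,\cutlen = \sum_{(r,b)\in\reds_{ex}\times\blues_{ex}} \tilde{\transp}_{ex}(r,b)\,\cutlen_2(r,b)$. For the local pieces, the induction hypothesis applied to each $\inst_\pi$ bounds $\tfcost(\transp_\pi)$ in terms of $\tilde{\transp}_\pi$ and $\sum_{(r,b)\in\reds_\pi\times\blues_\pi}\tilde{\transp}_\pi(r,b)[\cutlen_1(r,b)+\cutlen_2(r,b)]$.

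Putting these together and invoking Lemma~\ref{lemma:grid_swap}\ref{item:grid_swap12} to recombine $\tfcost(\tilde{\transp}_{ex}) + \sum_\pi\tfcost(\tilde{\transp}_\pi) = \tfcost(\tilde{\transp}) \le \tfcost(\hat{\transp}) + 8\sqrt2\Delta\hat{\mathsf{X}}$, the bound becomes $\tfcost(\transp) \le \tfcost(\hat{\transp}) + c_3\sum_{(r,b)\in\EuScript{I}}\tilde{\transp}(r,b)[\cutlen_1+\cutlen_2](r,b) + \Gamma$, where $\Gamma$ collects: the $O(\Delta)\hat{\mathsf{X}}$ additive slack terms ($8\sqrt2 + \sqrt2 + \sqrt2 + \sqrt2 = 12\sqrt2$, roughly), the ``extra'' local transport $c_3\sum_{(r,b)\in\EuScript{I}}(\tilde{\transp}-\hat{\transp})[\cutlen_1+\cutlen_2]$ which by Lemma~\ref{lemma:grid_swap}\ref{item:grid_swap4} and Lemma~\ref{lemma:grid_cutlen_prop}\ref{item:grid_cutlen_prop3} ($\cutlen_1,\cutlen_2 \le \Delta/12$ on local pairs) is at most $c_3 \cdot (2\Delta/12) \cdot 3\hat{\mathsf X} = (c_3/2)\Delta\hat{\mathsf X}$, minus the credit $c_3\sum_{(r,b)\notin\EuScript{I}}\hat{\transp}(r,b)[\cutlen_1+\cutlen_2](r,b)$, which by Lemma~\ref{lemma:grid_cutlen_prop}\ref{item:grid_cutlen_prop2} (and $\cutlen_2\ge 0$) is at least $c_3\Delta\hat{\mathsf{X}}$. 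Choosing $c_3$ a sufficiently large constant (something like $c_3 \ge \max(2c_2, 30\sqrt2)$) forces $\Gamma \le 0$, completing the induction.

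The main obstacle I anticipate is the bookkeeping in the $\Gamma \le 0$ step: one must be careful that every additive $O(\Delta)\hat{\mathsf X}$ term is matched against the $\cutlen_1 = \Delta$ credit on non-local pairs, that the $\cutlen_2$ credit on non-local pairs is correctly ignored (only lower-bounded by zero, not by $\Delta$), and that the coefficient comparisons involving $c_2$ (inherited from Lemma~\ref{lemma:grid_expected_cost}) and $c_3$ are consistent — in particular verifying that the $\cutlen_2$ error generated by the secondary recursion, via the factor $c_2$, is itself absorbed, which may require $c_3 \ge c_2$ together with the $\Delta/12$ bound. The other delicate point is confirming that applying Lemma~\ref{lemma:grid_expected_cost} to $\transp_\Box$ is legitimate: $\inst_\Box$ genuinely has bounded spread (spread $O(n^\delta)$, stated in the algorithm description), so that lemma applies verbatim to any map comparison, and the $\cutlen$ it produces is measured inside the secondary recursion tree, matching the definition of $\cutlen_2$ precisely via Lemma~\ref{lemma:grid_cutlen_prop}\ref{item:grid_cutlen_prop1}.
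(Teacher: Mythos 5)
Your overall scaffolding matches the paper's: induct on subproblem size, deform $\hat{\transp}$ via Lemma~\ref{lemma:grid_swap}, bound $\transp_\Box$ by applying the bounded-spread cost bound (Lemma~\ref{lemma:grid_expected_cost}) and Lemma~\ref{lemma:grid_delta_close_error}, translate via Lemma~\ref{lemma:grid_cutlen_prop}\ref{item:grid_cutlen_prop1} to $\cutlen_2$ on $\reds_{ex}\times\blues_{ex}$, handle the local pieces by the inductive hypothesis, and then recombine. But there is a genuine gap in the final ``$\Gamma\leq 0$'' step.

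After recombining, your leftover contains (beyond the $O(\Delta)\hat{\mathsf{X}}$ slacks and the local-extra $(c_3/2)\Delta\hat{\mathsf{X}}$) the debt $c_2\sum_{(r,b)\notin\EuScript{I}}\tilde{\transp}(r,b)\,\cutlen_2(r,b)$ coming out of the external subproblem, which you need to match against the credit $c_3\sum_{(r,b)\notin\EuScript{I}}\hat{\transp}(r,b)[\cutlen_1(r,b)+\cutlen_2(r,b)]$. You propose to close this with ``$c_3\geq c_2$ together with the $\Delta/12$ bound,'' but this does not work: $\cutlen_2$ on non-local pairs is \emph{not} bounded by $O(\Delta)$ (only $\cutlen_1=\Delta$ there), so the $\cutlen_1$ part of the credit and the additive slack cannot absorb the $\cutlen_2$ debt, and $c_3 \geq c_2$ would only help if $\tilde{\transp}(r,b)\leq\hat{\transp}(r,b)$ held pointwise on non-local pairs --- which is false, since the Stage-1/Stage-2 redistribution in Lemma~\ref{lemma:grid_swap} can shift non-local transport onto pairs where $\hat{\transp}$ was zero, with completely different $\cutlen_2$ values. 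Only the \emph{total} non-local transport shrinks (Lemma~\ref{lemma:grid_swap}\ref{item:grid_swap5}), not any weighted version. The paper closes this gap with a structural argument you are missing: it decomposes the symmetric difference of $\hat{\transp}$ and $\tilde{\transp}$ into alternating $4$-cycles in which both $\hat{\transp}$ edges and at most one $\tilde{\transp}$ edge are non-local; passing to the merged centers the corresponding cycle becomes a triangle, and whenever a secondary grid cuts the $\tilde{\transp}$ non-local edge $(b',c')$ it must also cut one of $(a',b')$ or $(c',a')$, giving the triangle inequality $\cutlen_2(b,c)\leq\cutlen_2(a,b)+\cutlen_2(c,d)$; summing over the cycles yields $\sum_{\notin\EuScript{I}}\tilde{\transp}\,\cutlen_2\leq\sum_{\notin\EuScript{I}}\hat{\transp}\,\cutlen_2$, from which $\Gamma\leq0$ follows. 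Without this coupling between the $\tilde{\transp}$ and $\hat{\transp}$ non-local $\cutlen_2$-sums, no choice of constant $c_3$ rescues the argument.
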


The bound on the expected cost follows directly from the above two lemmas.

\begin{corollary}
\label{corollary:grid_approx_gen}
	$\mathsf{E}[\tfcost(\transp)] = O(\log^2 (1/\eps)) \tfcost(\transp^*)$.
\end{corollary}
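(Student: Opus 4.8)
The plan is to apply Lemma~\ref{lemma:grid_expected_cost_gen} with the optimal map $\transp^*$ playing the role of the arbitrary map $\hat{\transp}$, take expectations over the algorithm's random grid shifts, and then absorb the resulting $\cutlen_1$ and $\cutlen_2$ contributions using Lemma~\ref{lemma:grid_expected_cut} (for $\cutlen_1=\cutlen$) and Lemma~\ref{lemma:grid_expected_cut_gen} (for $\cutlen_2$). First I would observe that $\transp^*$ is a valid transportation map of $\inst$, so Lemma~\ref{lemma:grid_expected_cost_gen} gives, for every realization of the random shifts,
\begin{equation*}
	\tfcost(\transp) \le \tfcost(\transp^*) + c_3 \sum_{(r, b) \in \reds \times \blues} \transp^*(r, b)\bigl(\cutlen_1(r, b) + \cutlen_2(r, b)\bigr).
\end{equation*}

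Next I would take expectations of both sides. The point that deserves a sentence of care is the interchange of expectation with the summation: it is valid because the inequality above holds pointwise over the coin tosses of the algorithm (it is a deterministic consequence of whichever safe grids are drawn, not merely an in-expectation statement), because the right-hand side is nonnegative and deterministically bounded, and because $\transp^*$ depends only on the input instance and not on the algorithm's randomness, so each weight $\transp^*(r,b)$ pulls outside the expectation. (The only bookkeeping subtlety is the multiset convention for split points introduced before Lemma~\ref{lemma:grid_expected_cut}; lifting $\transp^*$ to these multisets does not change $\tfcost(\transp^*)$.) Linearity of expectation then yields
\begin{equation*}
	\mathsf{E}[\tfcost(\transp)] \le \tfcost(\transp^*) + c_3 \sum_{(r, b) \in \reds \times \blues} \transp^*(r, b)\bigl(\mathsf{E}[\cutlen_1(r, b)] + \mathsf{E}[\cutlen_2(r, b)]\bigr).
\end{equation*}

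Then I would plug in the two tail bounds. Since $\cutlen_1=\cutlen$, Lemma~\ref{lemma:grid_expected_cut} gives $\mathsf{E}[\cutlen_1(r,b)] = O(\log(1/\eps))\,\dist{r}{b}$, and Lemma~\ref{lemma:grid_expected_cut_gen} gives $\mathsf{E}[\cutlen_2(r,b)] = O(\log^2(1/\eps))\,\dist{r}{b}$; the second dominates. Using $\sum_{(r,b)}\transp^*(r,b)\,\dist{r}{b} = \tfcost(\transp^*)$, this gives
\begin{equation*}
	\mathsf{E}[\tfcost(\transp)] \le \tfcost(\transp^*) + O(\log^2(1/\eps)) \sum_{(r,b)} \transp^*(r,b)\,\dist{r}{b} = \bigl(1 + O(\log^2(1/\eps))\bigr)\tfcost(\transp^*) = O(\log^2(1/\eps))\,\tfcost(\transp^*),
\end{equation*}
which is the claimed bound. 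There is essentially no hard obstacle remaining here: the substantive work was already done in establishing the per-realization cost inequality and the expected-cut bounds, and what remains is the routine combination sketched above; the one thing to be explicit about is why expectation and the weighted sum may be exchanged, which is exactly the per-realization (rather than in-expectation) nature of Lemma~\ref{lemma:grid_expected_cost_gen} together with the deterministic choice of $\transp^*$.
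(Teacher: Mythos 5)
Your proposal is correct and is exactly the argument the paper intends: the paper states the corollary "follows directly" from Lemma~\ref{lemma:grid_expected_cost_gen} (applied with $\hat{\transp}=\transp^*$) together with the expected-cut bounds of Lemmas~\ref{lemma:grid_expected_cut} and~\ref{lemma:grid_expected_cut_gen}, which is precisely your combination. Your extra care about the pointwise nature of the cost inequality and the interchange of expectation with the sum is a valid filling-in of details the paper leaves implicit.
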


We now prove Lemma~\ref{lemma:grid_expected_cost_gen}, our main technical lemma.

\begin{proof}[Proof of Lemma~\ref{lemma:grid_expected_cost_gen}.]
	The proof is similar to Lemma~\ref{lemma:grid_expected_cost}, 
	except that $\transp_\Box$ is not an optimal solution of the external subproblem 
	$\inst_\Box = (\reds_\Box, \blues_\Box, \tsupply_\Box)$.
	Instead $\transp_\Box$ is computed recursively.
	Let $\inst_{ex} = \tinst{}{_{ex}}$ be the transportation instance 
	as defined in the proof of Lemma~\ref{lemma:grid_expected_cost},
	which is $(\Delta/\sqrt{2})$-close to $\inst_\Box$.
	Recall $\mathsf{X} = \sum_{\pi \in \Pi} \excess_\pi$ 
	is the total extra supply of $\inst$, and 
	$\hat{\mathsf{X}} = \sum_{(r, b) \notin \EuScript{I}} \hat{\transp}(r, b)$ 
	is total non-local transport in $\hat{\transp}$.
	We will first bound $\tfcost(\transp_\Box)$, then each $\tfcost(\transp_\pi)$.
	See Figure~\ref{figure:grid_cost_gen_notation}.
    
\begin{figure}
	\centering
	\includegraphics[width=0.5\linewidth]{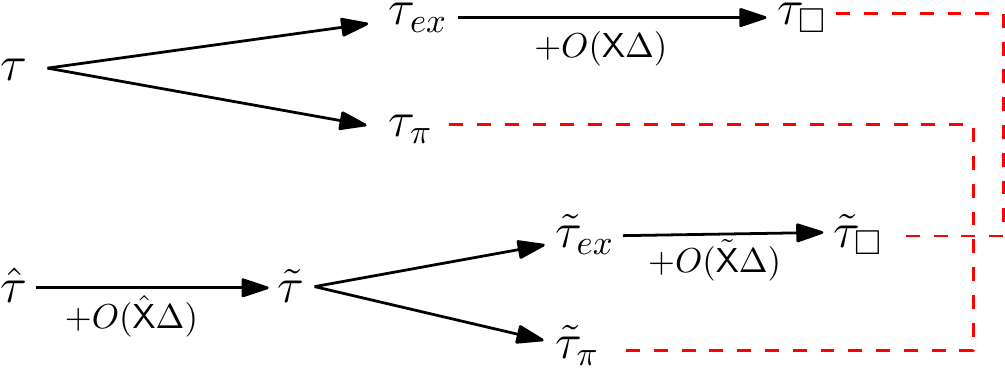}
	\caption{
		The different transportation maps used in the cost proof. 
		The dotted lines indicate the maps that are compared.
	}
	\label{figure:grid_cost_gen_notation}
\end{figure}

	Let $\tilde{\transp}_{ex}$ be a transportation map of $\inst_{ex}$ constructed from
	the deformation of $\hat{\transp}$ as in Lemma~\ref{lemma:grid_swap}.
	Let $\tilde{\transp}_\Box$ be the projection of $\tilde{\transp}_{ex}$ 
	onto the merged points of $\inst_\Box$,
	i.e. $\tilde{\transp}_\Box$ is a transportation map for $\inst_\Box$ 
	derived from $\tilde{\transp}_{ex}$.
	We begin by applying Lemma~\ref{lemma:grid_expected_cost}
	to $\transp_\Box$, since it has bounded spread.
	\begin{equation}
	\label{equation:grid_cost_gen6}
	\begin{aligned}
		\tfcost(\transp_\Box) 
		&\leq \tfcost(\tilde{\transp}_\Box) 
			+ c_2 \sum_{(r_{\pi_1}, b_{\pi_2}) \in \reds_\Box \times \blues_\Box} \tilde{\transp}_\Box(r_{\pi_1}, b_{\pi_2}) \cutlen(r_{\pi_1}, b_{\pi_2})
			& \text{(by L\ref{lemma:grid_expected_cost})} \\
		&\leq \tfcost(\tilde{\transp}_{ex}) 
			+ \sqrt{2}\Delta\mathsf{X}
			+ c_2 \sum_{\substack{(r_{\pi_1}, b_{\pi_2})\\ \in \reds_\Box \times \blues_\Box}} 
				\tilde{\transp}_\Box(r_{\pi_1}, b_{\pi_2}) \cutlen(r_{\pi_1}, b_{\pi_2})
			& \text{(by L\ref{lemma:grid_delta_close_error})} \\
		&= \tfcost(\tilde{\transp}_{ex}) 
			+ \sqrt{2}\Delta\mathsf{X}
			+ c_2 \sum_{\substack{(r_{\pi_1}, b_{\pi_2})\\ \in \reds_\Box \times \blues_\Box}} \cutlen(r_{\pi_1}, b_{\pi_2}) 
			\sum_{\substack{(r, b)\\ \in (\reds_{ex})_{\pi_1} \times (\blues_{ex})_{\pi_2}}} \tilde{\transp}_{ex}(r, b)
			& \\
		&= \tfcost(\tilde{\transp}_{ex}) 
			+ \sqrt{2}\Delta\mathsf{X}
			+ c_2 \sum_{(r, b) \in \reds_{ex} \times \blues_{ex}} \tilde{\transp}_{ex}(r, b) \cutlen_2(r, b).
			& \text{(by L\ref{lemma:grid_cutlen_prop}.\ref{item:grid_cutlen_prop1})}
	\end{aligned}
	\end{equation}
	This completes our bound on $\tfcost(\transp_\Box)$.
	For each $\transp_\pi$, we begin with the inductive hypothesis. 
	\begin{equation*}
		\tfcost(\transp_\pi) 
		\leq \tfcost(\tilde{\transp}_\pi) 
			+ c_3 \sum_{(r, b) \in \reds_\pi \times \blues_\pi} \tilde{\transp}_\pi(r, b) [\cutlen_1(r, b) + \cutlen_2(r, b)] 
	\end{equation*}
	In the next few steps, we replace $\tilde{\transp}_\pi$ terms with $\hat{\transp}_\pi$ 
	at the cost of $O(\Delta\hat{\mathsf{X}})$ over all $\pi \in \Pi$.
	\begin{align*}
		\sum_{\pi \in \Pi} \tfcost(\transp_\pi) 
		&\leq \sum_{\pi \in \Pi} \left(
			\tfcost(\tilde{\transp}_\pi)
			+ c_3 \sum_{(r, b) \in \reds_\pi \times \blues_\pi} \tilde{\transp}_\pi(r, b) [\cutlen_1(r, b) + \cutlen_2(r, b)] 
			\right)
			\\
		&= \sum_{\pi \in \Pi} 
			\!\begin{aligned}[t]
			\Bigg( 
			\tfcost(\tilde{\transp}_\pi)
			&+ c_3 \sum_{(r, b) \in \reds_\pi \times \blues_\pi} \hat{\transp}_\pi(r, b) [\cutlen_1(r, b) + \cutlen_2(r, b)] \\
			&+ c_3 \sum_{(r, b) \in \reds_\pi \times \blues_\pi} (\tilde{\transp}_\pi(r, b) - \hat{\transp_\pi}(r, b)) [\cutlen_1(r, b) + \cutlen_2(r, b)]
			\Bigg)
			\end{aligned} 
			\\
		&\leq 
			\!\begin{aligned}[t]
			&\sum_{\pi \in \Pi} \left(
				\tfcost(\tilde{\transp}_\pi)
				+ c_3 \sum_{(r, b) \in \reds_\pi \times \blues_\pi} \hat{\transp}_\pi(r, b) [\cutlen_1(r, b) + \cutlen_2(r, b)]
				\right) \\
			&+ c_3 \sum_{(r, b) \in \EuScript{I}} (\tilde{\transp}_\pi(r, b) - \hat{\transp}_\pi(r, b))[\cutlen_1(r, b) + \cutlen_2(r, b)] \\
			\end{aligned}
	\end{align*}
	The second term can be reduced using
	Lemma~\ref{lemma:grid_swap}\ref{item:grid_swap4} and
	Lemma~\ref{lemma:grid_cutlen_prop}.\ref{item:grid_cutlen_prop3}.
	\begin{equation*}
		c_3 \sum_{(r, b) \in \EuScript{I}} (\tilde{\transp}_\pi(r, b) - \hat{\transp_\pi}(r, b))[\cutlen_1(r, b) + \cutlen_2(r, b)]
		\leq c_3 \cdot (3\hat{\mathsf{X}}) (2\Delta / 12)
		= \frac{c_3}{2}\Delta\hat{\mathsf{X}}
	\end{equation*}
	Together, we have a bound on the $\transp_\pi$ cost.
	\begin{equation}
	\label{equation:grid_cost_gen7}
		\sum_{\pi \in \Pi} \tfcost(\transp_\pi) 
		\leq \sum_{\pi \in \Pi} \left(
			\tfcost(\tilde{\transp}_\pi)
			+ c_3 \sum_{(r, b) \in \reds_\pi \times \blues_\pi} \hat{\transp}_\pi(r, b) [\cutlen_1(r, b) + \cutlen_2(r, b)] 
		\right)
		+ \frac{c_3}{2}\Delta\hat{\mathsf{X}}
	\end{equation}
	As before, we combine the bounds (\ref{equation:grid_cost_gen6}) and 
	(\ref{equation:grid_cost_gen7}) on $\tfcost(\transp_\Box)$ 
	and $\tfcost(\transp_\pi)$ respectively.
	Starting from (\ref{equation:grid_cost1}),
	\begin{align*}
		\tfcost(\transp)
		&\leq \tfcost(\transp_\Box) + \sqrt{2}\Delta\mathsf{X} 
			+ \sum_{\pi \in \Pi} \tfcost(\transp_\pi)
			&\\
		&\leq \tfcost(\tilde{\transp}_{ex}) 
			\!\begin{aligned}[t]
			&+ c_2 \sum_{(r, b) \in \reds_{ex} \times \blues_{ex}} \tilde{\transp}_{ex}(r, b) \cutlen_2(r, b)
			+ \left(2\sqrt{2}\Delta\mathsf{X} + \frac{c_3}{2}\Delta\hat{\mathsf{X}} \right) \\
			&+ \sum_{\pi \in \Pi}\left(\tfcost(\tilde{\transp}_\pi) 
				+ c_3 \sum_{(r, b) \in \reds_\pi \times \blues_\pi} \hat{\transp}_\pi(r, b) [\cutlen_1(r, b) + \cutlen_2(r, b)]\right) \\
			\end{aligned} 
			&\\
		&= \tfcost(\tilde{\transp}) 
			\!\begin{aligned}[t]
			&+ c_2 \sum_{(r, b) \notin \EuScript{I}} \tilde{\transp}(r, b) \cutlen_2(r, b)
			+ \left(2\sqrt{2}\Delta\mathsf{X} + \frac{c_3}{2}\Delta\hat{\mathsf{X}} \right) \\
			&+ c_3\sum_{(r, b) \in \EuScript{I}} \hat{\transp}(r, b) [\cutlen_1(r, b) + \cutlen_2(r, b)]
			\end{aligned}
			&\\
		&\leq \tfcost(\hat{\transp}) 
			\!\begin{aligned}[t]
			&+ c_2 \sum_{(r, b) \notin \EuScript{I}} \tilde{\transp}(r, b) \cutlen_2(r, b)
			+ \left(10\sqrt{2}\Delta\hat{\mathsf{X}} + \frac{c_3}{2}\Delta\hat{\mathsf{X}} \right) \\
			&+ c_3\sum_{(r, b) \in \EuScript{I}} \hat{\transp}(r, b) [\cutlen_1(r, b) + \cutlen_2(r, b)]
			\end{aligned}
			& \text{(by L\ref{lemma:grid_swap}\ref{item:grid_swap12})}
	\end{align*}
	Next, we subtract and replace $\tilde{\transp}$ with $\hat{\transp}$ in the second term.
	\begin{align*}
		\tfcost(\transp) &\leq \tfcost(\hat{\transp}) 
			\!\begin{aligned}[t]
			&+ c_2\sum_{(r, b) \notin \EuScript{I}} \hat{\transp}(r, b) \cutlen_2(r, b)
			+ \left(10\sqrt{2}\Delta\hat{\mathsf{X}} + \frac{c_3}{2}\Delta\hat{\mathsf{X}} \right) \\
			&+ c_3\sum_{(r, b) \in \EuScript{I}} \hat{\transp}(r, b) [\cutlen_1(r, b) + \cutlen_2(r, b)] \\
			&+ c_2\sum_{(r, b) \notin \EuScript{I}} (\tilde{\transp}(r, b) - \hat{\transp}(r, b)) \cutlen_2(r, b)
			\end{aligned}
			\\
		&= \tfcost(\hat{\transp}) 
			\!\begin{aligned}[t]
			&+ c_2\sum_{(r, b) \notin \EuScript{I}} \hat{\transp}(r, b) \cutlen_2(r, b)
			+ \left(10\sqrt{2}\Delta\hat{\mathsf{X}} + \frac{c_3}{2}\Delta\hat{\mathsf{X}} \right) \\
			&+ c_3\sum_{(r, b) \in \EuScript{I}} \hat{\transp}(r, b) [\cutlen_1(r, b) + \cutlen_2(r, b)]
			+ \Gamma,
			\end{aligned}
			\\
		\text{where\quad}& 
		\Gamma = c_2 \sum_{(r, b) \notin \EuScript{I}} (\tilde{\transp}(r, b) - \hat{\transp}(r, b)) \cutlen_2(r, b).
	\end{align*}

	Recall that $\hat{\mathsf{X}}$ is exactly the non-local transportation of $\hat{\transp}$.
	Using Lemma~\ref{lemma:grid_cutlen_prop}.\ref{item:grid_cutlen_prop2},
	\begin{equation*}
		\left(10\sqrt{2}\Delta\hat{\mathsf{X}} + \frac{c_3}{2}\Delta\hat{\mathsf{X}} \right) 
		= (c_2 + 10\sqrt{2} + \frac{c_3}{2}) \sum_{(r, b) \notin \EuScript{I}} \hat{\transp}(r, b) \cutlen_1(r, b) , 
	\end{equation*}
	which we can merge into the non-local sum.
	Then,
	\begin{align*}
		\tfcost(\transp) 
		&\leq \tfcost(\hat{\transp}) 
			\!\begin{aligned}[t]
			&+ \left(c_2 + 10\sqrt{2} + \frac{c_3}{2}\right) 
				\sum_{(r, b) \notin \EuScript{I}} \hat{\transp}(r, b) [\cutlen_1(r, b) + \cutlen_2(r, b)] \\
			&+ c_3\sum_{(r, b) \in \EuScript{I}} \hat{\transp}(r, b) [\cutlen_1(r, b) + \cutlen_2(r, b)]
			+ \Gamma
			\end{aligned}
			\\
		&\leq \tfcost(\hat{\transp}) 
			+ c_3\sum_{(r, b) \notin \EuScript{I}} \hat{\transp}(r, b) [\cutlen_1(r, b) + \cutlen_2(r, b)]
			+ c_3\sum_{(r, b) \in \EuScript{I}} \hat{\transp}(r, b) [\cutlen_1(r, b) + \cutlen_2(r, b)]
			+ \Gamma
			\\
		&\leq \tfcost(\hat{\transp}) 
			+ c_3\sum_{(r, b) \in \reds \times \blues} \hat{\transp}(r, b) [\cutlen_1(r, b) + \cutlen_2(r, b)]
			+ \Gamma,
	\end{align*}
	by choosing $c_3 \geq (2c_2 + 20\sqrt{2}) \geq 100\sqrt{2}$.

	To complete the proof, it suffices to show that $\Gamma \leq 0$.
	Observe that the symmetric difference between individual units of transportation in $\hat{\transp}$ and 
	$\tilde{\transp}$ can be decomposed into a set of alternating 4-cycles, 
	where both edges of $\hat{\transp}$ and at most one edge of 
	$\tilde{\transp}$ are non-local 
	(c.f. construction in Lemma~\ref{lemma:grid_swap}).
	Let the points of the cycle be $a, b, c, d$, where
	$(b, c), (d, a) \in \tilde{\transp}$ and 
	$(a, b), (c, d) \in \hat{\transp}$.
	Without loss of generality, suppose $(b, c)$ is the 
	non-local $\tilde{\transp}$ edge.
	When $(a, b, c, d)$ is passed to the external subproblem,
	local edges are contracted -- thus the corresponding cycle
	on the merged points $a', b', c', d'$ is a triangle 
	$(a', b', c')$, and $a' = d'$.

	Suppose now that a grid of cell length $\Delta^*$ cuts 
	$(b', c')$ during the secondary recursion,
	i.e. $\cutlen_2(b, c) = \Delta^*$.
	Since $(a', b', c')$ is a triangle, one of $(a', b')$ or $(c', a')$
	is also cut by the same grid.
	This implies that $\cutlen_2(b, c) \leq \cutlen_2(a, b) + \cutlen_2(c, d)$.
	Summing over all the cycles and using Lemma~\ref{lemma:grid_swap}\ref{item:grid_swap3},
	\begin{equation*}
		\sum_{(r, b) \notin \EuScript{I}} \tilde{\transp}(r, b) \cutlen_2(r, b)
			\leq \sum_{(r, b) \notin \EuScript{I}} \hat{\transp}(r, b) \cutlen_2(r, b),
	\end{equation*}
	and $\Gamma \leq 0$, as desired.
	Then,
	\begin{equation*}
		\tfcost(\transp) \leq \tfcost(\hat{\transp}) 
			+ c_3\sum_{(r, b) \in \reds \times \blues} \hat{\transp}(r, b) [\cutlen_1(r, b) + \cutlen_2(r, b)]
	\end{equation*}
	which completes the proof.
\end{proof}

\subsection{An efficient implementation}
\label{subsection:grid_efficient}

We now explain how the various steps of the algorithm are implemented 
to run in $O(n^{1+\eps})$ time.
There are three main steps in the algorithm:
\begin{enumerate}
\item \label{item:grid_efficient1}
	partitioning a recursive subproblem $\rinst = \rtinst{}{}$ into 
	internal subproblems and an external subproblem;
\item \label{item:grid_efficient2}
	solving subproblems recursively;
\item \label{item:grid_efficient3}
	recovering the transportation map $\transp$ of $\rinst$ 
	from the internal and external solutions $\transp_\pi$ 
	($\pi \in \Pi$) and $\transp_\Box$.
\end{enumerate}
A recursive subproblem partitions its points into internal subproblems,
but generates an additional set of points (at cell centers) for its
external subproblem; let us call these \emph{external points}.
We bound the number of external points generated in the next lemma.
\begin{lemma}
\label{lemma:grid_efficient_external_count}
	The total number of external points over all recursive subproblems is $O(n)$.
\end{lemma}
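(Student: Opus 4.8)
The plan is to bound, for each recursive subproblem $\rinst$ on $m$ points, the number of external points it creates by $O(m^{2\delta}) = O(m^{1/3})$ --- at most one per nonempty grid cell $\pi \in \Pi$ --- and then to charge these external points against the input points of $\rinst$ in a way that telescopes geometrically down the recursion tree. First I would observe that $|\Pi| \le m$ trivially, but more usefully $|\Pi| = O([m^\delta]^d) = O(m^{1/3})$ by the first property in the Observation (there are only $O(m^{1/3})$ cells total in $\rsg(\Box,\Delta)$), so each subproblem on $m$ points spawns at most $O(m^{1/3})$ external points. Since each external point carries strictly positive excess supply, and the internal subproblem $\rinst_\pi$ keeps at least one point of $\rreds\cup\rblues$ from $\pi$ (the blue side is fully retained, and if $\rsupply(\rblues\cap\pi)=0$ then $\pi$ is a singleton red cell and one easily checks it still contributes $O(1)$), I can charge each external point to the $\Theta(m^{1/3})$-or-more "real" points sitting in the cells of $\Pi$ --- but this alone is not enough, because a subproblem on $m$ points passes roughly $m^{1-2\delta}$ internal points and $m^{2\delta}$ external points to the next level, and the external branch then recurses.

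The cleaner accounting is to look at the recursion tree and sum over levels. Let $S$ be a recursive subproblem with $m_S$ input points; it creates at most $c\,m_S^{1/3}$ external points for some constant $c$. The key structural fact (already used in the cost analysis and visible in Figure~\ref{figure:grid_primary_secondary}) is the primary/secondary/tertiary classification: every chain in the recursion contains at most two external invocations, and tertiary subproblems are solved directly by Orlin with no further recursion. So the external points are generated only when passing from a primary subproblem to its (unique) secondary child, or from a secondary subproblem to its (unique) tertiary child. For the first type: summing $c\,m_S^{1/3}$ over all primary subproblems $S$, I use the fact that the internal subproblems of any one primary subproblem partition its point set, so the total number of input points over all primary subproblems at a fixed recursion depth $h$ is at most $n$; since $m_S \ge n^{\eps/4}$ for non-base $S$ and each step shrinks $m$ by the factor $m^{-2\delta}$-ish, the recursion depth of the primary tree is $O(\log\log n)$ --- wait, I should instead just note that $\sum_S m_S^{1/3} \le (\sum_S m_S)^{?}$ is the wrong direction, so I would argue per level: at primary depth $h$, $\sum_{S} m_S \le n$, hence $\sum_S m_S^{1/3} \le |\{S\}|^{2/3} n^{1/3} \le n^{2/3}\cdot n^{1/3} = n$ by concavity (number of subproblems at a level is at most $n$). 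Summing over the $O(\log n)$ (in fact fewer) levels of the primary recursion gives $O(n\log n)$, which is too weak by a log.

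The honest fix, and the main obstacle, is to get the bound down to $O(n)$ rather than $O(n\log n)$. For this I would use the stronger per-subproblem bound: a subproblem on $m$ points produces at most $\min\{m,\ c\,m^{1/3}\}$ external points, and, crucially, a cell $\pi$ with no extra supply produces $0$ external points, while a cell with extra supply has at least one retained internal point unless it is a singleton --- and the number of singleton cells across all primary subproblems at a given depth is still at most $n$. Then, writing $P_h$ for the set of primary subproblems at depth $h$ and noting $\sum_{S\in P_h}|\rreds_S\cup\rblues_S| \le n$ together with $m_S > n^{\eps/4}$ so $m_S^{1/3} \le m_S / n^{\eps/12} \le m_S \cdot n^{-\eps/12}$, I get $\sum_{S\in P_h} c\,m_S^{1/3} \le c\,n^{1-\eps/12}$ per level; since there are $O(1/\eps \cdot \log n)$ primary levels (in fact $O(\log(1/\eps))$ relevant depth-classes as in Lemma~\ref{lemma:grid_expected_cut}, but even the crude $O(\log n)$ bound suffices), the geometric-in-$n^{-\eps/12}$ savings dominates the logarithmic number of levels and the total is $O(n^{1-\eps/12}\log n) = o(n)$ for the primary recursion. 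The secondary recursion is handled identically: each secondary subproblem has $O(n^\delta)$ points by the bounded-spread remark, there are at most as many secondary subproblems as primary external points (i.e. $o(n)$), and each spawns $O(n^\delta \cdot {}^{1/3})$ further points, so their contribution is also $o(n)$. Adding the $O(n)$ original input points and these $o(n)$ generated points over the whole tree yields the claimed $O(n)$ bound; the one place requiring care is confirming that singleton/zero-excess cells really do not break the charging, which is immediate from the construction of $(\rreds_{ex})_\pi$ in Section~\ref{subsection:grid_high_level} since $\excess_\pi = 0$ produces no external point at all.
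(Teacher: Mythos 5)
Your high-level strategy---bound the number of external points spawned by a single subproblem on $m$ points by the number of nonempty cells (at most $O(m^{1/3})$), then sum over the recursion tree---is a reasonable starting point, and your observation that one must avoid losing a $\log n$ factor is the right thing to worry about. However, the per-level charging you propose has a genuine gap that the paper's proof deliberately sidesteps.

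The central problem is the number of levels. You need the sum over levels $h$ of $\sum_{S\in P_h} c\, m_S^{1/3}$ to be $O(n)$, and your fix uses $m_S^{1/3} \le m_S\cdot n^{-\Theta(\eps)}$ (via $m_S > n^{\eps/4}$) to get a per-level bound of roughly $n^{1-\Theta(\eps)}$, claiming $O(\log n)$ levels so that the savings beat the level count. But the paper states explicitly, in the discussion preceding the data-structure implementation, that ``the depth of recursion can be as large as $\Omega(n)$'' in the general (unbounded-spread) case. Your citation of Lemma~\ref{lemma:grid_expected_cut} to argue for $O(\log(1/\eps))$ ``relevant depth-classes'' misapplies that lemma: it counts, for a single \emph{fixed pair} $(r,b)$, how many recursive subproblems could potentially separate that pair, not the global depth of the primary recursion. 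With $L = \Omega(n)$ levels, a bound of $n^{1-\Theta(\eps)}$ per level multiplies out to something far larger than $n$, so the argument does not close. (The bounded-spread case, where depth is $O(1/\eps)$, would go through; it is the general case that breaks.)

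The paper avoids any dependence on depth by not counting level-by-level at all. Instead it charges each external point (a cell center) either to an original input point, if the cell is monochromatic, or to the internal subproblem created for that cell, if the cell is bichromatic; since an original point can be charged at most once (once it enters the extracted set $\reds_{ex}\cup\blues_{ex}$ it never reappears), this gives a bound of $n$ plus the number of internal subproblems. The key lemma is then $I(m)\le 2m-1$, proved by induction on the recursion \emph{tree} rather than by level: if a subproblem has a single child, that child has $\le m-1$ points (at least one nonempty cell is monochromatic and its points are removed); if it has $k\ge 2$ children, their point counts sum to at most $m$, and a short telescoping computation gives $1+\sum(2m_i-1)\le 2m-k+1\le 2m-1$. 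This tree-size bound is robust to arbitrary depth precisely because the induction is on total subproblem count, not level. The secondary recursion is then handled by applying the same bound with the primary external points as the ``input'' to each bounded-spread secondary root. This charging-to-subproblems idea, and the $2m-1$ tree bound, is the missing ingredient in your plan; without it the per-level sum simply does not control the general case.
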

\begin{proof}
	We first prove the lemma in the case of bounded spread
	(external subproblems are solved exactly).
	Each external point $p$ is the center of a cell $\pi$ that is either
	bichromatic (contains points of both colors) 
	or monochromatic (only one color).
	If $\pi$ is monochromatic, 
	we charge $p$ to any of the $\reds_{ex} \cup \blues_{ex} \subseteq \reds \cup \blues$ 
	points in $\pi$ --- by construction, these points are omitted from the internal 
	subproblems and do not appear in any later recursive subproblems.
	If $\pi$ is bichromatic, 
	we charge $p$ to the internal subproblem created on $\pi$.
	Thus, the number of external points is at most
	$(n + [\text{\# internal subproblems}])$.

	When spread is bounded, 
	all internal subproblems are in the primary recursion.
	Consider a problem instance $\inst$ with $m$ points, and let $I(m)$
	denote the maximum number of internal subproblems 
	that can be recursively generated by it.
	We will inductively prove that $I(m) \leq 2m-1$. 
	If the subdividing grid for $\inst$ generates a single 
	internal subproblem, then that subproblem contains at most 
	$m-1$ points. By the inductive hypothesis, this subproblem 
	generates at most $2(m-1)-1$ points. Adding $\inst$ itself
	gives a total of $2m-2 < 2m-1$ internal subproblems.
	On the other hand, if the subdividing grid for $\inst$
	generates $k > 1$ internal subproblems with 
	$m_1, m_2, \ldots, m_k$ points respectively, then
	$\sum_{i=1}^k m_i \leq m$ since each point can be 
	inherited into at most one internal subproblem. 
	Using the inductive hypothesis on these subproblems
	gives a bound of $\sum_{i=1}^k (2m_i-1) \leq 2m-k$
	on the number of internal subproblems generated recursively. 
	Adding $\inst$ to this sum
	gives a total of $2m-k+1 \leq 2m-1$ internal subproblems.
	%
	It follows that for an input instance of bounded spread with
	$n$ points, the number of internal subproblems is at most $2n$,
	and the number of external points is at most $3n$.

	In the general case, we can apply a similar analysis
	for the subproblems of the primary recursion,
	although there are now external points and internal subproblems 
	in the secondary recursion.
	However, we can apply the bounded spread analysis to each 
	external subproblem that starts a secondary recursion.
	We have
	\begin{align*}
		[\text{\# secondary internal subproblems}] 
			&\leq 2 \cdot [\text{\# primary external points}] \\
			&\leq 2 \cdot (n + [\text{\# primary internal subproblems}]) \\
			&\leq 2(n + (2n))
			\leq 6n.
	\end{align*}
	Hence, the number of secondary external points is at most $(n + 6n) = 7n$,
	and the total number of external points is at most $3n + 7n = 10n = O(n)$.
\end{proof}

A base subproblem of size $n_i \leq n^{\eps/4}$ is solved in 
$O(n_i^3 \log n_i)$ time using Orlin's algorithm.
We distribute this on the $n_i$ points by charging 
$O(n_i^2 \log n_i) = O(n^{\eps/2} \log n)$ to each point.
Note every point in $\reds \cup \blues$, as well as every external point, 
belongs to at most one base subproblem.
Since, by Lemma~\ref{lemma:grid_efficient_external_count}, the
number of external points is $O(n)$, it follows that the total time spent
solving base subproblems is 
$O(n) \cdot O(n^{\eps/2} \log n) = O(n^{1+\eps})$.

The time spent in recovering the transportation map $\transp$ for $\rinst$ 
from its internal and external subproblems is proportional to the number of
external points in $\rinst$.
Hence, by Lemma~\ref{lemma:grid_efficient_external_count}, 
step~\ref{item:grid_efficient3} takes $O(n)$ time.

Finally, the implementation of step \ref{item:grid_efficient1} depends on whether 
the instance $\inst = \tinst{}{}$ has bounded spread.

\mparagraph{The bounded spread case}
In this case, step~\ref{item:grid_efficient1} is implemented naively.
We choose a random shift, distribute the points of $\rreds \cup \rblues$
among the grid cells in $O(m \log m)$ time, where $|\rreds \cup \rblues| = m$,
and check in additional $O(m)$ time whether the shift is safe.
So step~\ref{item:grid_efficient1} can be implemented in $O(m \log m)$ expected time.
We charge $O(\log m)$ time to each point of $\rreds \cup \rblues$.
Since the spread is $n^{O(1)}$, the depth of recursion is $O(1/\eps)$.
Therefore, each input point is charged $O(\frac{1}{\eps}\log n)$ units of time, 
implying that steps \ref{item:grid_efficient1} over all subproblems 
take $O(n\log n)$ expected time total (recall that $\eps$ is a constant). 
As the size of the external subproblem at $\rreds \cup \rblues$ is $O(m^{2 \delta})$, 
and $\delta = 1/6$, the time for solving it exactly is $O(m \log m)$. 
Again, over all levels of recursion, the total time spent solving external subproblems is~$O(n\log n)$.
Putting everything together and applying 
Corollary~\ref{corollary:grid_approx_bs}, we obtain the following.
\begin{theorem}
\label{theorem:grid_time_bs}
	Let $\inst$ be an instance of the transportation problem in $\reals^d$, where $d$ is a constant.
	Let $\inst$ have size $n$ and bounded spread, and let $\eps > 0$ be a constant.
	A transportation map of $\inst$ can be computed in $O(n^{1+\eps})$ 
	expected time whose expected cost is $O(\log (1/\eps))\tfcost(\transp^*)$,
	where $\transp^*$ is an optimal transport of $\inst$.
\end{theorem}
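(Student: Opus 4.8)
The plan is to synthesize the cost analysis of Section~\ref{subsection:grid_cost} with the implementation analysis of Section~\ref{subsection:grid_efficient}; essentially all of the real work is already contained in those sections, so the proof is mainly a matter of combining the pieces and checking the arithmetic. The approximation guarantee is exactly Corollary~\ref{corollary:grid_approx_bs}: apply Lemma~\ref{lemma:grid_expected_cost} with $\hat{\transp} = \transp^*$ to obtain $\tfcost(\transp) \le \tfcost(\transp^*) + c_2 \sum_{(r,b) \in \reds \times \blues} \transp^*(r,b)\,\cutlen(r,b)$, take expectations, and use linearity together with $\mathsf{E}[\cutlen(r,b)] \le c_1 \log_2(1/\eps)\,\dist{r}{b}$ from Lemma~\ref{lemma:grid_expected_cut} and $\sum_{(r,b)} \transp^*(r,b)\,\dist{r}{b} = \tfcost(\transp^*)$; this yields $\mathsf{E}[\tfcost(\transp)] = O(\log(1/\eps))\,\tfcost(\transp^*)$.

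For the running time I would split the work into the three steps of the algorithm --- (1) partitioning a subproblem into its internal subproblems and one external subproblem, (2) solving subproblems (with base and external subproblems solved directly by Orlin's algorithm), and (3) recovering $\transp$ from the subproblem solutions --- and control everything through Lemma~\ref{lemma:grid_efficient_external_count}, which bounds the total number of external points created over all recursive subproblems by $O(n)$. Since each of the $n$ input points and each external point lies in at most one base subproblem, and a base subproblem of $n_i \le n^{\eps/4}$ points is solved in $O(n_i^3 \log n_i)$ time, charging $O(n_i^2 \log n_i) = O(n^{\eps/2}\log n)$ to each of its points gives $O(n)\cdot O(n^{\eps/2}\log n) = O(n^{1+\eps})$ for the base subproblems, and step (3) at a node costs time proportional to that node's number of external points, hence $O(n)$ in total. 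For step (1) in the bounded-spread case, each level is handled naively: choose a random shift, bucket the $m$ points of the subproblem into grid cells in $O(m\log m)$ time, and test safety in $O(m)$ time; a random shift is safe with probability $\Omega(1 - 1/m^2) = \Omega(1)$ by the earlier observation, so the expected number of resamples is $O(1)$ and step (1) costs $O(m\log m)$ expected time per subproblem. Bounded spread forces the recursion depth to be $O(1/\eps)$ (discussed below), so each input point is charged over $O(1/\eps)$ levels for a total of $O(n\log n)$ for step (1); and since an external subproblem at a node with $m$ points has only $O(m^{2\delta}) = O(m^{1/3})$ points (as $\delta = 1/6$), solving it exactly with Orlin's algorithm costs $O(m\log m)$, which again sums to $O(n\log n)$ over all primary subproblems. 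Adding these, the expected running time is $O(n^{1+\eps}) + O(n) + O(n\log n) + O(n\log n) = O(n^{1+\eps})$, since $\eps$ is a constant and $n$ is large enough that $n^\eps$ dominates $\log n$; together with the approximation bound this proves the theorem.

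The only point in the proof that needs genuine care --- the deeper arguments being safely encapsulated in Lemmas~\ref{lemma:grid_swap}, \ref{lemma:grid_expected_cost}, \ref{lemma:grid_expected_cut}, and \ref{lemma:grid_efficient_external_count} --- is the claim that a bounded-spread instance produces only $O(1/\eps)$ levels of recursion. Here I would argue that a non-base primary subproblem with $m > n^{\eps/4}$ points, bounding square of side $\ell$, and cell side $\Delta = \ell/m^\delta$ has each of its internal subproblems contained in a single grid cell, hence with bounding square of side at most $\sqrt{d}\,\Delta \le \sqrt{d}\,\ell / n^{\eps\delta/4}$ --- the side length shrinks by at least a fixed polynomial factor $n^{\Omega(\eps)}$ with each level. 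Since every subproblem's bounding square has side at least the minimum interpoint distance of the original instance, and the ratio of the original side length to that distance is the spread $n^{O(1)}$, the recursion bottoms out after $O(1/\eps)$ levels. With this in hand, the running-time sum above goes through, completing the argument.
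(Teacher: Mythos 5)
Your proposal is correct and follows essentially the same route the paper takes: the approximation bound is exactly Corollary~\ref{corollary:grid_approx_bs} (Lemma~\ref{lemma:grid_expected_cost} with $\hat{\transp}=\transp^*$ plus Lemma~\ref{lemma:grid_expected_cut}), and the running-time accounting matches the paper's bounded-spread analysis in Section~\ref{subsection:grid_efficient} --- base cases charged at $O(n^{\eps/2}\log n)$ per point via Lemma~\ref{lemma:grid_efficient_external_count}, naive $O(m\log m)$ partitioning per level over $O(1/\eps)$ levels, $O(m^{1/3})$-sized external subproblems solved exactly in $O(m\log m)$, and $O(n)$ for recovery. The one place you go beyond the paper is spelling out why bounded spread forces $O(1/\eps)$ recursion depth (each internal subproblem's bounding-square side drops by a factor $\geq n^{\eps\delta/4}$, and it cannot fall below roughly the minimum interpoint distance while the subproblem is nontrivial), which the paper simply asserts; your justification is sound, up to an immaterial $\sqrt{2}$ slack in comparing bounding-square side to interpoint distance.
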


\mparagraph{The general case}
Let $\rinst = \rtinst{}{}$ be a recursive subproblem, and
$\Box$ the smallest square containing $\rreds \cup \rblues$. 
As defined earlier, let $\rinst_\Box = \rtinst{}{_\Box}$ be the external 
subproblem generated by $\rinst$ using the points of $\rreds_{ex}, \rblues_{ex}$. 
Since $\rinst_\Box$ has bounded spread and is solved recursively,
the running time to solve $\inst_\Box$ is 
$O(|\rreds_\Box \cup \rblues_\Box|^{1+\eps}) = O(|\rreds_{ex} \cup \rblues_{ex}|^{1+\eps})$.
Summing over all recursive problems, 
by Lemma~\ref{lemma:grid_efficient_external_count}, the total time spent in 
solving all external subproblems is $O(n^{1+\eps})$.

Step~\ref{item:grid_efficient1} is more challenging in this case because 
the depth of recursion can be as large as $\Omega(n)$.
We can neither spend linear time at a recursive subproblem, nor can we afford 
to visit each cell of the randomly shifted grid $\rsg$ explicitly 
to compute the set $\Pi$ of non-empty cells.
To avoid checking all cells of $\rsg$ explicitly, we (implicitly) construct a 
quad tree $\rsqt$ on $\rsg$ --- 
i.e. leaves of $\rsqt$ are cells of $\rsg$ and the root of $\rsqt$ is $\Box$.
The depth of $\rsqt$ is $O(\log m)$.
The role of $\rsqt$ will be to guide the search for non-empty cells 
of $\rsg$.

To avoid spending $\Omega(m)$ time in step~\ref{item:grid_efficient1}, 
we do not maintain the set $\rreds \cup \rblues$ explicitly.
We build a 2D dynamic orthogonal range searching data structure that 
maintains a set $X \subset \reals^2$ of weighted points,
and supports the following operations:
\begin{itemize}
\item $\Fn{Wt}(\gsprect)$: Given a rectangle $\gsprect$, return $w(X \cap \gsprect)$.
\item $\Fn{Report}(\gsprect, \Delta)$: Report a maximal subset $Y$ of 
	$X \cap \gsprect$ such that $w(Y) \leq \Delta$.
\item $\Fn{Empty}(\gsprect)$: Return $\textsc{Yes}$ if 
	$X \cap \gsprect = \emptyset$ and $\textsc{No}$ otherwise.
\item $\Fn{Delete}(p)$: Delete $p$ from $X$.
\item $\Fn{ReduceWt}(p, \Delta)$: Update $w(p) := w(p) - \Delta$, 
	assuming $w(p) \geq \Delta$.
\end{itemize}
Using a range-tree based data structure, each operation except for $\Fn{Report}$ can be 
performed in $O(\log^2 n)$ time~\cite{agarwal1999geometric}.
$\Fn{Report}$ requires $O(\log^2 n + k)$ time, 
where $k$ is the number of reported points.

We maintain two copies $\mathbb{D}_\reds, \mathbb{D}_\blues$ 
of this data structure.
The first one is initialized with $\reds$ and the supplies,
and the second with $\blues$ and the demands.
We use $\EuScript{\reds}$ and $\EuScript{\blues}$ to denote 
the current sets in these data structures.

With each recursive subproblem $\rinst = \rtinst{}{}$ 
we associate a bounding rectangle $\gsprect$ that contains 
$\rreds \cup \rblues$.
For the root problem, $\gsprect$ is the smallest square containing 
$\reds \cup \blues$; for others it is defined recursively.
We maintain the invariant that when the subproblem 
$\rinst = \rtinst{}{}$ is being processed,
\begin{itemize}
\item $\EuScript{\reds} \cap \gsprect = \rreds$ and for any 
	$r \in \EuScript{\reds} \cap \gsprect, w(r) = \rsupply(r)$,
\item $\EuScript{\blues} \cap \gsprect = \rblues$ and for any
	$b \in \EuScript{\blues} \cap \gsprect, w(b) = \rsupply(b)$.
\end{itemize}

We first compute $\Pi$, the set of non-empty cells of $\rsg$, using $\rsqt$ 
and the data structures $\mathbb{D}_\reds, \mathbb{D}_\blues$.
We visit $\rsqt$ in a top-down manner.
Suppose we are at a node $v \in \rsqt$, 
and let $\Box_v$ be the square associated with $v$.
We call $\Fn{Empty}(\gsprect \cap \Box_v)$ on both $\mathbb{D}_\reds$ and 
$\mathbb{D}_\blues$ to check whether 
$(\rreds \cup \rblues) \cap \Box_v = \emptyset$.
If \textsc{Yes}, we ignore the subtree rooted at $v$.
If \textsc{No} and $v$ is a leaf of $\rsqt$, i.e., 
$\Box_v$ is a nonempty cell of $\rsg$, we add $\Box_v$ to $\Pi$.
If $v$ is an internal node and 
$(\rreds \cup \rblues) \cap \Box_v \neq \emptyset$,
we recursively search the children of $v$ in $\rsqt$. 
Since the depth of $\rsqt$ is $O(\log n)$, the above procedure visits 
$O(|\Pi| \log n)$ nodes of $\rsqt$.
The total time spent in computing $\Pi$ is thus $O(|\Pi| \log^3 n)$.

For each cell $\pi \in \Pi$, we can compute the total demands of
$\rsupply (\rreds \cap \pi)$ and $\rsupply (\rblues \cap \pi)$ --- 
and thus $\excess_\pi$ --- using the $\Fn{Wt}(\gsprect \cap \pi)$ 
operations on $\mathbb{D}_\reds, \mathbb{D}_\blues$.
Without loss of generality, 
assume $\rsupply (\rreds \cap \pi) > \rsupply (\rblues \cap \pi)$.
Using $\Fn{Report}(\gsprect \cap \pi, \excess)$, we report a 
maximal subset of points of $\rreds \cap \pi$ whose total weight 
is at most $\excess$.
We then delete each of these points (by $\Fn{Delete}$)
and reduce the weight (by $\Fn{Reduce}$) of one additional point in 
$\rreds \cap \pi$ or $\rblues \cap \pi$ if needed.
Let $\tinst{}{_\pi}$ be the recursive (internal) subproblem generated 
for $\pi$ with $\gsprect_\pi = \gsprect \cap \pi$ as the associated rectangle.
Then the above update operation ensures that 
$\EuScript{\reds} \cap \gsprect_\pi = \rreds_\pi$, 
$\EuScript{\blues} \cap \gsprect_\pi = \rblues_\pi$, 
and their weights are consistent with $\rsupply_\pi$.

$\mathbb{D}_\reds$ and $\mathbb{D}_\blues$ can also be used to test whether 
the random shift is safe:
For each $\pi \in \Pi$, we check whether the moat of any point in 
$(\rreds \cup \rblues) \cap \pi$ intersects an edge of $\pi$.
This is equivalent to checking whether 
$\qtsquare_\pi \cap (\rreds \cup \rblues) = \emptyset$,
where $\qtsquare_\pi$ is the set of points that are within distance $\ell/m^3$
from the boundary of $\pi$.
This test can be done in $O(\log^2 n)$ time using the $\Fn{Empty}$ procedure.
The total expected time spent in generating internal subproblems $\rinst_\pi$ 
and external subproblems $\rinst_\Box$ of $\inst$ is 
$O(|\Pi|\log^3 n + m_\Box \log^2 n)$,
where $m_\Box$ is the total number of external points.

By Lemma~\ref{lemma:grid_efficient_external_count}, 
the total number of nonempty cells over all subproblems is $O(n)$, 
and the number of external points is $O(n)$.
Thus, the expected time spent in step~\ref{item:grid_efficient1} overall is $O(n\log^3 n)$.
Putting everything together, we obtain the main result of this section.

\begin{theorem}
\label{theorem:grid_time_gen}
	Let $\inst$ be an instance of the transportation problem in $\reals^d$, where $d$ is a constant.
	Let $\inst$ have size $n$, and let $\eps > 0$ be a constant.
	A transportation map of $\inst$ can be computed in $O(n^{1+\eps})$ 
	expected time whose expected cost is $O(\log^2 (1/\eps))\tfcost(\transp^*)$,
	where $\transp^*$ is an optimal transport of $\inst$.
\end{theorem}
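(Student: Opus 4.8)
The plan is to treat the cost guarantee and the running‑time guarantee separately, as both are now essentially consequences of the preceding development. For the cost, nothing new is needed: Corollary~\ref{corollary:grid_approx_gen} already gives $\mathsf{E}[\tfcost(\transp)] = O(\log^2(1/\eps))\tfcost(\transp^*)$ for the map $\transp$ returned by the algorithm. So the only remaining task is to bound the expected running time by $O(n^{1+\eps})$, and I would do this by charging separately to the three steps of Section~\ref{subsection:grid_efficient}: partitioning a subproblem, solving subproblems recursively, and reconstructing $\transp$.

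For the recursion leaves and the external subproblems, the bounds are already in place. A base subproblem has at most $n^{\eps/4}$ points and is solved by Orlin's algorithm in $O(n_i^3\log n_i)$ time; I amortize this as $O(n^{\eps/2}\log n)$ per point, and since each point of $\reds\cup\blues$ and each external point lies in at most one base subproblem, Lemma~\ref{lemma:grid_efficient_external_count} (which bounds the number of external points by $O(n)$) gives a total of $O(n)\cdot O(n^{\eps/2}\log n) = O(n^{1+\eps})$. Every external subproblem $\inst_\Box$ has spread $O(n^{\delta})$, hence falls under the bounded‑spread algorithm and, by Theorem~\ref{theorem:grid_time_bs}, is solved in $O(|\rreds_\Box\cup\rblues_\Box|^{1+\eps})$ expected time; since these sizes are dominated by the number of external points, which total $O(n)$ by Lemma~\ref{lemma:grid_efficient_external_count}, superadditivity of $m\mapsto m^{1+\eps}$ makes the total expected time inside external subproblems $O(n^{1+\eps})$. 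Finally, reconstructing $\transp$ at each subproblem takes time proportional to that subproblem's external‑point count, so step~\ref{item:grid_efficient3} is $O(n)$ overall.

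The delicate part — and where I expect essentially all the difficulty to lie — is implementing step~\ref{item:grid_efficient1} in the general case, where the recursion may be $\Omega(n)$ deep and one can afford neither linear time per subproblem nor time proportional to the $\Theta(m^{\delta d})$ grid cells. The plan is exactly the scheme outlined above: keep $\reds$ and $\blues$ in two dynamic orthogonal range‑searching structures $\mathbb{D}_\reds,\mathbb{D}_\blues$ supporting $\Fn{Wt}$, $\Fn{Report}$, $\Fn{Empty}$, $\Fn{Delete}$, $\Fn{ReduceWt}$ in $O(\log^2 n)$ time (plus output size for $\Fn{Report}$); tag each recursive subproblem with a bounding rectangle $\gsprect$ under the invariant that the structures restricted to $\gsprect$ encode exactly that subproblem; descend the implicit quad tree $\rsqt$ of depth $O(\log n)$ top‑down, pruning empty subtrees via $\Fn{Empty}$ to enumerate the $|\Pi|$ nonempty cells in $O(|\Pi|\log^3 n)$ time; and, for each $\pi\in\Pi$, use $\Fn{Wt}$ to compute $\excess_\pi$, $\Fn{Report}$ to select $(\rreds_{ex})_\pi$ or $(\rblues_{ex})_\pi$, and $\Fn{Delete}$/$\Fn{ReduceWt}$ to carve the internal subproblem's rectangle $\gsprect_\pi=\gsprect\cap\pi$ out of $\gsprect$, testing safety by an $\Fn{Empty}$ query on the width-$(\ell/m^3)$ boundary strip of $\pi$. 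Because the number of nonempty cells and of external points summed over all subproblems is $O(n)$ by Lemma~\ref{lemma:grid_efficient_external_count}, step~\ref{item:grid_efficient1} costs $O(n\log^3 n)$ in expectation. Adding the four contributions gives $O(n^{1+\eps}) + O(n^{1+\eps}) + O(n) + O(n\log^3 n) = O(n^{1+\eps})$, which together with Corollary~\ref{corollary:grid_approx_gen} proves the theorem.
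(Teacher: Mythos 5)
Your proposal matches the paper's own argument essentially line for line: cost comes from Corollary~\ref{corollary:grid_approx_gen}, and the running time is split across the same three steps, charged to base subproblems ($O(n^{1+\eps})$), external subproblems via Theorem~\ref{theorem:grid_time_bs} and Lemma~\ref{lemma:grid_efficient_external_count} ($O(n^{1+\eps})$), map reconstruction ($O(n)$), and the implicit‑quadtree plus dynamic range‑searching implementation of step~\ref{item:grid_efficient1} ($O(n\log^3 n)$). No gaps; this is the paper's proof.
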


\subsection{Extension to bounded doubling dimension}

The \emph{doubling dimension} of a metric space is the minimum $D$
such that every radius $2r$ ball can be covered by at most $2^D$ 
balls of radius $r$. 
When $D$ is a constant, we say the metric space is \emph{doubling}.
We now describe the changes to the algorithm when we are 
given oracle access to a doubling metric $\dist{\cdot}{\cdot}$.
This result has a logarithmic dependence on the spread $\Phi$,
and therefore gives the same asymptotic running times only when
$\Phi = n^{O(1)}$.
The dependence on spread exists because we lack the 
``efficient datastructures'' of Section~\ref{subsection:grid_efficient}
for doubling metrics.

A regular grid is not well-defined in this setting, 
but we can replace the partitioning from a randomly shifted grid 
with a partitioning generated by the \emph{randomized split-tree} of Talwar~\cite{talwar2004bypassing}.
In the rest of this section, we will describe that this new partitioning
has similar bounds on ``cell'' count, ``cell'' diameter, and point separation probability.
Afterwards, we state the split-tree construction in-depth,
which allows us to define and efficiently implement a notion of ``safety'' for the new subdivision.

\subsubsection{Replacing subdividing grid}

A split-tree is a hierarchical decomposition of the metric, 
and Talwar's construction has properties reminiscent of a 
randomly shifted quadtree in Euclidean space.
Each node of the split-tree corresponds to a cluster of points,
whose children are a partitioning of itself, and so forth.
The root contains all points, and each leaf contains a single point.

Formally, we number the \emph{levels} of the tree,
with the root at level $L = O(\log \ell)$, 
where $\ell$ is the diameter of the input point set.
The children of level $i$ nodes at level $(i-1)$.
Similar to the quadtree, the hierarchy contracts cluster diameters by $1/2$ in successive levels.
The root of the tree is a cluster of all points,
whereas leaves are singleton clusters for each point.
Talwar's construction (which we describe next) has the following properties:
\begin{enumerate}
\item \label{item:grids_doubling_prop1}
	A level $i$ cluster has diameter at most $2^{i+1}$.
\item \label{item:grids_doubling_prop2}
	A level $i$ cluster is comprised of at most 
	$2^{O(D)}$ level $(i+1)$ clusters.
\item \label{item:grids_doubling_prop3}
	For any two points $r, b$, 
	the probability that $r$ and $b$ lie in different
	level $i$ clusters is at most 
	$O(D)\cdot \dist{r}{b}/2^{i}$.
\end{enumerate}
Recall that we use $m$ to denote the number of points in a recursive instance on $\rreds \cup \rblues$.
For a point partitioning analagous to an $[m^\delta]^D$ grid, 
it suffices to return the $(L - \delta \log_2 m)$-level of the randomized split-tree.
Denote the partitioning induced by this level as $\rsg_D$.

Even in the Euclidean case, $\delta$ is a constant parameter that depended on the dimension.
For $\rsg_D$, let the fanout bound in (\ref{item:grids_doubling_prop2}) be $2^{c'D}$, for some constant $c' > 0$.
For the doubling metric, we choose $\delta = (3c'D)^{-1}$, to achieve $O(m^{c'D \delta}) = O(m^{1/3})$.
With this choice of $\delta$, the external subproblem at $\rsg_D$ can be solved in $O(m)$ time.
\begin{lemma}
\label{lemma:grid_doubling_rsgd}
	Let $\rreds \cup \rblues$ be a set of $m$ points in a doubling metric, and $\ell$ be their diameter.
	For a parameter $\delta > 0$, let $\rsg_D$ be the partitioning of $\rreds \cup \rblues$ given by the 
	$(L - \delta \log_2 m)$-level of the randomized split tree.
	Then for $\Delta = \ell/m^\delta$,
	\begin{enumerate}
	\item $\rsg_D$ consists of $O(m^{c'D\delta}) = O(m^{1/3})$ clusters.
	\item Each cluster in $\rsg_D$ has diameter $2^L/m^\delta = c \Delta$, 
		for some constant $c > 0$.
	\item For any $(r, b) \in \rreds \times \rblues$, 
		the probability that $r$ and $b$ lie in different
		clusters of $\rsg_D$ is at most $O(\dist{r}{b}/\Delta)$.
	\end{enumerate}
\end{lemma}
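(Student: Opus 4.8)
Write $i^\ast := L - \delta\log_2 m$, so that $\rsg_D$ is by definition the collection of level-$i^\ast$ clusters of the randomized split-tree. The plan is to obtain each of the three claims by specializing the matching guarantee of Talwar's construction (properties~\ref{item:grids_doubling_prop1}--\ref{item:grids_doubling_prop3}) to this single level. First I would check that $i^\ast$ is a legal level, i.e.\ $0 \le \delta\log_2 m \le L$: the lower bound is immediate, and the upper bound follows from the standing assumption of this section that the spread is $n^{O(1)}$, which gives $\log m = O(L)$, so $\delta\log_2 m \le L$ for $\delta < 1$ once $n$ (hence $m$) is large enough. (Rounding $\delta\log_2 m$ to an integer level perturbs every bound below by at most a constant factor.)

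For the first claim, descend from the root -- the unique level-$L$ cluster -- down to level $i^\ast$, traversing $L - i^\ast = \delta\log_2 m$ levels; by property~\ref{item:grids_doubling_prop2} each step multiplies the number of clusters by at most $2^{c'D}$. Hence $|\rsg_D| \le (2^{c'D})^{\delta\log_2 m} = m^{c'D\delta}$, and the choice $\delta = (3c'D)^{-1}$ makes this $O(m^{1/3})$.

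For the remaining two claims I would first record that the construction normalizes $L$ so that $2^L = \Theta(\ell)$ (the root is a level-$L$ cluster of diameter $\ell$), whence $2^L/m^\delta = \Theta(\ell/m^\delta) = \Theta(\Delta)$. The diameter bound then follows from property~\ref{item:grids_doubling_prop1}: a level-$i^\ast$ cluster has diameter at most $2^{i^\ast+1} = 2\cdot 2^L/m^\delta = \Theta(\Delta)$, i.e.\ at most $c\Delta$ for a suitable absolute constant $c$. For the separation probability, property~\ref{item:grids_doubling_prop3} bounds the chance that $r$ and $b$ lie in different level-$i^\ast$ clusters by $O(D)\cdot\dist{r}{b}/2^{i^\ast} = O(D)\cdot\dist{r}{b}\cdot m^\delta/2^L = O(\dist{r}{b}/\Delta)$, using $2^L/m^\delta = \Theta(\Delta)$ and that $D$ is constant.

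There is no genuinely hard step: the lemma is just Talwar's three guarantees read off at the level $i^\ast = L - \delta\log_2 m$. The two points that need care are (i) checking $i^\ast$ is a legal level, which is where the bounded-spread hypothesis of this section is used, and (ii) the normalization $2^L = \Theta(\ell)$ that converts a level index into the target cell size $\Delta = \ell/m^\delta$ -- this is exactly why the statement is phrased with the constant $c$ rather than as an exact equality. The particular value $\delta = (3c'D)^{-1}$ is dictated by the requirement $c'D\delta = 1/3$, which keeps the number of clusters (hence of external points) at $O(m^{1/3})$, matching the budget used in the Euclidean analysis so that the rest of Section~\ref{section:grids} carries over unchanged.
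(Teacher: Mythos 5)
Your proposal is correct and follows exactly the same route as the paper: the paper's proof is the one-line remark ``Use level $(L - \delta \log_2 m)$ in the randomized split-tree properties,'' and your write-up simply unpacks that substitution -- applying property~\ref{item:grids_doubling_prop2} for the cluster count, property~\ref{item:grids_doubling_prop1} for the diameter, property~\ref{item:grids_doubling_prop3} for the separation probability, together with the normalization $2^L = \Theta(\ell)$. The two small caveats you flag (rounding the level index, and the bounded-spread hypothesis making $i^\ast \ge 0$) are sensible hygiene the paper leaves implicit.
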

\begin{proof}
	Use level $(L - \delta \log_2 m)$ in the randomized split-tree properties.
\end{proof}
$\rsg_D$ can be made ``safe'' in a similar way as the Euclidean case.
Formally, we say $\rsg_D$ is \emph{safe} if, 
for all $(r, b) \in \rreds \times \rblues$ where $\dist{r}{b} \leq c\Delta/m^3$,
$r, b$ lie in the same cluster of $\rsg_D$.
Here, $c$ is the constant that appears in the diameter bound for a $\rsg_D$ cluster.
To describe the changes and demonstrate that they can be implemented efficiently, 
we first detail Talwar's construction algorithm.

\subsubsection{Talwar's construction}

Suppose a set $S$ of $m$ points in the doubling metric.
Talwar's construction builds the split-tree from a \emph{net hierarchy}.
An \emph{$r$-net} of a point set $S$ is a subset $N \subset S$ which satisfies:
\begin{itemize}
\item \emph{Covering}: for all $p \in S$, there exists some $q \in N$ with $\dist{p}{q} \leq r$.
\item \emph{Packing}: for all $p, q \in N$, $\dist{p}{q} \geq r$.
\end{itemize}
Now, suppose that we build a nested series of nets $S = N_0, N_1, \ldots, N_L$
such that $N_i$ is a $2^{i-2}$-net of $N_{i-1}$.
Note that $N_L$ consists of a single point.
Such a net hierarchy can be constructed in $2^{O(D)}m\log m$ expected time using 
the algorithm of Har-Peled and Mendel~\cite{har2005fast}.

Given a net hierarchy on $S$, Talwar's algorithm is as follows.
We will use $\EuScript{C}_i$ to denote the set of clusters at level $i$.
\begin{enumerate}
\item Pick $\rho$ uniformly at random from $[1/2, 1)$,
	a random permutation $\pi$ of $\reds \cup \blues$,
	and initialize $\EuScript{C}_L = \{\reds \cup \blues\}$.
\item Repeat for each level $i$, from $L$ to 0.
	Let $r_i = 2^i \rho$.
	For each cluster $C \in \EuScript{C}_{i+1}$, we generate its children clusters
	(into $\EuScript{C}_i$) by iterating through the net points $N_i$ in order of $\pi$.
	The points of $N_i$ will form the cluster centers at $\EuScript{C}_i$.
	For each $p \in N_i \cap C$ in order of $\pi$, create a new cluster from the yet-unassigned points of $C$ in the $r_i$-ball about $p$.
	Formally, we create the child cluster $C'(p)$ defined as
	\begin{equation*}
		C'(p) := \{q \in C \mid \text{$\dist{p}{q} \leq r_i$, and $\forall q' \in N_i$ where $\pi(q') < \pi(q)$, $\dist{p}{q'} > r_i$} \}.
	\end{equation*}
\end{enumerate}
Talwar's construction can be implemented in $O(2^{O(D)} m)$ time per level.
The doubling property ensures that $|N_i \cap C| = 2^{O(D)}$,
so we can compare each point naively against all the points of $N_i \cap C$.
The total time to construct the split-tree is $O(m\log m) + O(m L) = O(m\log m + m\log\Phi)$ time.
To find $\rsg_D$, we stop the construction after $\delta \log_2 m$ levels,
in time $O(m\log m + m\delta \log m) = O(m \log m)$ (recall $\delta = 1/6$).

\subsubsection{$\rsg_D$ safety}

To determine if $\rsg_D$ is safe, it suffices to check a local condition 
at every split-tree cluster from level $L$ to $(L - \delta\log_2 m)$.
First, we define two predicates for a separation property between a cluster and points outside it. 
Let $C$ be a split-tree cluster with center $q$, construction radius $r$, and points $S \supseteq C$.
\begin{itemize}
\item $P(C, S)$: for all $p \in S \setminus C$ and $p' \in C$, $\dist{p}{p'} > c\Delta/m^3$.
\item $Q(C, S)$: for all $p \in S \setminus C$, $\dist{p}{q} > r + c\Delta/m^3$.
\end{itemize}
By definition, $\rsg_D$ is safe if $P(C, \rreds \cup \rblues)$ holds for all $C \in \rsg_D$.

\begin{lemma}
\label{lemma:grid_doubling_localsafe}
	For a split-tree cluster $C$, let $\operatorname{par}(C)$ be its parent cluster.
	If $Q(C, \operatorname{par}(C))$ holds for all $C$ in levels $L$ to $(L - \delta\log_2 m)$,
	then $P(C', \rreds \cup \rblues)$ holds for all $C' \in \rsg_D$.
\end{lemma}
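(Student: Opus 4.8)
The plan is to exploit the nesting structure of the split-tree together with the fact that every cluster lies inside the metric ball about its construction center. Fix a cluster $C' \in \rsg_D$, which sits at level $L - \delta\log_2 m$, and list its chain of ancestors $C' = C_0 \subset C_1 \subset \cdots \subset C_k = \rreds \cup \rblues$, where $C_t = \operatorname{par}(C_{t-1})$ for each $t$ and $C_k$ is the root. To establish $P(C', \rreds \cup \rblues)$ it suffices to take an arbitrary $p \in (\rreds \cup \rblues) \setminus C'$ and an arbitrary $p' \in C'$ and show $\dist{p}{p'} > c\Delta/m^3$.

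First I would locate $p$ along this chain. Since the root $C_k$ contains $p$ while $C_0 = C'$ does not, there is a smallest index $j \geq 1$ with $p \in C_j$, and hence $p \in C_j \setminus C_{j-1}$. The cluster $C_{j-1}$ lies at a level between $L - \delta\log_2 m$ (the level of $C'$, which it contains) and $L-1$ (it is a proper descendant of the root), so it is one of the clusters for which the hypothesis guarantees that $Q(C_{j-1}, \operatorname{par}(C_{j-1})) = Q(C_{j-1}, C_j)$ holds. Writing $q$ for the center of $C_{j-1}$ and $r$ for its construction radius, the predicate $Q$ applied to the point $p \in C_j \setminus C_{j-1}$ gives $\dist{p}{q} > r + c\Delta/m^3$.

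Next I would bound $\dist{p'}{q}$ from above. By Talwar's construction each child cluster $C'(p^*)$ is defined as a subset of the $r_i$-ball about its center $p^*$, so every cluster is contained in the ball of radius equal to its construction radius about its center; in particular $C_{j-1} \subseteq \{x : \dist{x}{q} \leq r\}$. Since $p' \in C' = C_0 \subseteq C_{j-1}$, we get $\dist{p'}{q} \leq r$. The triangle inequality now yields $\dist{p}{p'} \geq \dist{p}{q} - \dist{p'}{q} > (r + c\Delta/m^3) - r = c\Delta/m^3$, which is precisely the separation required by $P(C', \rreds \cup \rblues)$. As $C'$, $p$, and $p'$ were arbitrary, the lemma follows.

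The only point requiring care — and the step I would flag as the main obstacle — is verifying that the relevant ancestor $C_{j-1}$ really does fall within the range of levels $[L - \delta\log_2 m, L]$ for which $Q$ is assumed: this holds because $C_{j-1}$ is an ancestor of $C'$ (so its level is at least $L - \delta\log_2 m$) and a proper descendant of the root (so its level is at most $L-1$). Everything else reduces to the one-line triangle-inequality argument above combined with the containment of a split-tree cluster in the ball about its construction center.
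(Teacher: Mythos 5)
Your proof is correct and is essentially the paper's argument in unrolled form: the paper proves $P$ by descending induction on levels, propagating through ancestors until the first one containing $p$ and then invoking $Q$ plus the ball-containment property and triangle inequality, while you directly identify that critical ancestor $C_{j-1}$ and apply the same $Q$ / triangle-inequality step there. The substantive content — $Q$ at the level where $p$ splits off, $C_{j-1}$ contained in the $r$-ball about its center, and the one-line triangle bound — is identical.
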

\begin{proof}
	We prove, by induction on levels, that $P(C, \rreds \cup \rblues)$ for clusters at every level in the range.
	The root cluster $C_{root}$ (at level $L$) excludes no vertices, so $P(C_{root}, \rreds \cup \rblues)$ holds vacuously.

	Suppose the level $i+1$ clusters have $P(C, \rreds \cup \rblues)$.
	Consider a level $i$ cluster $C$ with center $q$ and radius $r$.
	For any $p \not\in C$, either
	\begin{enumerate}
	\item $p \not\in \operatorname{par}(C)$ as well, or
	\item $p \in \operatorname{par}(C)$, and was first separated from the points of $C$ at level $i$.
	\end{enumerate}
	If $p \not\in \operatorname{par}(C)$, then for any $p' \in C$ we have 
	$P(\operatorname{par}(C), \rreds \cup \rblues) \implies \dist{p}{p'} > c\Delta/m^3$,
	as $C$ is a subset of $\operatorname{par}(C)$.
	Otherwise $p \in \operatorname{par}(C)$, and we utilize the assumption that $Q(C, \operatorname{par}(C))$ holds.
	For any $p' \in C$, triangle inequality implies that
	\begin{equation*}
	\begin{aligned}
		\dist{p}{p'} &\geq \dist{p}{q} - \dist{p'}{q} \\
			&> r + c\Delta/m^3 - r \\
			&= c\Delta/m^3
	\end{aligned}
	\end{equation*}
	Since the separation property holds for all points of $\overline{C}$, 
	we have established $P(C, \rreds \cup \rblues)$ for all level $i$ clusters.
	The lemma follows by induction down to the $(L - \delta\log_2 m)$-level, $\rsg_D$.
\end{proof}

This suggests an efficient procedure for checking if $\rsg_D$ is safe.
While constructing $\rsg_D$ through the randomized split-tree,
we verify $Q(C, \operatorname{par}(C))$ point-by-point as we add them to clusters.
Since we already do these naive distance comparisons to cluster the points,
the asymptotic running time of the construction stays the same.
Like in the Euclidean case, we restart the construction with new random parameters 
if $Q(C, \operatorname{par}(C))$ fails on any cluster.
We finish this section by showing that a random $\rsg_D$ is safe with high probability,

\begin{lemma}
	The randomized split-tree construction produces a safe $\rsg_D$ with probability $\Omega(1 - 1/m^{5/3})$.
\end{lemma}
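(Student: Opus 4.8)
The plan is to bound the probability that the construction \emph{fails} --- that it outputs a partition $\rsg_D$ that is not safe --- and show this probability is $O(1/m^{5/3})$. By Lemma~\ref{lemma:grid_doubling_localsafe} it suffices to bound the probability that, for some split-tree cluster $C$ at a level $i$ with $L-\delta\log_2 m\le i\le L$, the predicate $Q(C,\operatorname{par}(C))$ is violated: whenever no such violation occurs, $\rsg_D$ is safe, and this is exactly the event the construction tests while building $\rsg_D$. So fix such a cluster $C$, with center $q$ and construction radius $r_i=2^i\rho$, and write $\mu:=c\Delta/m^3$ for the moat half-width. By definition of Talwar's construction, any point of $\operatorname{par}(C)$ lying within distance $r_i$ of $q$ that is not claimed by an earlier sibling center belongs to $C$; hence the points of $\operatorname{par}(C)\setminus C$ that can witness $\neg Q(C,\operatorname{par}(C))$ are those in the thin shell $r_i<\dist{q}{p}\le r_i+\mu$ (together with, in a degenerate situation, points inside the $r_i$-ball of $q$ that were stolen by an earlier sibling center --- the one delicate point, discussed below).

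The shell events I control as follows. Since $\rho$ is uniform on $[1/2,1)$, the radius $r_i=2^i\rho$ is uniform on the length-$2^{i-1}$ interval $[2^{i-1},2^i)$; thus for a fixed point $p$ the event $\dist{q}{p}\in(r_i,r_i+\mu]$ --- equivalently $r_i\in[\dist{q}{p}-\mu,\dist{q}{p})$ --- has probability at most $\mu/2^{i-1}$ over $\rho$. Union-bounding over the at most $m$ points of $\operatorname{par}(C)$,
\[
	\Pr\bigl[\,Q(C,\operatorname{par}(C))\text{ violated}\,\bigr]\ \le\ \frac{m\,\mu}{2^{i-1}}\ =\ \frac{2\,m\,\mu}{2^{i}}.
\]
Next I sum over the relevant clusters. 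For $0\le j\le\delta\log_2 m$, the number of level-$(L-j)$ clusters is at most the number of $\rsg_D$-clusters, namely $O(m^{c'D\delta})=O(m^{1/3})$ by Lemma~\ref{lemma:grid_doubling_rsgd} and the choice $\delta=1/(3c'D)$. Hence the overall failure probability is at most
\[
	\sum_{j=0}^{\delta\log_2 m}O(m^{1/3})\cdot\frac{2\,m\,\mu}{2^{\,L-j}}\ =\ O\!\Bigl(\frac{m^{4/3}\mu}{2^{L}}\Bigr)\sum_{j=0}^{\delta\log_2 m}2^{\,j}\ =\ O\!\Bigl(\frac{m^{4/3}\,\mu\,m^{\delta}}{2^{L}}\Bigr),
\]
the geometric sum being dominated by $2^{\delta\log_2 m}=m^{\delta}$. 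Finally, substituting $\mu=c\Delta/m^3$ and $2^{L}=c\Delta\,m^{\delta}$ (from the $\rsg_D$ diameter bound in Lemma~\ref{lemma:grid_doubling_rsgd}, with $\Delta=\ell/m^{\delta}$) collapses this to $O\bigl(m^{4/3}/m^{3}\bigr)=O(1/m^{5/3})$, so a safe $\rsg_D$ is produced with probability $1-O(1/m^{5/3})=\Omega(1-1/m^{5/3})$.

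The one thing that needs care is the degenerate situation: a point $p\in\operatorname{par}(C)$ with $\dist{q}{p}\le r_i$ that lies outside $C$ because an earlier sibling center $q'$ (necessarily with $\dist{q}{q'}>r_i$, since $q$ is itself a center) claimed it, which also violates $Q(C,\operatorname{par}(C))$. The cleanest way around this is to observe that the test the construction actually performs inspects only points still unassigned when a center is processed --- i.e.\ exactly the shell condition --- and that this weaker test is still sufficient for safety by the argument of Lemma~\ref{lemma:grid_doubling_localsafe}: if a pair $(r,b)$ with $\dist{r}{b}\le\mu$ is first split at level $i$, then at that level the net point claiming one of $r,b$ has the other at distance in $(r_i,r_i+\mu]$, so the shell test fails. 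Once the failure event is reduced to ``some point lands in the thin shell of some center at levels $L,\dots,L-\delta\log_2 m$,'' the rest is the routine union bound above, and it uses only the randomness of $\rho$, so no conditioning on the random permutation is needed.
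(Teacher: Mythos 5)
Your proof is correct and, at its core, runs the same calculation as the paper's: fix $\rho\in[1/2,1)$ uniform, observe that for a cluster at level $i$ the shell event for a single point is controlled by the measure of bad radii divided by the length $2^{i-1}$ of the interval $[2^{i-1},2^i)$, and union-bound. The paper organizes this slightly differently --- it proves a level-\emph{uniform} per-cluster failure bound of $O(1/m^2)$ (using that the smallest relevant cluster radius is $\Theta(\Delta)$), then multiplies by the total number of clusters $O(m^{1/3})$ in one union bound. You instead keep the level-dependent per-cluster bound $O(m\mu/2^i)$, overcount the number of clusters at \emph{each} level by $O(m^{1/3})$, and let a geometric sum over the $\delta\log_2 m$ levels absorb the overcount. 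These are equivalent; your version is marginally longer but yields the identical $O(1/m^{5/3})$.

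The genuinely valuable part of your write-up is the "degenerate situation" paragraph. The paper's proof silently equates "$Q(C,\operatorname{par}(C))$ fails" with "some outside point lies in the thin shell $(r_i, r_i+\mu]$ about $q$," but these are not the same: a point $p$ with $\dist{p}{q}\le r_i$ that was stolen by an earlier sibling center also lies in $\operatorname{par}(C)\setminus C$ and violates $Q$ as literally stated, and that event is not controlled by the randomness of $\rho$ alone. You correctly observe that what the algorithm actually tests, and what is actually needed, is the shell condition, and that the shell condition is already sufficient for safety by the triangle-inequality argument you give (if a $\mu$-close pair $(r,b)$ is first split at level $i$ by a center $q$ claiming $r$, then $b$ lands in $q$'s shell). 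This bypasses the stolen-point issue, keeps the probability analysis confined to $\rho$, and thus justifies the bound more carefully than the paper does. That is a real repair of an oversight rather than merely a rephrasing.
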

\begin{proof}
	We prove that $\rsg_D$ satisfies $Q(C, \operatorname{par}(C))$ at every cluster $C$ 
	with probability $\Omega(1 - 1/m^{5/3})$.
	By Lemma~\ref{lemma:grid_doubling_localsafe}, $\rsg_D$ meeting this condition is safe.

	Consider the random radius parameter, $\rho \in [1/2, 1)$, fixed at the start of the construction.
	At level $i$, the split-tree construction choosese cluster radii $r_i = 2^i \rho$.
	For a cluster $C$ with center $q$, $Q(C, \operatorname{par}(C))$ fails if the
	circle of radius $r_i$ centered at $q$ intersects the $(c\Delta/m^3)$-ball
	of any point outside of it.
	The radial measure of these balls is no more than
	\begin{equation*}
		m \cdot \frac{2c\Delta}{m^3} \leq \frac{4 r_i}{m^2} \leq \frac{4 \cdot 2^i}{m^2}
	\end{equation*}
	as $\rho c\Delta$ is the smallest cluster radius and $\rho \geq 1/2$.
	Then, the probability that $rho$ causes $C$ to fail $Q(C, \operatorname{par}(C))$ is no more than
	\begin{equation*}
		\frac{4 \cdot 2^i / m^2}{2^i - 2^{i-1}} = \frac{2^{i+2}}{2^{i-1} m^2} = \frac{8}{m^2}.
	\end{equation*}
	By Lemma~\ref{lemma:grid_doubling_rsgd} the number of $\rsg_D$ clusters is $O(m^{1/3})$.
	Since the split-tree has constant degree, the total number of clusters is also $O(m^{1/3})$.
	By union bound, the probability that any $C$ fails is no more than
	\begin{equation*}
		O(m^{1/3}) \cdot \frac{8}{m^2} = O(m^{-5/3})
	\end{equation*}
	and the probability that no $C$ fails is $\Omega(1 - 1/m^{5/3})$.
\end{proof}


\section{A $(1+\epsilon)$-Approximate Algorithm}
\label{section:wspd}
In this section, we describe a $(1 + \eps)$-approximation algorithm based 
on a reduction to minimum cost flow.
We begin by defining the minimum cost flow problem.
Our use of minimum cost flow in this section is relatively black box,
so we postpone some algorithmic definitions until Section~\ref{section:orlin},
like the residual graph and the dual problem.

\subsection{Minimum cost flow definitions}
\label{subsection:mcf_def}

Let $G_0 = (V, E)$ be a directed graph with $n$ vertices and $m$ arcs,
with \emph{costs} $c: E \to \reals$, \emph{capacities} $u: E \to \reals_+$,
and signed supplies/demands $\fsupply: V \to \ints$ where 
$\sum_{v \in V} \fsupply(v) = 0$.
We call $\fsupply(\cdot)$ \emph{flow-supply} (when $\fsupply(v) > 0$) or 
\emph{flow-demand} (when $\fsupply(v) < 0$) to distinguish from the 
nonnegative supply-demand function $\tsupply(\cdot)$ used for transportation.
We use $C = \max_{e \in E} c(e)$ and $U = \max_{v \in V} |\fsupply(v)|$;
it is assumed that the costs are scaled such that the minimum arc cost is 1.

A \emph{pseudoflow} is an arc function $f: E \to \reals_+$ satisfying the
\emph{capacity constraints}
\begin{equation*}
	f(v, w) \leq u(v, w) \qquad \forall (v, w) \in E.
\end{equation*}
If an arc $(v, w) \in E$ has $f(v, w) > 0$, we say it is \emph{active}; otherwise it 
is said to be \emph{idle}. 
We call the active arcs of a pseudoflow its \emph{support}.
The cost of a pseudoflow is
\begin{equation*}
	\cost(f) := \sum_{e \in E} c(v, w) f(v, w).
\end{equation*}
With respect to a pseudoflow $f$, the \emph{imbalance} of a vertex $v \in V$ is
\begin{equation*}
	e_f(v) := \fsupply(v) + \sum_{(w, v) \in E} f(w, v) - \sum_{(v, w) \in E} f(v, w).
\end{equation*}
We refer to vertices with $e_f(v) > 0$ as \emph{excess vertices}, and those 
with $e_f(v) < 0$ as \emph{deficit vertices}.

If imbalance is zero on all vertices, $f$ is a \emph{flow}.
The \emph{minimum cost flow} problem (MCF) is to find a flow $f^*$ of minimum 
cost.
Transportation can be formulated as a special case of MCF on the complete 
bipartite graph, using $V = \reds \cup \blues$, $E = \reds \times \blues$ 
directed $\reds$ to $\blues$, $c(r, b) = \|r - b\|$, 
setting flow-supplies $\fsupply(r) = \tsupply(r)$ for $r \in \reds$
and flow-demands $\fsupply(b) = -\tsupply(b)$ for $b \in \blues$.
The transportation map can be recovered from a flow solution as 
$\tau(a, b) = f(a, b) \forall a \in A, b \in B$.

\subsection{Algorithm outline}

As in Section~\ref{section:grids}, we hierarchically cluster points,
but this time for the purpose of approximately representing 
all $\Theta(n^2)$ pairwise distances between $\reds$ and $\blues$ compactly.
At a high level, our algorithm is as follows:

\begin{enumerate}
\item \label{item:wspd_1}
	Compute a hierarchical clustering of $\reds \cup \blues$
	using a quadtree with a single input point in each leaf.

\item \label{item:wspd_2}
	Construct a sparse directed acyclic graph 
	$\sparseG = (\sparseV, \sparseE)$
	over the clusters with $\reds$ and $\blues$ inheriting their supplies 
	and demands respectively from the transportation instance.
	$\sparseG$ has the following property:
	for every pair $(r, b) \in \reds \times \blues$, there is a unique path
	in $\sparseG$ from $r$ to $b$, with cost roughly $\dist{r}{b}$.
	That is, the per-unit cost of flow from $r$ to $b$ in $\sparseG$
	approximates the per-unit cost of transporting from $r$ to $b$ in the 
	original metric.

\item \label{item:wspd_3}
	Compute an optimal flow $\flow^*$ in $\sparseG$ using a minimum 
	cost flow algorithm.
	To achieve our advertized running time, we use the algorithm by 
	Lee and Sidford~\cite{LS-flow}.

\item \label{item:wspd_4}
	Recover a transportation map $\transp$ from $\flow^*$.
\end{enumerate}

The hierarchical structure is the foundation for the compact representation 
-- a well-separated pair decomposition~\cite{callahan1995decomposition} --
and further enables a near-linear time procedure for 
recovering the transportation map (step \ref{item:wspd_4}).
This fast recovery step is what distinguishes our algorithm from the similar reduction in
Cabello~\etal~\cite{cabello2008matching}, which uses \emph{geometric
spanners}~\cite{callahan1993faster,arya1994randomized,arya1995euclidean} instead.
As black boxes, geometric spanners seem to be insufficient 
for efficient recovery of the transportation map.


\subsection{Construction of the graph}

Let $\square$ be the smallest orthogonal square containing $\reds \cup \blues$.
We construct a \emph{compressed quad tree} $\qtree$ on $\reds \cup \blues$ 
with $\square$ as the square associated with the root of $\qtree$.
A compressed quadtree prunes certain interior nodes of a standard quadtree,
guaranteeing that $\qtree$ has $O(n)$ nodes.
We can construct a compressed quadtree in $O(n \log n)$ time,
see e.g. \cite{har2011geometric}.

Each node $v$ of $\qtree$ is associated with a square $\qtsquare_v$.
For each node $v \in \qtree$, let $\reds_v = \reds \cap \qtsquare_v$
and $\blues_v = \blues \cap \qtsquare_v$.
If $\reds_v$ (resp. $\blues_v$) is a singleton set, 
we call $v$ a \emph{red leaf} (resp. \emph{blue leaf}),
and denote its associated point of $\reds$ by $\ell_\reds(v)$ (resp. $\ell_\blues(v)$, a $\blues$ point).
The sets $\reds_v, \blues_v$ form a hierarchical clustering of 
$\reds \cup \blues$.

To construct $\sparseG = (\sparseV, \sparseE)$,
we make two copies of $\qtree$: 
the \emph{up-tree} $\uptree = (\uptreeV, \uptreeE)$
and \emph{down-tree} $\downtree = (\downtreeV, \downtreeE)$.
We orient the arcs of $\uptreeE$ upward --- from a node to its parent,
and orient the arcs of $\downtreeE$ downward --- from a node to its child.
We delete blue points from $\uptree$ and red points from $\downtree$,
thus $\uptree$ contains only $\reds$, and $\downtree$ contains only $\blues$.
We set $\sparseV = \uptreeV \cup \downtreeV$ and 
$\sparseE = \uptreeE \cup \downtreeE \cup \crossE$ where 
$\crossE \subseteq \uptreeV \times \downtreeV$ is a set of \emph{cross edges}
connecting $\uptree$ to $\downtree$ that we define next.
See Figure~\ref{figure:wspd_construction}.

\begin{figure}
\centering
\subcaptionbox{\centering Quadtree on $\reds \cup \blues$.}{\includegraphics[width=0.25\linewidth,page=1]{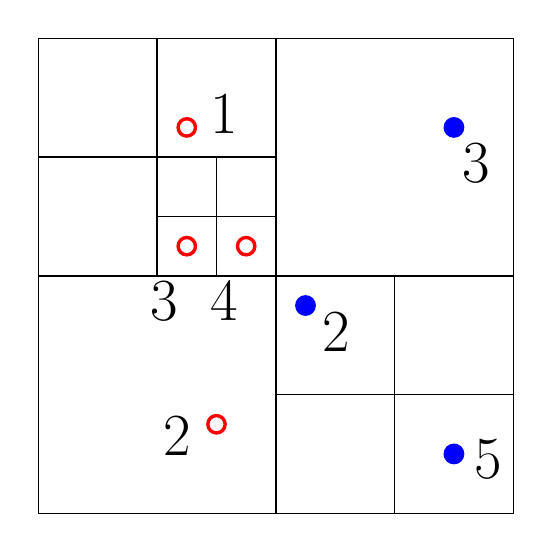}}
\quad
\subcaptionbox{\centering Composition of $\sparseG$.}{\includegraphics[width=0.3\linewidth,page=3]{wspd_example.pdf}}
\quad
\subcaptionbox{\centering Decomposition of flow.}{\includegraphics[width=0.3\linewidth,page=4]{wspd_example.pdf}}
\caption{Construction of $\sparseG$ and recovering the transportation.}
\label{figure:wspd_construction}
\end{figure}

\mparagraph{Well-separated pair decomposition}
Originally proposed by Callahan and Kosaraju~\cite{callahan1995decomposition},
the notion of a \emph{well-separated pair decomposition} (WSPD) of a 
point set $S$ is widely used to represent the pairwise distances of 
$S$ approximately in a compact manner.
The well-separated pair decomposition of a point set $P$ is a set of
subset pairs $\wspd = \{(A_1, B_1), \ldots, (A_s, B_s)\}$ with $A_i, B_i \subseteq S$ where:
\begin{itemize}
\item For every $p, q \in S$,
	there is a unique pair $(A_i, B_i) \in \wspd$ 
	such that $p \in A_i$ and $q \in B_i$ (or vice versa).

\item Let $c_{\min}(A_i, B_i)$ be the minimum distance between a $p \in A_i$ and $q \in B_i$.
	Every pair $(A_i, B_i) \in \wspd$ is \emph{well-separated}: 
	For a parameter $\delta > 0$,
	$\displaystyle \max\left\{\diam(A_i), \diam(B_i)\right\} \leq \delta \cdot c_{\min}(u, v)$.
\end{itemize}
A simple WSPD construction using compressed quadtrees is described 
in Chapter 3 of~\cite{har2011geometric}.
It produces a WSPD of size $O(n/\delta^d)$ in $O(n/\delta^d)$ time, provided a compressed quadtree on $S$.

We set $\delta = \tfrac{\eps}{2}$.
For a pair of quadtree nodes $u, v \in \uptree \times \qtree$, we define
$c_{\min}(u, v) := \min\{\dist{x}{y} \mid x \in \qtsquare_u, y \in \qtsquare_v \}$.
Using the algorithm in \cite{har2011geometric}, we compute a pair decomposition 
$\wspd \subseteq \uptreeV \times \downtreeV$ with the following properties:
\begin{enumerate}[label=(W\arabic*)]
\item \label{item:wspd_prop1}
	For every $(r, b) \in \reds \times \blues$,
	there is a unique pair $(u, v) \in \wspd$ such that 
	$r \in \reds_u, b \in \blues_v$.

\item \label{item:wspd_prop2}
	For every $(u,v) \in \wspd$,
	$\displaystyle \max\left\{\diam(\qtsquare_u), \diam(\qtsquare_v)\right\} \leq \delta \cdot c_{\min}(u, v)$.
	
\item \label{item:wspd_prop3}
	$|\wspd| = O(n/\eps^d)$.
\end{enumerate}
Given $T$, this algorithm constructs $\wspd$ in $O(n/\eps^d)$ time.
We set $\crossE = \wspd$ with each arc oriented from $\uptree$ to $\downtree$.
The cost of each arc in $\sparseG$ is set to:
\begin{equation*}
	c(u, v) = \begin{cases}
			\diam(\qtsquare_u \cup \qtsquare_v) & \text{if $(u, v) \in \crossE$} \\
			0 & \text{otherwise}
		\end{cases}
\end{equation*}
By triangle inequality and \ref{item:wspd_prop2}, this ensures that
\begin{equation}
\label{equation:wspd_cross_cost}
	\dist{r}{b} \leq c(u, v) \leq (1+\eps) \dist{r}{b}
\end{equation}
for any $r \in \reds$, $b \in \blues$, with $(u, v) \in \wspd$ the unique pair separating them.

To complete the description of the min-cost flow instance,
we specify the flow-supply/flow-demand $\fsupply'$ at vertices of $\sparseG$.
\begin{equation*}
	\fsupply'(v) = \begin{cases}
		\tsupply(\ell_\reds(v)) & \text{if $v$ is a red leaf in $\uptree$} \\
		-\tsupply(\ell_\blues(v)) & \text{if $v$ is a blue leaf in $\downtree$} \\
		0 & \text{otherwise} \\
	\end{cases}
\end{equation*}
Let $(\sparseG, c, \fsupply')$ be the resulting minimum cost flow instance.
The total time spent in constructing $\sparseG$ is $O(n\log n + n/\eps^d)$.

\subsection{Cost analysis}

Flow moves up from the leaves of $\uptree$,
through the cross edges into $\downtree$,
and finally descends to the sinks at leaves of $\downtree$.
By construction and \ref{item:wspd_prop1},
any pair $(r, b) \in \reds \times \blues$ has a unique path 
$p(r, b)$ from $r$ to $b$ in $\sparseG$ which uses a single cross edge.
We can map any transport $\transp$ (injectively) to a feasible pseudoflow 
$\flow_\transp$ on $G$, by placing a pseudoflow of $\transp(r, b)$ on $p(r, b)$.
Similarly, any pseudoflow $\flow$ can be mapped to a feasible 
transportation $\transp_\flow$ by \emph{decomposing} 
$\flow$ into pseudoflow on paths from supply vertices to demand vertices:
by the classical flow decomposition theorem,
$\flow$ can be decomposed 
into a set $\{\flow(p(r, b)\}$ of pseudoflows on the paths $p(r, b)$
(since $G$ is a directed acyclic graph,
any decomposition has no flow cycles, only paths).
Then, setting $\transp_\flow(r, b) = \flow(p(r, b))$ for all
$(r, b) \in \reds \times \blues$ is a feasible transportation.

Applying Equation~\ref{equation:wspd_cross_cost} to each path flow,
$\tfcost(\flow_\transp) \leq \tfcost(\transp)$ and
$\tfcost(\transp_\flow) \leq (1 + \eps) \tfcost(\flow)$.
We can apply these transformations to bound the approximation quality
of a transportation recovered by decomposing the \emph{optimal} pseudoflow 
$\flow^*$ of $\sparseG$.
\begin{equation*}
	\tfcost(\transp^*) \leq \tfcost(\transp_{\flow^*}) 
		\leq (1+\eps) \tfcost(\flow^*) 
		\leq (1+\eps) \tfcost(\flow_{\transp^*}) 
		\leq (1+\eps) \tfcost(\transp^*)
\end{equation*}
Then, by computing the optimal pseudoflow $\flow^*$ on $\sparseG$, 
a transportation solution $\transp_{\flow^*}$ constructed by a flow decomposition of 
$\flow^*$ meets the claimed approximation quality.
In the next subsection, we explain how to compute such a decomposition efficiently.
This cost analysis applies regardless of the specific flow decomposition of $\flow^*$.
Our recovery procedure in \ref{item:wspd_4} is a greedy decomposition.

\subsection{Recovering a transportation map}

Let $\flow^*$ be the optimal pseudoflow from $\reds$ to $\blues$ in $\sparseG$.
We use a two-part greedy algorithm to decompose $\flow^*$:
assigning the flow from $\reds$ to the cross edges through the up-tree,
then claiming the assigned flow using the down-tree.
Both steps amount to performing a flow decomposition on 
the portion of the flow lying in each tree,
treating the cross edge endpoints as sinks (resp., sources)
with demand equal to the sum of outgoing (resp., incoming) flow.
Both are arborescences, so flow decomposition can be done with a 
postorder traversal.
After solving both trees, we combine the paths assigned into
and out of each cross edge to find end-to-end path pseudoflows.

We only describe the decomposition for $\uptree$ in detail;
it is nearly identical for $\downtree$.
For each cross edge $(u, v) \in \crossE$, this produces lists 
$A_\reds(u, v)$ and $A_\blues(u, v)$ which hold pairs $(p, F)$
indicating {\em point $p \in \reds \cup \blues$ 
contributes $F$ units of the flow through $(u, v)$}.
The tree is processed in postorder: 
a node is visited only after all its children.
Denote the children of node $u$ by $\children(u)$.
Each node $u \in \uptree$ maintains a list $L(u)$ of the 
positive-demand red points in its subtree,
and a list $N(u)$ of the positive-flow cross edges leaving $u$.
$L(u)$ is initialized by joining lists $L(w)$ from each $w \in \children(u)$
(at the leaves, we initialize $L(r) = \{r\}$ for $r \in \reds$).
While $N(u)$ is not empty, let $(u, v) \in N(u)$ with flow $\flow^*(u, v) > 0$.
Take any point $r \in L(u)$ and add to $A_\reds(u, v)$ a pair
$(r, F)$, with $F = \min\{\fsupply(r), \flow^*(u, v)\}$,
also updating $\fsupply(r) \gets \fsupply(r) - F$ and 
$\flow^*(u, v) \gets \flow^*(u, v) - F$.
This has the effect of removing either $r$ from $L(u)$,
$(u, v)$ from $N(u)$, or both.
Once $N(u) = \emptyset$, all cross edges leaving $u$ have 
their flow assigned.

Finally, we complete the decomposition using $A_\reds(u, v)$ 
and $A_\blues(u, v)$, for each cross edge $(u, v)$.
While both $A_\reds(u, v)$ and $A_\blues(u, v)$ are nonempty,
let $(r, F_r) \in A_\reds(u, v)$ and $(b, F_b) \in A_\blues(u, v)$.
Output $\transp(r, b) := \min\{F_r, F_b\}$, update
$F_r \gets F_r - \transp(r, b)$ and $F_b \gets F_b - \transp(r, b)$,
and remove from the lists any pair $(p, F)$ for which $F = 0$.

We describe a charging scheme to prove this recovery routine
takes $O(n/\eps^d)$ time.
We charge the list union which constructs $L(u)$ to the children of $u$.
Each iteration performs a constant number of list operations and either
removes a node from $L(u)$, or a cross edge from $N(u)$.
Each removal occurs exactly once for every $r \in \reds$ 
and $(u, v) \in \crossE$,
so we charge iterations to the $r \in \reds$ or $(u, v) \in \crossE$
removed that iteration.
The processing of $A_\reds(u, v)$ and $A_\blues(u, v)$ can also
be charged per iteration to the pair $(p, F)$ removed in that iteration,
which is then charged back to the $p \in \reds \cup \blues$ or 
$(u, v) \in \crossE$ whose removal introduced $(p, F)$.
Thus, the total recovery time is
$O(|\sparseV| + |\crossE|) = O(n/\eps^d)$.

We computed an optimal pseudoflow $\flow^*: \sparseE \to \nats$ using 
the algorithm by Lee and Sidford~\cite{LS-flow}.
Since $|\sparseE| = O(n/\eps^d)$, the Lee-Sidford algorithm 
takes $\tilde{O}(n^{3/2} \eps^{-d} \polylog U)$ time;
where $U = \max_{p \in \reds \cup \blues} \fsupply(p)$ is the maximum demand,
solving the minimum cost flow problem dominates the running time 
over the construction of $\sparseG$ and recovery steps.
\begin{theorem}
\label{theorem:wspd_main}
	Let $\inst$ be an instance of the transportation problem in $\reals^d$ where $d$ is a constant.
	Let $\inst$ have size $n$,
	and let $\eps > 0$ be a constant.
	A transportation map $\transp$ for $\inst$ can be computed in
	$O(n^{3/2}\eps^{-d}\polylog n \polylog U)$ time with cost 
	$\tfcost(\transp) \leq (1 + \eps) \tfcost(\transp^*)$.
\end{theorem}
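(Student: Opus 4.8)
The plan is to assemble the components built above --- the sparse graph $\sparseG$, the cost analysis relating transports to pseudoflows, the Lee--Sidford min-cost-flow solver, and the greedy recovery routine --- and then verify that the running time is dominated by the flow computation.

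First I would construct the instance exactly as described: build the compressed quadtree $\qtree$ on $\reds \cup \blues$ in $O(n\log n)$ time, duplicate it into the up-tree $\uptree$ and down-tree $\downtree$, compute the pair decomposition $\wspd$ with separation parameter $\delta = \eps/2$ (so $|\wspd| = O(n/\eps^d)$), take $\crossE = \wspd$, and assign the arc costs $c(\cdot)$ and the flow-supplies $\fsupply'$. This gives a min-cost-flow instance with $|\sparseV| = O(n)$ and $|\sparseE| = O(n/\eps^d)$, built in $O(n\log n + n/\eps^d)$ time, and by the triangle inequality together with property~\ref{item:wspd_prop2} every cross edge satisfies the per-unit distance estimate \eqref{equation:wspd_cross_cost}.

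Next I would invoke the cost analysis. Since $\sparseG$ is a DAG in which every pair $(r,b)$ has the unique path $p(r,b)$ through a single cross edge, the assignment $\transp \mapsto \flow_\transp$ satisfies $\tfcost(\flow_\transp) \le \tfcost(\transp)$, and conversely the flow-decomposition theorem lets us turn any pseudoflow $\flow$ into a transportation map $\transp_\flow$ with $\tfcost(\transp_\flow) \le (1+\eps)\tfcost(\flow)$. Applying both to an optimal pseudoflow $\flow^*$ of $\sparseG$ gives the chain $\tfcost(\transp^*) \le \tfcost(\transp_{\flow^*}) \le (1+\eps)\tfcost(\flow^*) \le (1+\eps)\tfcost(\flow_{\transp^*}) \le (1+\eps)\tfcost(\transp^*)$, so any transportation map recovered from a flow decomposition of $\flow^*$ costs at most $(1+\eps)\tfcost(\transp^*)$.

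Then I would compute $\flow^*$ with the Lee--Sidford algorithm; with $O(n/\eps^d)$ arcs, $O(n)$ vertices, integral costs, and flow-supplies bounded by $U$, this takes $\tilde{O}(n^{3/2}\eps^{-d}\polylog U)$ time. Finally, applying the two-phase greedy recovery --- a postorder pass over $\uptree$ that distributes each red supply among the outgoing cross edges, a symmetric postorder pass over $\downtree$, and then a merge of the lists $A_\reds(u,v)$ and $A_\blues(u,v)$ at each cross edge --- produces $\transp := \transp_{\flow^*}$ in $O(|\sparseV| + |\crossE|) = O(n/\eps^d)$ time by the charging argument. Summing, the flow computation dominates and the total time is $O(n^{3/2}\eps^{-d}\polylog n\,\polylog U)$. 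The step that needs the most care is the recovery routine: one must check that the greedy per-tree traversals really do realize a flow decomposition of $\flow^*$ (which holds because $\uptree$ and $\downtree$ are arborescences, so a postorder sweep never leaves flow unassigned) and that each recovered path uses exactly one cross edge, so that \eqref{equation:wspd_cross_cost} applies to it and the cost bound above goes through unchanged.
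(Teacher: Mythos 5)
Your proposal matches the paper's proof step for step: the same construction of $\sparseG$ from the compressed quadtree and WSPD with $\delta = \eps/2$, the same sandwich argument via $\flow_\transp$ and $\transp_\flow$ showing that any flow decomposition of $\flow^*$ yields a $(1+\eps)$-approximate transportation map, the same invocation of the Lee--Sidford solver on $O(n/\eps^d)$ arcs, and the same two-phase postorder greedy recovery whose cost is charged to vertices and cross edges. There are no gaps; the extra care you flag at the end (that the greedy per-tree sweeps realize a decomposition because $\uptree$ and $\downtree$ are arborescences, and that every path crosses exactly once so \eqref{equation:wspd_cross_cost} applies) is exactly the verification the paper performs.
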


\subsection{Extension to doubling metric}

As in Section~\ref{section:grids}, 
we use an analagous structure for the grid-based partitioning.
Here, we seek to replace the WSPD built from a compressed quadtree.

Since we use the WSPD in a relatively black-box fashion,
it is sufficient to use the near-linear time WSPD construction by
Har-Peled and Mendel \cite{har2005fast}. 
This construction is a randomized algorithm which builds a WSPD for doubling metrics 
with $O(n\eps^{-O(D)})$ pairs in $O(2^{O(D)}n\log n + n\eps^{-O(D)})$ expected time.
To elaborate, their main result is a fast construction for an 
analogue to compressed quadtrees in doubling metrics called a \emph{net tree};
one can think of this as a linear-size version of the net hierarchy we discussed in Section~\ref{section:grids}.
The size of the net tree has no dependency on the spread.
Given the net tree, a WSPD is built using same greedy algorithm used for the compressed quadtree.
For our algorithm, we use the net tree whenever we would have used the compressed quadtree in the
Euclidean case.

\section{An Exact Algorithm}
\label{section:orlin}
In this section, we present an $\softO(n^2)$ time exact algorithm for the 
transportation problem in the plane.
To this end, we make modifications to an uncapacitated minimum cost flow 
algorithm due to Orlin~\cite{orlin1993faster}.
We begin by expanding the description of minimum cost flow from 
Section~\ref{section:wspd} to describe least-cost augmention algorithms.
The changes are documented in Section~\ref{subsection:orlin_algo},
and the resulting algorithm gives:

\begin{theorem}
\label{theorem:orlin_time}
	Given an instance $\inst = \tinst{}{}$
	of the transportation problem in $\reals^2$,
	an optimal transportation map can be computed in 
	$O(n^2 \polylog n)$ time.
\end{theorem}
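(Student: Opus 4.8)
The plan is to implement Orlin's strongly polynomial minimum cost flow algorithm~\cite{orlin1993faster} on the transportation network --- the complete bipartite graph with vertex set $\reds \cup \blues$, arc set $\reds \times \blues$, arc costs $c(r,b)=\dist{r}{b}$, supplies $\fsupply(r)=\tsupply(r)$ and demands $\fsupply(b)=-\tsupply(b)$ --- and then read off $\transp$ from the optimal flow $f^*$ as $\transp(r,b)=f^*(r,b)$. Specialized to such an uncapacitated instance on $n$ vertices, Orlin's algorithm runs in $O(n\log n)$ phases; each phase maintains a pseudoflow together with a vertex potential $y(\cdot)$ under which every residual arc has nonnegative reduced cost, computes a single-source shortest-path tree in the residual graph by Dijkstra's algorithm from the excess vertices, augments flow along a shortest path, and occasionally contracts an arc whose flow has grown large. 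Because the network has $\Theta(n^2)$ arcs, a textbook Dijkstra costs $\Theta(n^2)$ per phase, which is exactly the naive $\tilde{O}(n^3)$ bound quoted in Section~\ref{section:intro}; the whole point is to perform each shortest-path computation in $\tilde{O}(n)$ time using planar geometry, so that $O(n\log n)$ phases cost $O(n^2\polylog n)$ in total.

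First I would make the residual graph ``mostly implicit''. The forward arcs $r\to b$ are uncapacitated, hence always residual, with reduced cost $\dist{r}{b}-y(r)+y(b)\ge 0$; the only other residual arcs are the arcs currently carrying flow and their reverses, all of reduced cost $0$. I would keep the invariant that the pseudoflow support is a forest, so only $O(n)$ residual arcs are ``explicit'': whenever an augmentation closes an undirected cycle in the support, that cycle is residual in both orientations, so it has reduced cost $0$ (it is $\ge 0$ both ways and telescopes), and flow can be cancelled around it at no change in cost or imbalance, restoring acyclicity. Writing $d'(v):=d(v)-y(v)$, relaxing a forward arc is just $d'(b)\gets\min(d'(b),\,d'(r)+\dist{r}{b})$, i.e.\ an additively weighted Euclidean relaxation. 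So I would run Dijkstra with a dynamic planar data structure --- of the kind used by Kaplan~\etal~\cite{kaplan2017dynamic} for geometric bipartite matching --- that maintains a set of weighted red sites together with a set of blue sites and supports insertion and deletion of sites and extraction of the blue site nearest, in weight-plus-Euclidean-distance, to the current red set, each operation in $\tilde{O}(1)$ amortized time. A red vertex is inserted (with weight $d'(r)$) when it is settled, a blue vertex is extracted (and assigned $d'(b)$) once, tentative labels of red vertices are kept in an ordinary heap and updated only along the $O(n)$ explicit arcs, and the next vertex to settle is the smaller of the heap minimum and the current nearest-blue value. Hence one Dijkstra, and so one phase of Orlin's algorithm, takes $\tilde{O}(n)$ time, while the augmentation and cycle cancellations along the $O(n)$-edge support take a further $O(n)$ time.

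The step I expect to be the main obstacle is reconciling this geometric search with Orlin's arc contractions, since a contracted supernode is no longer a point of the plane. A contracted arc is necessarily tight --- both directions are residual and have nonnegative reduced cost, so its reduced cost is $0$ --- hence each supernode is a set of original vertices pairwise at residual distance $0$, and contraction merely batches vertices that Dijkstra would settle at the same time; it changes no shortest-path value. I would carry the supernodes in a union--find structure, store one potential per supernode plus a per-original-vertex offset giving its potential relative to the supernode's representative (updated by telescoping along the tight contracted arc at each contraction, and by a uniform shift each time a Dijkstra tree raises a supernode's potential), and, in the geometric Dijkstra, insert \emph{all} constituent original red points of a supernode (with weights corrected by their offsets) at the instant that supernode is settled. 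Since Orlin's algorithm performs $O(n)$ contractions in total and each one --- together with the bookkeeping rebuilds it triggers --- touches $O(n)$ points, this contributes only $O(n^2)$ work overall. What remains is purely a correctness argument: that querying the structure over the unsettled blue vertices always returns the vertex of minimum tentative label, that the potential/offset bookkeeping keeps all residual reduced costs nonnegative, and that Orlin's $O(n\log n)$-phase analysis applies verbatim to this uncapacitated instance. Putting the pieces together yields the claimed $O(n^2\polylog n)$ running time and proves Theorem~\ref{theorem:orlin_time}.
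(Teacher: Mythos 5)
Your plan is essentially the paper's own (Section~\ref{section:orlin}): run Orlin's strongly polynomial algorithm and replace each single-source shortest-path computation with a geometric Dijkstra driven by a dynamic additively-weighted bichromatic closest-pair structure of the kind in~\cite{kaplan2017dynamic}, so that each of the $O(n\log n)$ phases costs $\tilde{O}(n)$. The sign bookkeeping with $d'(v) = d(v) - y(v)$ is correct and matches the paper's weight assignment in the BCP. However, there are two genuine gaps and one smaller omission.

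First, you finish by setting $\transp(r,b)=f^*(r,b)$, but Orlin's algorithm with contractions does not hand you such an $f^*$. Once an arc is contracted, subsequent augmentations traverse the supervertex without recording the internal routing; the algorithm terminates with a flow only on the uncontracted arcs, together with optimal potentials $\dual^*$ on $V$. Recovering an actual optimal flow on $G$ is a separate and nontrivial step, and the paper devotes Section~\ref{subsection:orlin_recovery} to it (compute a shortest-path tree $T^*$ under $c_{\dual^*}$, then argue by a perturbation/strong-duality argument that an optimal flow lives on the edges of $T^*$ and is the unique tree flow meeting all supplies and demands). Your proposal does not address this at all. If your intent is that tracking flow on the support forest, including contracted arcs, obviates the recovery step, you would need to say how you maintain those values through augmentations that pass through supervertices; that is not automatic.

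Second, to keep the explicit support at $O(n)$ arcs you propose to cancel any zero-reduced-cost cycle the moment an augmentation creates one. That this cycle has reduced cost zero is correct, but the rest is not a one-liner: Orlin's contraction rule and scaling invariants concern flow values on individual arcs, and you would have to show cancellation never drops a contracted arc below the threshold that justified its contraction, and that the total amount of cancelling stays bounded so the $O(n\log n)$-phase analysis is undisturbed. The paper takes a different route: it proves (Lemma~\ref{lemma:orlin_acyclic}) that no cycle ever forms, as a consequence of the Dijkstra implementation relaxing existing support arcs before querying the BCP structure. That proof is a careful invariant argument over the shortest-path trees $T_i$ and the support sets $F_i$, not a consequence of telescoping reduced costs.

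Finally, a smaller point: the complete bipartite graph oriented $\reds \to \blues$ is not strongly connected, so some red vertices may be unreachable in a residual Dijkstra from $s$, leaving $d_s$ undefined exactly where the potential update $\dual(v) \gets \dual(v) - d_s(v)$ needs it. The paper adds high-cost dummy arcs $(b,r)$ with $M \ge 2nUC$ and batch-relaxes them for the ``dead'' red vertices at the end of each Dijkstra; your proposal needs an analogous device.
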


Before we begin, we note that the algorithm requires strong connectivity of the 
input graph in order to define some distances. 
The transportation-to-MCF reduction we described in 
Section~\ref{subsection:mcf_def} is not strongly connected, but We can achieve 
strong connectivity by adding uncapacitated dummy arcs 
$\{(b, r) \mid (r, b) \in \reds \times \blues\}$ with sufficiently high cost, 
say $M \geq 2nUC$.
These dummy arcs have zero flow in an optimal flow.

\subsection{Preliminaries: improving flows and optimality conditions}

For the rest of this section, we work with an uncapacitated input network, 
where $u(v, w) = \infty$ for every $(v, w) \in E$.
Given a pseudoflow $f$, we define the capacitated \emph{residual network} as 
follows. 
For each arc $e = (v, w) \in E$, create an arc in the reverse direction 
$e^R = (w, v)$ and call the set of reverse arcs $E^R$.
We extend the costs to reverse arcs with $c(e^R) := -c(e)$ for $e^R \in E^R$.
Let the \emph{residual capacity} of each arcs in $E \cup E^R$ be
\begin{equation*}
	u_f(v, w) := \begin{cases}
		\infty & (v, w) \in E \\
		f(w, v) & (v, w) \in E^R
	\end{cases}
\end{equation*}
Let the \emph{residual arcs} be 
$E_f := \{(v, w) \in E \cup E^R \mid u_f(v, w) > 0\}$,
and the \emph{residual graph} be $G_f = (V, E_f)$.
The residual network is $N_f = (G_f, c, u_f, \fsupply)$.
Pseudoflows $f'$ in $G_f$ can \emph{augment} $f$ to produce a new pseudoflow:
the arc-wise addition $f + f'$ results in a valid pseudoflow for $G$.
If $f'$ is a residual pseudoflow with (i) $e_{f'}(v) = 0$ on vertices with 
$e_f(v) = 0$, (ii) $e_{f'}(v) \leq 0$ on vertices with $e_f(v) > 0$, and (iii) 
$e_{f'}(v) \geq 0$ on vertices with $e_f(v) < 0$, and 
(iv) $\sum_{v \in V}|e_{f'}(v)| < \sum_{v \in V}|e_f(v)|$, 
we call it an \emph{improving flow}.
Intuitively, an improving flow routes some amount of excess to deficits,
reducing the total amount of excess/deficit.
If an improving flow $f'$ is a flow on a simple path from an excess vertex to a 
deficit vertex, we call it an \emph{augmenting path}.
Generally, an improving flow $f'$ may be any residual pseudoflow (e.g. a 
blocking flow), but the improving flows of this algorithm will be augmenting 
paths.

Next, we introduce some criteria for proving a flow is mininum-cost.
The MCF problem can be expressed as a linear program with variables $f$.
\begin{equation*}
\begin{aligned}
	& \min & & \sum_{(v, w) \in E}{c(v, w) f(v, w)} & \\
	& \text{s.t.} & & \sum_{(v, w) \in E} f(v, w) - \sum_{(w, v) \in E} f(w, v) = \fsupply(v) & \forall v \in V \\
	& & & f(v, w) \geq 0 & \forall (v, w) \in E 
\end{aligned}
\end{equation*}
The dual program has a variable for each vertex, that we call 
\emph{potentials} $\dual : V \to \reals$.
\begin{equation*}
\begin{aligned}
	& \max & & \sum_{v \in V}{\fsupply(v) \dual(v)} & \\
	& \text{s.t.} & & \dual(v) - \dual(w) \leq c(v, w) & \forall (v, w) \in E
\end{aligned}
\end{equation*}
The dual feasibility conditions are often restated in terms of the 
\emph{reduced cost}
\begin{equation*}
	c_\dual(v, w) := c(v, w) - \dual(v) + \dual(w).
\end{equation*}
That is, a feasible $\dual$ has $c_\dual(e) \geq 0$ for all arcs.
The complementary slackness conditions of this LP can also be written using 
reduced cost.
\begin{equation}
\label{eqn:optimality}
\begin{aligned}
	c_\dual(v, w) > 0 &\implies f(v, w) = 0 \\
	f(v, w) > 0 &\implies c_\dual(v, w) = 0
\end{aligned}
\end{equation}
The complementary slackness conditions are \emph{optimality conditions} --- 
$f^*$ is the minimum-cost flow if and only if there exists potentials $\dual^*$
that satisfy (\ref{eqn:optimality}).
We say that pseudoflow $f$ and potentials $\dual$ ``satisfy the optimality
conditions'' if together they meet (\ref{eqn:optimality}).
An equivalent statement of (\ref{eqn:optimality}), in terms of the residual 
graph, is that $c_{\dual^*}(v, w) \geq 0$ on all arcs of $G_{f^*}$.

\subsection{Least-cost augmentation algorithms}

A number of primal-dual algorithms for min-cost flow (including 
\cite{orlin1993faster}) augment flow along shortest paths in $G_f$ with respect 
to reduced costs.
We will refer to such algorithms as \emph{least-cost augmentation} algorithms.

Least-cost augmentation algorithms typically start with $f=0, \dual=0$, a 
pseudoflow-potential pair which trivially meets the optimality conditions, 
but where $f$ will generally have excess and not be a flow.
The excess/deficit is gradually reduced to 0 by augmenting with improving flows
along shortest paths in $G_f$ w.r.t. $c_\dual$.
Using improving flows along shortest paths guarantees that the new pseudoflow 
meets the optimality conditions with some set of feasible potentials.

\begin{lemma}[\cite{orlin1993faster} Lemma 4]
\label{lemma:leastcost}
	Let $f$ be a pseudoflow satisfying the optimality condition with 
	feasible potentials $\dual$, and $f'$ be derived from $f$ by augmenting
	along a shortest path in $G_f$ with respect to costs $c_\dual$.
	Then $f'$ also satisfies the optimality conditions with respect to
	another set of feasible potentials $\dual'$.
\end{lemma}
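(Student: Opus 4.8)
The plan is to exhibit the new potential function explicitly and verify it directly. Let $d(\cdot)$ denote the shortest-path distance from the set of excess vertices to each vertex in $G_f$ with respect to the reduced cost $c_\dual$; since $f$ satisfies the optimality conditions, $c_\dual(v,w) \geq 0$ on every residual arc, so these distances are well-defined (finite on the vertices reachable from an excess vertex, and we may take $d(v) = 0$ on every excess vertex since augmenting starts there). I would set $\dual'(v) := \dual(v) - d(v)$. The augmenting path $P$ is a shortest path from some excess vertex $s$ to some deficit vertex $t$, and $f'$ is obtained from $f$ by pushing a positive amount of flow along $P$.

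First I would check that $\dual'$ is feasible for $G_{f'}$, i.e.\ that $c_{\dual'}(v,w) \geq 0$ on every arc of $G_{f'}$. The reduced cost transforms as $c_{\dual'}(v,w) = c(v,w) - \dual'(v) + \dual'(w) = c_\dual(v,w) + d(v) - d(w)$. For arcs $(v,w)$ already present in $G_f$, the triangle inequality $d(w) \leq d(v) + c_\dual(v,w)$ gives $c_{\dual'}(v,w) \geq 0$ immediately. The arcs of $G_{f'}$ that are \emph{not} in $G_f$ are exactly the reverse arcs $(w,v)$ of the arcs $(v,w)$ lying on $P$ that became saturated (or for which a new residual reverse arc was created); since $P$ is a shortest path, every such forward arc is \emph{tight}, meaning $d(w) = d(v) + c_\dual(v,w)$, equivalently $c_{\dual'}(v,w) = 0$, and hence the reverse arc has $c_{\dual'}(w,v) = -c_{\dual'}(v,w) = 0 \geq 0$. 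This is the crux of the argument, and the only place where "shortest path" (as opposed to an arbitrary improving path) is used.

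Next I would record that the optimality conditions for the pair $(f', \dual')$ follow from feasibility of $\dual'$ on $G_{f'}$, via the equivalence noted just before the lemma statement: $c_{\dual'}(v,w) \geq 0$ on all arcs of $G_{f'}$ is equivalent to the complementary-slackness form~\eqref{eqn:optimality} holding for $(f',\dual')$. Concretely, if $f'(v,w) > 0$ then the reverse arc $(w,v) \in E_{f'}$, so $c_{\dual'}(w,v) \geq 0$, i.e.\ $c_{\dual'}(v,w) \leq 0$; combined with $c_{\dual'}(v,w) \geq 0$ (the forward arc is in $E_{f'}$ since capacities are infinite) this forces $c_{\dual'}(v,w) = 0$. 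The main obstacle is purely bookkeeping: being careful about which residual arcs are newly created by the augmentation along $P$ and confirming each of them is the reverse of a tight arc; once the transformation formula $c_{\dual'}(v,w) = c_\dual(v,w) + d(v) - d(w)$ and the tightness of $P$'s arcs are in hand, every case closes by a one-line inequality. I would also remark that $d(v)$ need only be defined on vertices reachable from an excess vertex; on unreachable vertices one may leave $\dual$ unchanged (set $d(v) = 0$), and no residual arc of $G_{f'}$ crosses from a reachable to an unreachable vertex in the "wrong" direction, so feasibility is unaffected.
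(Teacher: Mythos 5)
Your proposal is correct and matches what the paper (and Orlin's Lemma 4) outlines: the same potential update $\dual'(v)=\dual(v)-d_s(v)$, the reduced-cost transformation $c_{\dual'}(v,w)=c_\dual(v,w)+d(v)-d(w)$, the triangle inequality for arcs already in $G_f$, and tightness of the shortest path for the newly created reverse arcs, with complementary slackness for $(f',\dual')$ read off from $c_{\dual'}\ge 0$ on $E_{f'}$. One small caveat on your closing remark: setting $d(v)=0$ on unreachable vertices is not safe in general, since a residual arc from an unreachable $w$ into a reachable $v$ would give $c_{\dual'}(w,v)=c_\dual(w,v)-d(v)$, which can be negative---the usual fix is to set $d$ on unreachable vertices to the maximum finite distance---but in this paper's setting the dummy arcs make the graph strongly connected, so every vertex is reachable and the issue never arises.
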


Thus, when $f$ has 0 imbalance everywhere, it must be an optimal flow.
The proof of Lemma~\ref{lemma:leastcost} includes a construction for the new 
potentials $\dual'$ in linear time, after computing the single-source
shortest path (SSSP) distances $d_s(\cdot)$ from one excess vertex $s$.
Specifically, $\dual'(v) = \dual(v) - d_s(v)$.

\subsection{Excess Scaling and Orlin's Algorithm}
\label{subsection:orlin_algo}

Least-cost augmentation algorithms (with augmenting paths) can be tuned by 
choosing which excess/deficit vertices to augment between, and the amount of 
flow sent per augmentation.
Orlin's algorithm is based on the \emph{excess-scaling} paradigm due to Edmonds 
and Karp~\cite{edmonds1972theoretical}, which augments flow in units of 
$\Delta$ (initially $\Delta = U$) from excess vertices with 
$e_f(v) \geq \Delta$ to deficit vertices with $e_f(v) \leq -\Delta$. 
Excess/deficit vertices which meet these thresholds are called \emph{active}.
Augmentations are performed from any active excess vertex to any active 
deficit vertex.
When there are either no more active excesses or no active deficits, $\Delta$ 
is halved.
Each sequence of augmentations with a fixed $\Delta$ is called a 
\emph{scaling phase}, and $\Delta$ is called the \emph{scale}.
As an invariant, there is at most $2n\Delta$ excess at the beginning of each scale,
and therefore $O(n)$ augmentating paths in a scaling phase before the active
excess vertices or active deficit vertices are depleted and the phase ends.
There are $O(\log U)$ scales before $\Delta < 1$. 
At this point, the remaining (integer) excess must be 0.

The algorithm in \cite{orlin1993faster} extends excess-scaling 
to obtain a strongly polynomial $O(n\log n)$ bound on the number of 
augmentations and scaling phases. 
At a high level, this is done by contracting edges, adjusting the set of active
vertices, and reducing $\Delta$ aggressively when possible.

We use $\hat{G} = (\hat{V}, \hat{E})$ to refer to the contracted multigraph, 
where each $\hat{v} \in \hat{V}$ is a contracted subset of $V$ and 
$\hat{E} = \{(v, w) \in E \mid v \in \hat{v}, w \in \hat{w}, \hat{v} \neq \hat{w}, \hat{v} \in \hat{V}, \hat{w} \in \hat{V}\}$.
Initially, $\hat{G}$ is identical to $G$ and each $\hat{v}$ is a singleton.
We refer to the elements of $\hat{V}$ as \emph{supervertices}.
The imbalance of $\hat{v} \in \hat{V}$ is defined to be 
$e_f(\hat{v}) = \sum_{v \in \hat{v}} e_f(v)$.
The algorithm maintains potentials for every $v \in V$, but only flow values
for the edges of $\hat{E}$.
Namely, flow values across of arcs within $\hat{v} \in \hat{V}$ are unknown.

The algorithm executes a single scaling phase as follows:
\begin{enumerate}
	\item Using any $1/2 < \alpha < 1$, excess supervertices (resp. deficit 
		supervertices) are active if $e_f(v) \geq \alpha\Delta$ (resp. 
		$e_f(v) \leq -\alpha\Delta$).

	\item If there are no active supervertices and zero flow on all 
		$\hat{E}$ arcs, $\Delta$ is set to 
		$\max_{\hat{v} \in \hat{V}} e_f(\hat{v})$.

	\item For any $(v, w) \in \hat{E}$ has $f(v, w) \geq 3n\Delta$, it is 
		contracted, merging $\hat{u} \gets \hat{v} \cup \hat{w}$,
		where $v \in \hat{v}$ and $w \in \hat{w}$.
		Set the flow of all arcs in $\hat{v} \times \hat{w}$ beside
		$(v, w)$ to 0.

	\item Otherwise, repeatedly perform least-cost augmentation from any
		active excess supervertex to any active deficit supervertex,
		in units of $\Delta$, until there are no more active excess
		supervertices or no more active deficit supervertices.
		Each augmentation involves the following steps:
		\begin{enumerate}
		\item Compute the SSSP tree from any $s \in \hat{v}$ of an 
			active excess supervertex $\hat{v}$.
			Let the SSSP distances from $s$ be $d_s(\cdot)$.
		\item Update potentials for all $v \in V$ as
			$\dual(v) \gets \dual(v) - d_s(v)$.
		\item Augment $\Delta$ flow from $s$ to any $t \in \hat{w}$ of 
			an active deficit supervertex $\hat{w}$, along the 
			shortest path from $s$ to $t$.
		\end{enumerate}
		
	\item $\Delta \gets \Delta/2$
\end{enumerate}
Intuitively, arcs are contracted when they have flow so high that no set of
future augmentations can bring the flow back to 0 --- these arcs are members
of the optimal flow's support.
As a result, the vertices within a supervertex are connected by a set of 
positive flow edges, although the exact flow value is unknown.
In the uncapacitated setting, $f(v, w) > 0$ induces $u_f(v, w) > 0$ and 
$u_f(w, v) > 0$, so the members of a supervertex are strongly connected by 
positive capacity residual edges. 

During a contraction of $\hat{v}$ and $\hat{w}$ ((3) above), we also set flow 
to 0 on all arcs between $\hat{v} \times \hat{w}$ besides the contracted arc.
Note that this does not change the imbalance of the new supervertex,
but the only flow support arcs within supervertices are contracted 
arcs.
As a result, our augmenting paths will only use contracted
arcs to traverse the interior of supervertices.
Thus, augmenting paths which pass through the interior of supervertices
will not change the flow support status of arcs internal to the supervertex.
We can track the flow support within supervertices by marking the 
contracted arcs as having positive flow and ignore the exact value, and this
status will not change throughout the augmentation process.

The output flow is one on the contracted graph, not $G$.
In contrast, the resulting optimal potentials are on $V$.
An extra step is taken to recover an optimal flow on $G$ from the optimal 
potentials, which we describe in Section~\ref{subsection:orlin_recovery}.
Our solution to this is new and extracts the optimal flow in near linear time.
Excepting this recovery step, Orlin bounds the running time of the other steps
in terms of the time to solve the SSSP problem.

\begin{theorem}[\cite{orlin1993faster} Theorem 4]
\label{theorem:orlin_orlin}
	The algorithm determines the minimum cost flow in the contracted 
	network in $O((n\log n) S(n, m))$ time, where $S(n, m)$ is the time
	to compute the single-source shortest path tree/distances in a graph of 
	$n$ vertices and $m$ edges with nonnegative costs.
\end{theorem}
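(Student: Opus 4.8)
The plan is to instantiate Orlin's algorithm on the minimum-cost flow formulation of $\inst$ from Section~\ref{subsection:mcf_def} (augmented, as noted there, with the high-cost dummy arcs $\{(b,r) \mid (r,b)\in\reds\times\blues\}$ to make the network strongly connected) and to show that, because the network is geometric, each single-source shortest-path computation demanded by the algorithm can be carried out in $O(n\,\polylog n)$ time instead of the naive $O(m)=O(n^2)$. Combined with the $O(n\log n)$ bound on the number of SSSP computations from Theorem~\ref{theorem:orlin_orlin}, this gives $O(n^2\,\polylog n)$ for everything except the final step that recovers a flow on the original (uncontracted) graph; that recovery step is handled separately in Section~\ref{subsection:orlin_recovery} in near-linear time, and $\transp(r,b)=f(r,b)$ then reads off the transportation map. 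Summing the two bounds yields the theorem.

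The heart of the argument is the geometric SSSP. Fix an augmentation, with current pseudoflow $f$, feasible potentials $\dual$, and contracted graph $\hat G$; by Lemma~\ref{lemma:leastcost} the reduced costs $c_\dual$ are nonnegative on every residual arc, so Dijkstra is valid. The residual arcs fall into three dense/sparse families: (i) forward cross arcs $\reds\to\blues$, all present, with $c_\dual(r,b)=\dist{r}{b}-\dual(r)+\dual(b)\ge 0$; (ii) reverse arcs $\blues\to\reds$, present only where $f(r,b)>0$, with $c_\dual=-\dist{r}{b}-\dual(b)+\dual(r)\ge 0$; (iii) dummy arcs $\blues\to\reds$, all present, with $c_\dual=M-\dual(b)+\dual(r)$; plus the contracted arcs, which let the search move freely through the interior of a supervertex (consistent with Orlin zeroing all other intra-supervertex flow). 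Families (i) and (iii) are handled exactly as in the dynamic-nearest-neighbor approach of Kaplan~\etal~\cite{kaplan2017dynamic}: when Dijkstra settles a red vertex $r$, insert it with additive weight $d(r)-\dual(r)$ into a planar dynamic \emph{additively-weighted nearest-neighbor} structure, so that the tentative distance of an unsettled $b$ across forward arcs is $\dual(b)$ plus one additively-weighted nearest-neighbor query; the uniform cost $M$ reduces all dummy-arc relaxations to maintaining a single running minimum of $d(b)-\dual(b)$ over settled blue vertices. Each Dijkstra step then costs $\polylog n$. To keep family (ii) cheap I intend to maintain $f$ with acyclic support: immediately after a least-cost augmentation the support of $f$ can contain only zero-reduced-cost cycles, which can be canceled at no cost, so the support stays a forest of $O(n)$ arcs and the reverse arcs can be relaxed explicitly during Dijkstra.

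The remaining pieces are routine but must be assembled within budget: the nearest-neighbor structure and the list of reverse arcs are rebuilt/updated between augmentations and across contractions in $O(n\,\polylog n)$ amortized time; contracting an $\hat E$ arc merges two supervertices (and the forest of contracted arcs) in $O(1)$ bookkeeping; and potentials are updated globally by $\dual(v)\gets\dual(v)-d_s(v)$, as in Lemma~\ref{lemma:leastcost}, in $O(n)$ per augmentation. Multiplying by the $O(n\log n)$ augmentations and scaling phases of Orlin's algorithm keeps every component at $O(n^2\,\polylog n)$, and the recovery procedure of Section~\ref{subsection:orlin_recovery} adds only a near-linear term.

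The main obstacle I anticipate is controlling the reverse (residual) arcs. A priori an augmenting-path algorithm can leave $f$ with support of size $\Theta(n^2)$, which would make a direct pass over family (ii) cost $\Theta(n^2)$ per phase and destroy the speedup. The acyclic-support idea forces the support to $O(n)$, but one must verify that the cancellations are themselves affordable — they are, since a single augmentation along a path of $\ell$ arcs increases the support by at most $\ell$ while each cancellation removes at least one arc, so the cancellation work amortizes against the augmentations — and that cancellation interacts correctly with Orlin's contraction bookkeeping and with the marking of intra-supervertex arcs as positive-flow. Getting this amortization to coexist cleanly with the dynamic nearest-neighbor maintenance, while still meeting the $O(n^2\,\polylog n)$ target, is the part that demands the most care.
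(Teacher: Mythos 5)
The statement you were asked to prove is Orlin's Theorem~4: that the excess-scaling-with-contraction algorithm of Section~\ref{subsection:orlin_algo} computes a minimum cost flow in the contracted network using only $O(n\log n)$ single-source shortest-path computations, with all other work dominated. In the paper this result is quoted verbatim from \cite{orlin1993faster} and used as a black box. Your proposal does not prove it; it \emph{invokes} it (``combined with the $O(n\log n)$ bound on the number of SSSP computations from Theorem~\ref{theorem:orlin_orlin}'') and then develops the geometric implementation of each SSSP call and the recovery of the flow from the optimal duals. In other words, you have sketched the proof of Theorem~\ref{theorem:orlin_time}, the downstream result, while treating the statement actually at issue as a given. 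As a proof of Theorem~\ref{theorem:orlin_orlin} this is circular.

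What is missing is the combinatorial analysis of the scaling/contraction framework itself, which is the entire content of the theorem: (i) the invariant that the total excess at the start of a phase with scale $\Delta$ is $O(n\Delta)$, so each phase performs $O(n)$ augmentations of $\Delta$ units each before the active excess or deficit vertices are exhausted; (ii) the argument that within $O(\log n)$ scaling phases of any augmentation, some arc on its path either returns to zero flow or accumulates flow at least $3n\Delta'$ relative to the then-current scale $\Delta'$ and is therefore contracted, so the algorithm cannot run for more than $O(\log n)$ phases per contraction; and (iii) the bound of $n-1$ on the number of contractions, each reducing $|\hat{V}|$ by one. Together these yield $O(n\log n)$ augmentations and phases, hence $O(n\log n)$ SSSP computations, with the remaining per-phase bookkeeping (threshold tests, resetting $\Delta$, potential updates) absorbed into $S(n,m)$. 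None of these steps appear in your write-up. Your material on the dynamic bichromatic-closest-pair Dijkstra, the acyclicity of the flow support, and the dual-based recovery is a reasonable account of the paper's Sections~\ref{subsection:orlin_spt} and~\ref{subsection:orlin_recovery}, but it answers a different question.
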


In Section~\ref{subsection:orlin_spt}, we provide a geometric implementation of
Dijkstra's algorithm by using a dynamic data structure for bichromatic closest 
pair.
This approach is not new to geometric transportation/matching algorithms, 
see \cite{vaidya1989geometry,atkinson1995using,sharathkumar2012algorithms}.
In the end, $S(n, m) = \softO(n)$, completing the proof of 
Theorem~\ref{theorem:orlin_time}.
From this implementation, we can also prove the flow support has size $O(n)$ 
throughout the algorithm, meaning steps which check properties of flow support 
edges (e.g. (2) and (3) above) take only $O(n)$ time.

\subsection{Computing shortest path trees}
\label{subsection:orlin_spt}

Consider Dijkstra's algorithm for computing shortest distances $d_s(v)$ to all 
vertices from a single source $s \in V$.
Dijkstra's algorithm expands a set $S \subseteq V$ of vertices for which it has 
calculated $d_s$.
Initially $S = \{s\}$, $d_s(s) = 0$, and $d_s(v)$ is unknown for all $v \neq s$.
In each iteration, Dijkstra's algorithm grows $S$ by \emph{relaxing} 
the edge to the minimum-distance neighbor of $S$
\begin{equation*}
	(v, w) = \argmin_{(v, w) \in S \times (V \setminus S)} d_s(v) + c_\dual(v, w),
\end{equation*}
fixing $d_s(w) \gets d_s(v) + c_\dual(v, w)$ and adding $w$ to $S$.
The shortest path tree from $s$ is composed of the relaxed edges
and can therefore be computed alongside the distances.

Geometrically, we can find this $\argmin$ efficiently, i.e. without examining 
$\Theta(n^2)$ edges over the course of Dijkstra's.
Our solution is to use a dynamic bichromatic closest pair data structure.
Let $P, Q \subset \reals^2$ be point sets with weights 
$W: P \cup Q \to \reals$.
The \emph{bichromatic closest pair} (BCP) between $P$ and $Q$ is the pair 
$(p, q) \in P \times Q$ minimizing the additively weighted distance 
$\dist{p}{q} + \omega(p) + \omega(q)$.
The dynamic BCP data structure of Kaplan~\etal~\cite{kaplan2017dynamic}
maintains the BCP under insertion and deletion of points and implements each 
update/query operation in $\softO(1)$ time.

\begin{lemma}
	Using a dynamic bichromatic closest pair data structure,
	we can compute the shortest path tree and distances under reduced costs
	in $\softO(n)$ time.
\end{lemma}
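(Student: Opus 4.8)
The plan is to implement Dijkstra's algorithm where the ``frontier'' query --- finding the minimum-reduced-cost edge from the settled set $S$ to $V \setminus S$ --- is answered by a bichromatic closest pair (BCP) structure, rather than by scanning all $\Theta(n^2)$ edges. The key observation is that the reduced cost $c_\dual(v,w) = \dist{v}{w} - \dual(v) + \dual(w)$ on an original forward arc $(r,b) \in \reds \times \blues$ is exactly an additively weighted Euclidean distance: set $\omega(r) = -\dual(r)$ and $\omega(b) = \dual(b)$, so $c_\dual(r,b) = \dist{r}{b} + \omega(r) + \omega(b)$. Since in Dijkstra's we want $\argmin_{(v,w)\in S\times(V\setminus S)} d_s(v) + c_\dual(v,w)$, we fold the tentative distance $d_s(v)$ into the weight of the point $v$ on the $S$ side. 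Concretely, I would maintain two BCP structures: one with $P = \reds \cap S$ (weighted by $d_s(\cdot) - \dual(\cdot)$) against $Q = \blues \setminus S$ (weighted by $\dual(\cdot)$), and a symmetric one for blue-settled-to-red-unsettled, plus the reverse (residual) arcs.

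\textbf{Handling the residual and dummy arcs.} The residual graph also contains reverse arcs $e^R=(b,r)$ with cost $-\dist{r}{b}$, but these only exist when $f(r,b)>0$; by the analysis in Section~\ref{subsection:orlin_algo} the flow support has size $O(n)$ throughout (contracted arcs inside supervertices plus at most $O(n)$ active arcs between supervertices), so I would simply keep an explicit adjacency list of the $O(n)$ residual reverse arcs and relax them by brute force in each Dijkstra iteration, contributing $O(n\log n)$ total. The high-cost dummy arcs $(b,r)$ added for strong connectivity are also $O(n^2)$ in number, but they all share the uniform cost $M$; the minimum-reduced-cost dummy arc out of $S$ is found by a second pair of BCP/closest-pair queries with the constant $M$ baked in (equivalently, $\argmin$ over an additively-weighted distance again), so they cost only $\softO(1)$ per iteration as well. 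Finally, the internal contracted arcs of supervertices are zero-reduced-cost (they carry positive flow, so complementary slackness forces $c_\dual=0$); traversing a supervertex's interior is free and handled by pre-contracting each supervertex into a single Dijkstra node whose interior SSSP tree is the spanning tree of contracted arcs.

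\textbf{The update sequence.} Each Dijkstra iteration: (i) query all BCP/closest-pair structures and the $O(n)$ explicit residual arcs to get the globally minimum candidate; (ii) this settles one new vertex $w$, fixing $d_s(w)$; (iii) \emph{move} $w$ from the ``unsettled'' side to the ``settled'' side in every structure --- one deletion and one insertion per structure --- using the now-known weight $d_s(w) - \dual(w)$ (or the red/blue variant). There are $O(n)$ iterations and $O(1)$ structures, so $O(n)$ insertions/deletions and $O(n)$ queries, each $\softO(1)$ by the Kaplan~\etal~\cite{kaplan2017dynamic} data structure, for $\softO(n)$ total; the explicit-arc relaxations add $O(n^2)$? --- no: the support is $O(n)$ so that is $O(n)$ per... actually $O(n)$ arcs scanned per iteration over $O(n)$ iterations is $O(n^2)$, which is too slow, so instead I would insert each of the $O(n)$ residual reverse arcs into a priority queue keyed by $d_s(\text{tail}) + c_\dual$ and only re-key the $O(1)$ arcs incident to the newly-settled vertex per iteration, giving $\softO(n)$. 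Combined with rebuilding the potentials $\dual(v)\gets\dual(v)-d_s(v)$ in $O(n)$ time (Lemma~\ref{lemma:leastcost}), one shortest-path-tree computation costs $\softO(n)$, i.e.\ $S(n,m) = \softO(n)$.

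\textbf{The main obstacle} is the bookkeeping around supervertices and the flow support: I need to argue carefully that throughout Orlin's algorithm the only residual arcs \emph{not} captured by the BCP structures (i.e.\ the negative-cost reverse arcs and the interior contracted arcs) number $O(n)$, so that they can be handled explicitly without blowing up the per-iteration cost; this relies on the flow-support bound promised at the end of Section~\ref{subsection:orlin_algo}, which in turn is proved \emph{using} this very SSSP implementation, so the two results must be established together by a simultaneous induction over scaling phases. A secondary subtlety is that when a supervertex is formed by contraction, the weights of its constituent points must be kept consistent with a single potential offset so that the interior is genuinely zero-reduced-cost; this follows from Lemma~\ref{lemma:leastcost} maintaining the optimality conditions, but it must be checked that contraction (which zeroes out non-contracted arcs between the merged pair) preserves it.
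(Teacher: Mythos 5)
Your proposal is essentially the paper's argument: fold the tentative distance $d_s$ and the potentials into point weights so that the Dijkstra frontier query over $\reds\to\blues$ arcs becomes an additively-weighted BCP query; handle the $O(n)$ reverse (support) residual arcs explicitly; handle the high-cost uniform dummy arcs as a single batch; and rely on an $O(n)$ bound on the flow support.

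Two points are worth tightening. First, the paper does not use a priority queue with arbitrary tie-breaking for the support arcs: it deliberately relaxes frontier support arcs \emph{before} querying the BCP, using the observation that a support arc $(v,w)$ has $c_\dual(v,w)=0$ and hence $d_s(w)=d_s(v)$, so it is always a minimizer. This ordering is not cosmetic --- the acyclicity proof (Lemma~\ref{lemma:orlin_acyclic}), which is what yields the $O(n)$ bound on the support that your explicit-arc handling needs, argues precisely that by the time a non-support tree edge is relaxed, all reachable frontier support arcs have already been relaxed. If you merge support candidates and BCP candidates into one heap without a tie-break favoring support edges, that step of the acyclicity argument no longer goes through as written; your ``simultaneous induction'' remark correctly identifies this circularity, but the resolution requires the prioritization, not a generic heap. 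Second, the ``symmetric BCP for blue-settled-to-red-unsettled'' is unnecessary: in this MCF instance the only forward geometric arcs go $\reds\to\blues$; blue-to-red arcs are either flow-support reversals (handled explicitly) or dummy arcs of uniform cost $M$, for which a BCP is overkill --- the paper just relaxes them in a single pass at the end from $b=\argmax_{b\in S\cap\blues}\dual(b)$, since the constant cost makes the minimizer independent of the target red point.
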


We emphasize that the BCP query requires points in $\reals^2$, so will act on 
the original points in $V$ rather than supervertices directly.
To choose the starting vertex $s$, we use any $s = v \in V$ such that $v$ is a
member of an active excess supervertex $\hat{v} \in \hat{V}$.
The vertices within a supervertex are strongly connected in $G_f$ by flow 
support arcs, which must have reduced cost 0, by the optimality conditions. 
Thus, all members of a supervertex have the same $d_s$ distance.

\subsubsection*{Dijkstra's algorithm implementation}

For the BCP, we maintain $P = (S \cap \reds)$, 
$Q = ((V \setminus S) \cap \blues)$, and use weights 
$\omega(v) = d_s(v) + \dual(v)$ 
for $v \in P$ and $\omega(w) = -\dual(w)$ for $w \in Q$.
Consider an iteration of Dijkstra's algorithm, and let 
$v \in S, w \in V \setminus S$.
If $(v, w) \in E$ (i.e., directed from $\reds$ to $\blues$), then 
$d_s(v) + c_\dual(v,w) = \dist{v}{w} + \omega(v) + \omega(w)$ and the BCP on 
$P$ and $Q$ above will accurately find the $\argmin$ for these edges.

However, this BCP will not correctly report the minimum edge if it it is 
$(v, w) \in E_f \setminus E$ (directed from $\blues \to \reds$).
These are residual edges which ``undo'' existing flow support.
On the other hand, when $(v, w)$ is in the flow support, we must have 
$d_s(v) = d_s(w)$.
Dijkstra's algorithm maintains that $d_s(v) \leq d_s(w)$ for 
$v \in S, w \in (V \setminus S)$, so such a support edge is always a minimizer.
We prioritize existing support edges by relaxing any new support edges leaving 
$S$ before querying the BCP for $\reds \to \blues$ edges.
This can be done in total time proportional to the number of support edges, by 
storing the support edges adjoining each vertex of $\blues$ in a list.

Finally, we note that the search may become ``stuck'' and unable to reach 
certain vertices of $\reds$ if they are inactive and have no adjoining flow 
support.
We call these vertices \emph{dead}.
For these, we relax the high-cost dummy edges mentioned in the introduction.
Since the dummy edges have uniform cost, the minimum dummy edge to reach some 
$r \in \reds$ is exactly $(b, r)$ for 
$b = \argmax_{b \in S \cap \blues} \dual(b)$.
Since this is independent of the arc cost, the same $b$ is the minimizer for 
reaching every dead vertex.
Thus, we can compute $d_s$ for all dead vertices in a single step at the end
of Dijkstra's, by relaxing the dummy edges leaving the $b$ vertex above.

Each relaxation, we add $w$ to $S$ and update $d_s(w)$, $P$, $Q$, and 
$\omega(w)$.
As before, we stop once $S = V$.
In summary, our implementation of Dijkstra's algorithm is as follows.
\begin{enumerate}
	\item Initialize $S = \{s\}$ and $d_s(s) = 0$, and $d_s(v) = \infty$ 
		for all other vertices. 
		Initialize an empty queue reached support edges.

	\item Try to relax the next support edge $(v, w)$ in the queue.
		If $S$ has grown so that $w$ is no longer in $V \setminus S$,
		then discard the edge and try again.

	\item Otherwise, if there are no support edges, query the BCP to relax 
		the minimum $\reds \to \blues$ edge.
		Repeat from step $2$ until $S = V$ or 
		$(V \setminus S) \cap \blues = \emptyset$.

	\item If there are any dead vertices $r \in \reds$, relax every 
		$(b, r)$ dummy edge for 
		$b = \argmax_{b \in S \cap \blues} \dual(b)$.
\end{enumerate}

We now analyze the time spent in computing shortest paths.
Because we relax support edges before using the BCP data structure, we have the
following lemma.

\begin{lemma}
\label{lemma:orlin_acyclic}
	The flow support is acylic, in the undirected sense, throughout our 
	implementation of Orlin's algorithm.
	That is, there is no sequence of edges 
	$(v_1, v_2), (v_2, v_3) \cdots (v_k, v_{k+1} = v_1)$
	for which $\forall 1 \leq i \leq k$ either $f(v_i, v_{i+1}) > 0$ or 
	$f(v_{i+1}, v_i) > 0$.
\end{lemma}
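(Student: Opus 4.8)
The plan is to prove the lemma by induction on the sequence of operations that actually change the flow $f$, namely the augmentations and the edge contractions (changing $\Delta$ or rescaling leaves $f$ untouched, so it cannot affect the support). For the base case, $f=0$ and the support is empty, hence acyclic. For a contraction of supervertices $\hat{v},\hat{w}$ along an arc $(v,w)$, the only flow values that change are those of the sibling arcs in $\hat{v}\times\hat{w}$, which are set to $0$; so the undirected support graph on $V$ only \emph{loses} edges, and a subgraph of a forest is a forest. Thus the whole difficulty is concentrated in the augmentation step, which I treat next.

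For an augmentation, let $F$ be the undirected support graph just before the augmentation (a forest, by the inductive hypothesis), let $T_{sp}$ be the shortest-path tree built by our Dijkstra implementation from the chosen source $s$, and let $P$ be the augmenting path, taken to be the $s$--$t$ path inside $T_{sp}$. I would first record two elementary facts: (i) every support arc has reduced cost $0$ by the optimality conditions~(\ref{eqn:optimality}) together with Lemma~\ref{lemma:leastcost}, so both orientations of a support edge are zero-reduced-cost residual arcs, and relaxing such an edge out of a vertex already in $S$ gives its other endpoint the same $d_s$-value; (ii) augmenting $\Delta$ units along $P$ can only \emph{add} to the support those arcs of $E$ that lie on $P$ and currently carry zero flow — call this set $N$ — while every other effect either keeps an arc's flow positive or drops a support arc's flow to $0$, i.e.\ removes it. Hence the new support $F'$ satisfies $F'\subseteq F\cup N$, and it suffices to show $F\cup N$ is a forest.

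The heart of the argument is the interaction between $N$ and the implementation detail that all reached support edges are relaxed before each BCP query (and before the final dummy-edge relaxations). Consequently, at the instant any non-support edge is relaxed, the settled set $S$ is \emph{closed} under support edges, i.e.\ $S$ is a union of entire connected components of $F$. I would use this twice. First, each edge of $N$ joins two distinct components of $F$: if $(v,w)\in N$ is the tree edge by which $w$ entered $T_{sp}$, it was relaxed via a BCP query or a dummy edge (not as a support edge, since it is not yet in the support), so at that moment the component of $F$ containing $v$ lay entirely in $S$ while $w\notin S$, forcing $v$ and $w$ into different components. Second, every component $T$ of $F$ met by the search is pulled into $T_{sp}$ entirely through its own zero-reduced-cost support edges by fact~(i), so $T$ is a subtree of $T_{sp}$; since $P$ is a simple path in the tree $T_{sp}$, it meets $T$ in a contiguous subpath. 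Contracting each component of $F$ to a point therefore turns $P$ into a \emph{simple} path whose edge set is exactly $N$, so $N$ is a forest over the contracted components, and $F\cup N$ — trees joined by a forest of bridges — is itself a forest. This closes the induction.

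The final paragraph would dispose of the bookkeeping around supervertices and dummy arcs. Interiors of supervertices consist only of contracted arcs whose flow stays positive, and augmenting paths traverse supervertex interiors using only those contracted arcs; so the internal support of a supervertex never changes and stays a spanning tree of it, consistent with everything above. Dummy arcs $(b,r)$ are only relaxed — and at most one can lie on $P$ — when all of $\blues\subseteq S$ and $r$ is dead, hence has no incident support edge, so such an edge again joins two distinct components of $F$, and $f(r,b)$ and $f(b,r)$ are never simultaneously positive. I expect the main obstacle to be exactly this careful accounting: verifying that ``relax support edges first'' guarantees $S$ is support-closed at every BCP and dummy step, and that a single augmenting path, after contracting the support forest, genuinely becomes a simple path in $T_{sp}$ rather than a walk that revisits a component.
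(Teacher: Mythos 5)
Your proposal is correct, and it takes a genuinely different, more constructive route than the paper's proof. The paper argues by contradiction: it supposes a cycle $\Gamma$ exists in $E(F_i) \cup E(T_i)$, colors each edge red/blue/purple according to membership in the old support $F_i$ versus the shortest-path tree $T_i$, identifies the last pure-blue edge $(v,w)$ relaxed, and then walks around $\Gamma \setminus \{(v,w)\}$ to conclude that the priority given to support-edge relaxations forces $w \in S$ already --- a contradiction. You instead restrict attention to the strictly smaller set $N$ of genuinely new support edges (forward arcs on the augmenting path $P$ with zero prior flow, so $N$ is disjoint from $F$ and $N \subseteq E(T_{sp})$), and prove directly that $F \cup N$ is a forest: each edge of $N$ bridges two distinct components of $F$ because $S$ is support-closed whenever a non-support edge is relaxed; each $F$-component met by the search enters $T_{sp}$ as a connected subtree at constant $d_s$, so the simple path $P$ meets it in a single contiguous segment; hence contracting $F$-components turns $P$ into a simple path whose edges are exactly $N$, which makes $F \cup N$ acyclic. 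Both arguments rest on the same implementation invariant (relax support edges before BCP queries) and both certify acyclicity of a superset of $F_{i+1}$; yours works with the tighter superset $F \cup N$ rather than all of $F_i \cup T_i$, avoids the contradiction/coloring bookkeeping, and is more explicit about the structural picture (new support edges form a path over the contracted old components). You also explicitly dispatch the contraction step and the potential dummy-arc $2$-cycle, which the paper leaves implicit; those observations are correct and worth stating.
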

\begin{proof}
Let $f_i$ be the pseudoflow after the $i$-th augmentation, and 
$F_i \subseteq E$ be the flow support of $f_i$.
Let $T_i$ be the shortest path tree generated for augmenting $f_i$.
Namely, the $i$-th augmenting path is an excess-deficit path in $T_i$, and all
arcs of $T_i$ are admissible by the time the augmentation is performed.
Let $E(T_i)$ be the undirected edges corresponding to arcs of $T_i$.
Notice that, $E(F_{i+1}) \subseteq E(F_i) \cup E(T_i)$.
We prove that $E(F_i) \cup E(T_i)$ is acyclic by induction on $i$;
as $E(F_{i+1})$ is a subset of these edges, it must also be acyclic.
At the beginning with $f_0 = 0$, $E(F_0)$ is vacuously acyclic.

Let $E(F_i)$ be acyclic by induction hypothesis.
Since $T_i$ is a forest (thus, acyclic), any hypothetical cycle $\Gamma$ that
forms in $E(F_i) \cup E(T_i)$ must contain edges from both
$E(F_i)$ and $E(T_i)$.
To give a visual analogy, we will color $e \in \Gamma$
\emph{purple} if $e \in E(F_i) \cap E(T_i)$,
\emph{red} if $e \in E(F_i)$ but $e \not\in E(T_i)$,
and \emph{blue} if $e \in E(T_i)$ but $e \not\in E(F_i)$.
Then, $\Gamma$ is neither entirely red nor entirely blue.
We say that red and purple edges are \emph{red-tinted}, and similarly blue and
purple edges are \emph{blue-tinted}.
Roughly speaking, our implementation of the Hungarian search prioritizes
relaxing red-tinted admissible arcs over pure blue arcs. 

We can sort the blue-tinted edges of $\Gamma$ by the order they were relaxed
into $S$ during the Hungarian search forming $T_i$.
Let $(v, w) \in \Gamma$ be the last pure blue edge relaxed, of all the
blue-tinted edges in $\Gamma$ --- after $(v, w)$ is relaxed, the remaining
unrelaxed, blue-tinted edges of $\Gamma$ are purple.

Let us pause the Hungarian search the moment before $(v, w)$ is relaxed.
At this point, $v \in S$ and $w \not\in S$, and the Hungarian search must have
finished relaxing all frontier support arcs.
By our choice of $(v, w)$, $\Gamma \setminus (v, w)$ is a path of relaxed blue
edges and red-tinted edges which connect $v$ and $w$.
Walking around $\Gamma \setminus (v, w)$ from $v$ to $w$, we see that every
vertex of the cycle must be in $S$ already: $v \in S$, relaxed blue edges have
both endpoints in $S$, and any unrelaxed red-tinted edge must have both
endpoints in $S$, since the Hungarian search would have prioritized relaxing
the red-tinted edges to grow $S$ before relaxing $(v, w)$ (a blue edge).
It follows that $w \in S$ already, a contradiction.

No such cycle $\Gamma$ can exist, thus $E(F_i) \cup E(T_i)$ is acyclic
and $E(F_{i+1}) \subseteq E(F_i) \cup E(T_i)$ is acyclic.
By induction, $E(F_i)$ is acyclic for all $i$.
\end{proof}

Lemma~\ref{lemma:orlin_acyclic} implies there are at most~$O(n)$ edges in the 
flow support at any point.
Each support edge enters/exits the Dijsktra search queue at most once, so the 
time spent attempting to relax support edges is $O(n)$.
There are $O(n)$ additional iterations where we query the BCP data structure,
and we spend $O(n)$ time relaxing dummy edges at the end.
The total time to compute each shortest paths tree is $\softO(n)$, so the main 
scaling/augmentation procedure can be implemented in $\softO(n^2)$ time.

\subsection{Optimal flow from optimal duals}
\label{subsection:orlin_recovery}

Let $T^*$ be the shortest path tree under $c_{\dual^*}$ rooted at an arbitrary 
$s \in \reds$, computed using the modified Dijkstra's algorithm in the previous 
section.
The following lemma completes the proof of the 
Theorem~\ref{theorem:orlin_time}.
\begin{lemma}
	$T^*$ does not use any dummy edges, and there exists a flow in $T^*$ 
	which satisifies all supplies and demands.
\end{lemma}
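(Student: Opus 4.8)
Recall that Orlin's algorithm terminates with potentials $\dual^*$ that, together with its final pseudoflow, satisfy the optimality conditions: $c_{\dual^*}\ge 0$ on every residual arc, and $c_{\dual^*}=0$ on every arc carrying positive flow. The positive-flow arcs are exactly the contracted arcs, which by Lemma~\ref{lemma:orlin_acyclic} form an acyclic subgraph, so inside each supervertex they form a connected tree; and at termination no $\hat E$ arc carries flow, so in particular no dummy arc carries flow. I would first record that every supervertex is bichromatic: if a red vertex $r$ were its own supervertex, then all of its incident flow would lie on $\hat E$ arcs and hence be zero, so $e_f(\{r\})=\fsupply(r)=\tsupply(r)>0$, contradicting $e_f(\hat v)=0$ for all supervertices at termination. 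Thus every red vertex is the tail of a contracted arc, so its supervertex also contains a blue vertex; symmetrically every blue vertex shares a supervertex with some red vertex, and since every supervertex is nonempty it contains both colors.

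For the first assertion: the modified Dijkstra of Section~\ref{subsection:orlin_spt} only relaxes a dummy arc in order to reach a \emph{dead} red vertex --- one that is inactive and incident to no flow-support arc. At recovery time the flow support is exactly the set of contracted arcs, and bichromaticity (together with connectedness of each supervertex's contracted tree) makes every red vertex incident to a contracted arc; hence there are no dead red vertices, no dummy arc is ever relaxed, and $T^*$ uses no dummy edges. Moreover the supervertex graph under real arcs is complete --- every supervertex has a red vertex, joined by a real arc to every blue vertex and hence to every other supervertex --- so the search reaches all of $V$ and $T^*$ is a spanning tree.

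For the second assertion I would examine the shape of $T^*$. Since the modified search relaxes support (contracted) arcs with priority, on first entering a supervertex it traverses that supervertex's entire contracted tree before anything else; consequently $T^*$ restricted to each supervertex is exactly its contracted tree, and these trees are joined by one inter-supervertex real arc per non-root supervertex, forming a spanning tree of the supervertices. Let $g$ be the unique flow on the tree $T^*$ realizing $\fsupply$ (it exists and is unique because $T^*$ is acyclic and $\sum_v\fsupply(v)=0$). Deleting any inter-supervertex arc of $T^*$ splits it into two unions of whole supervertices, each of total supply $0$ (using $e_f(\hat v)=0$ and that no $\hat E$ arc carries flow), so $g$ carries $0$ on every inter-supervertex arc and is therefore supported on contracted arcs. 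On each supervertex $\hat v$, $g$ restricts to the unique flow on $\hat v$'s contracted tree realizing $\fsupply$ there; this agrees with the flow the algorithm implicitly maintains on those arcs, which is nonnegative and directed from $\reds$ to $\blues$ by the contraction invariant. Hence $g$ is a genuine flow on $G$, using no dummy arc, satisfying all supplies and demands --- the desired flow. Being supported on reduced-cost-zero arcs it also satisfies complementary slackness with $\dual^*$, so it is a minimum-cost flow, and reading off $\transp(r,b)=g(r,b)$ gives an optimal transportation map, completing Theorem~\ref{theorem:orlin_time}.

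The step I expect to require the most care is checking that the flow forced on each supervertex's contracted tree is nonnegative and oriented from $\reds$ to $\blues$: this does not follow from acyclicity and balance alone (a generic balanced bipartite tree need not admit such a flow), and it must be obtained from the structural invariant of the excess-scaling contraction scheme --- that each contracted arc carries a positive $\reds\to\blues$ flow that the scheme never cancels --- so that the algorithm's implicit flow on the contracted subgraph is itself feasible and, by uniqueness on the tree, equals $g$. Everything else (bichromaticity, absence of dead vertices, zero flow across inter-supervertex arcs, and the spanning-tree bookkeeping) is elementary.
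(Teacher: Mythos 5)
The central premise you invoke — that ``at termination no $\hat E$ arc carries flow'' — is false, and the flaw propagates through most of your argument. Orlin's algorithm terminates when every supervertex has zero imbalance; at that point the flow on the contracted graph (including the uncontracted $\hat E$ arcs) is an optimal flow of the contracted network, and it is typically nonzero on many $\hat E$ arcs. Contraction only eats arcs whose flow exceeds the current threshold $3n\Delta$, so arcs with smaller positive flow persist through termination. (Concretely: if no contraction ever happens, the whole algorithm is plain excess scaling on the original graph and the final $\hat E$ flow is the optimal flow.) With that premise gone, your bichromaticity claim fails: a singleton red supervertex $\{r\}$ with $\tsupply(r)>0$ is perfectly consistent with $e_f(\{r\})=0$, since $r$ simply sheds its supply on outgoing $\hat E$ arcs. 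The paper's own argument for the first assertion uses exactly this dichotomy: an uncontracted $r$ has a positive-flow $\hat E$ arc, and a contracted $r$ has a contracted arc, so no red vertex is dead. You need both branches, not just the second.

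The second assertion is where the approaches really diverge, and your construction breaks. You want the unique tree flow $g$ on $T^*$ to vanish on inter-supervertex arcs, which you derive from each supervertex having zero net supply $\sum_{v\in\hat v}\fsupply(v)=0$. But this is only equivalent to $e_f(\hat v)=0$ when the boundary $\hat E$ flow is zero, which again fails. Supervertices are not balanced blocks — a supervertex $\{r,b_1\}$ formed by a large $r\to b_1$ flow can easily have net supply $\tsupply(r)-\tsupply(b_1)\neq 0$, with the remainder routed across the boundary. So $g$ is not supported on contracted arcs, the ``implicit flow'' you want to compare it against does not agree with $g$, and the structural invariant you flag as the hard step (nonnegativity on contracted trees) is a red herring because the decomposition it was meant to certify does not exist. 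The paper avoids all of this: it never tries to construct $g$ combinatorially. It observes that the $n-1$ arcs of the spanning tree $T^*$ are a maximal linearly independent set of tight dual constraints at $\dual^*$, perturbs the costs of all non-$T^*$ arcs to be astronomically large (which leaves $\dual^*$ optimal because it is pinned by those tight constraints), and then invokes strong duality: an optimal primal flow exists with cost $\tfcost(\dual^*)$, and since any flow touching a perturbed arc would cost too much, the optimum is supported on $T^*$. That is a genuinely different mechanism — an LP perturbation argument rather than an explicit flow decomposition — and it sidesteps precisely the structural facts about supervertex balance and contracted-arc orientation that your approach would need but cannot get.

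Incidentally, your description of $T^*$'s shape (``contracted tree inside each supervertex, one real arc per non-root supervertex across'') is also not guaranteed: the modified Dijkstra prioritizes \emph{all} flow-support arcs, which includes reverse residuals of positive-flow $\hat E$ arcs, not just contracted arcs, and ties among reduced-cost-zero arcs are broken by queue order. So even the picture of $T^*$ you are computing with is stronger than what the algorithm promises. The first assertion can be salvaged easily via the paper's dichotomy; the second needs the duality argument (or an equally non-combinatorial replacement).
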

\begin{proof}
	Note that $T^*$ is constructed based on the final flow support in 
	addition to the optimal dual $\pi^*$.
	Recall that dummy edges cannot be flow support edges, and that dummy
	edges are essentially only relaxed for $\reds$ points without adjoining
	flow support.
	Assume that all vertices of $\reds$ have nonzero supply (indeed, 
	discarding zero-supply red vertices results in an equivalent instance).
	If $r \in \reds$ was not contracted, then it has an adjoining 
	uncontracted flow edge to offload its supply.
	Alternatively, if it is part of a contracted component, then there is a
	support edge which resulted in $r$'s contraction, although we may not
	know the flow value on the edge.

	For the second part, we will start from the empty flow and consider the 
	linear program description of min-cost flow.
	There are dual variables for each vertex and dual constraints for each 
	edge.
	First, we argue that $T^*$ forms a maximal set of tight, linearly 
	independent dual constraints.
	We observe that a set of edges $S \subset E$ correspond to a set of 
	linearly independent dual constraints exactly when $S$ is acyclic 
	(in the undirected sense).  Since $T^*$ spans all vertices, its edges 
	have linearly independent constraints and the set is maximal.

	Next, we show that this set of edges must contain a feasible primal 
	solution, by a perturbation argument.
	Suppose we raise the cost of all edges in $E \setminus T^*$ to $\infty$ 
	(or any quantity $\geq U \tfcost(\dual^*)$, for example).
	This does not change the optimal dual solution; $\dual^*$ is still 
	``stuck'' due to the maximal set of independent, tight constraints in 
	$T^*$.
	However, this forces all other dual constraints to be slack on 
	$\dual^*$.
	
	Strong duality implies that an equal-cost solution exists in the 
	primal, since $\tfcost(\dual^*)$ is finite.
	However, the perturbation has caused any solution which uses edges 
	outside of $T^*$ to be much more than $\tfcost(\dual^*)$,
	so the remaining optimal primal solution must be within the edges of 
	$T^*$.
\end{proof}

{\small
\bibliographystyle{abbrv}

}

\end{document}